\let\csname equation*\endcsname\relax
\let\csname endequation*\endcsname\relax
\newtheorem{thm}{Theorem}[section]
\newtheorem{lem}[thm]{Lemma}
\newtheorem{prop}[thm]{Proposition}
\newcommand{\lowO}{\mathbf{\Omega}_L}
\newcommand{\diagO}{\mathbf{\Omega}_D}
\newcommand{\upO}{\mathbf{\Omega}_U}
\newcommand{\Gauss}{\mathbf{G}}
\newcommand{\bfOmega}{\mathbf{\Omega}}
\newcommand{\Cond}{\mathcal{C}}
\newcommand{\bfx}{\mathbf{x}}
\newcommand{\bfK}{\mathbf{K}}
\newcommand{\bfI}{\mathbf{I}}
\newcommand{\bfz}{\mathbf{z}}
\newcommand{\bfA}{\mathbf{A}}
\newcommand{\bfB}{\mathbf{B}}
\newcommand{\bfm}{\mathbf{m}}
\newcommand{\bfy}{\mathbf{y}}
\newcommand{\bfv}{\mathbf{v}}
\newcommand{\bfmhat}{\mathbf{\hat{m}}}
\newcommand{\bfChat}{\mathbf{\widehat{C}}}
\newcommand{\bfH}{\mathbf{H}}
\newcommand{\bfb}{\mathbf{b}}
\newcommand{\bfL}{\mathbf{L}}
\newcommand{\bfh}{\mathbf{h}}
\newcommand{\bfR}{\mathbf{R}}
\newcommand{\E}{\mathbb{E}}
\newcommand{\reals}{\mathbb{R}}
\newcommand{\unit}{\mathds{1}}
\newcommand{\tloc}{\mathrm{loc}}
\newcommand{\tlocp}{\mathrm{p,loc}}
\newcommand{\bfP}{\mathbf{P}}
\newcommand{\bfC}{\mathbf{C}}
\begin{document}
\title{Localization for MCMC: sampling high-dimensional posterior distributions with 
local structure}
\author{M.~Morzfeld, X.T.~Tong, and Y.~Marzouk}
\maketitle

\begin{abstract}

We investigate how ideas from covariance localization in numerical
weather prediction can be used in Markov chain Monte Carlo (MCMC) sampling
of high-dimensional posterior distributions arising in Bayesian inverse problems.  
To localize an inverse problem is to enforce an anticipated ``local'' structure by
(\textit{i}) neglecting small off-diagonal elements of the prior precision and covariance matrices;
and (\textit{ii}) restricting the influence of observations to their neighborhood.
For linear problems we can specify the conditions 
under which posterior moments of the localized problem 
are close to those of the original problem.
We explain physical interpretations of our assumptions
about local structure
and discuss the notion of high dimensionality in local problems,
which is different from the usual notion of high dimensionality in function space MCMC.
The Gibbs sampler is a natural choice of MCMC algorithm for localized inverse problems
and we demonstrate that its convergence rate is independent of dimension
for localized linear problems.
Nonlinear problems can also be tackled efficiently by localization
and, as a simple illustration of these ideas, 
we present a localized Metropolis-within-Gibbs sampler.
Several linear and nonlinear numerical examples 
illustrate localization in the context of MCMC samplers for inverse problems.

\end{abstract}

%

\section{Introduction}
\label{sec:Intro}
We consider inverse problems in the Bayesian setting.
Let $\bfx$ be an $n$-dimensional real-valued random vector.
The observations are defined by 
\begin{equation*}
	\bfy = \bfh(\bfx) +\bfv,
\end{equation*}
where $\bfy$ is a $k$-dimensional vector, 
$\bfh$ is a given function, and $\bfv$
is a random variable with known distribution.
The observations $\bfy$, along with the distribution of $\bfv$, define a likelihood $p_l(\bfy\vert\bfx)$.
A prior probability density $p_0$ describes
prior knowledge about $\bfx$.
For example, one may know that the variables
are likely to be within a certain interval.
The prior distribution often also describes the smoothness of 
a random field whose discretization is the vector~$\bfx$.
The prior and likelihood together define the posterior density:
\begin{equation*}
	p(\bfx\vert \bfy) \propto p_0(\bfx) p_l(\bfy\vert \bfx).
\end{equation*}
Throughout this paper, 
we assume Gaussian errors, $\bfv \sim\mathcal{N}(\bf0,\bfR)$
and Gaussian priors $p_0(\bfx) = \mathcal{N}(\bfm,\bfC)$,
as is common;
see, e.g., \cite{Stuart10}.
In addition, we assume that
\begin{enumerate}[(i)] 
\vspace{-2mm}
\item
the state dimension $n$ is large
and the number of observations $k$ is also large, i.e., $k=O(n)$;
\vspace{-2mm}
\item
the prior covariance and precision matrices are nearly banded;
\vspace{-2mm}
\item
each predicted observation $[\bfh(\bfx)]_j$ 
has significant dependence on only $\ell \ll n$ 
components of $\bfx$ (i.e., the contribution from the remaining components of $\bfx$ is small),
and $\bfR$ is diagonal.   
\end{enumerate}
In simple terms, nearly banded means that the elements away from the diagonal are small,
but we make the meaning of nearly banded and significant dependence more precise below.
We call problems that satisfy the above assumptions ``local.''

In numerical weather prediction (NWP) and ensemble Kalman filtering (EnKF),
local problems arise frequently.
During a typical EnKF step, 
the covariance of $10^8$ variables needs to be estimated accurately,
but the number of samples used is usually 100 or less. 
This seemingly impossible task is made possible by
``localization'' \cite{HoutMitch2001,Hamilletal2001,Houtekamer2005}.
During localization, a nearly banded forecast covariance matrix
is transformed into an (exactly) banded matrix by setting small off-diagonal elements to zero,
e.g., by multiplying each entry of the covariance with a suitable ``localization function'' \cite{GC99}.
In addition, the influence of each observation is restricted to its neighborhood.
Practitioners agree that localization
is a key requirement for  making EnKF applicable to large-scale NWP problems.
Note that localization trades numerical efficiency against errors which can be controlled:
problems with banded forecast covariance structure and local observations are more easily solved, 
and the errors introduced by localization are controllable,
e.g., they vanish when localization thresholds are sufficiently small.
From a more theoretical perspective, it was shown in \cite{BL08} that
one can estimate a 
covariance matrix of bandwidth $l$
with $O(l+\log n)$ samples.
Reference \cite{Tong17} explains how this result is used in the context of EnKF.

This paper examines localization in the context of Bayesian inverse problems
and Markov chain Monte Carlo (MCMC) samplers for the associated posterior distributions.
It is known that sampling \textit{generic} high dimensional (posterior) distributions 
is challenging; see, e.g., \cite{Roberts97, Agapiou16}.
We suggest, however, 
that one can design relatively simple MCMC algorithms to sample
high dimensional posterior distributions of \textit{localized} inverse problems efficiently.
Specifically, we discuss the following three questions:
\begin{enumerate}[(a)]
\vspace{-2mm}
\item
Is the localized problem near the local problem (Section~\ref{sec:Localization})?
\vspace{-2mm}
\item
Can one solve a localized problem efficiently by MCMC (Section~\ref{sec:MCMC})?
\vspace{-2mm}
\item
Are local inverse problems of practical importance (Section~\ref{sec:Discussion})?
\end{enumerate}
\vspace{-1mm}
In Section~\ref{sec:Discussion} we also explain that the notion of high dimensionality of a local inverse problem 
is different from what is usually considered in the MCMC literature \cite{SpantiniEtAl15, CuiEtAl14,Cotter13,BuiEtAl13,FlathEtAl11,PetraEtAl14,CuiEtAl16b}. 
Numerical illustrations are provided in Section~\ref{sec:Examples}.
We summarize our conclusions in Section~\ref{sec:Summary}.

\section{Localization of inverse problems}
\label{sec:Localization}
To localize an  inverse problem means 
to enforce that prior interactions (correlations and\slash or conditional dependencies)
and the effects of observations are confined to a neighborhood.
For local problems,
in the sense of Assumptions (\textit{i})--(\textit{iii}) in Section~\ref{sec:Intro},
one may thus expect that errors introduced by localization are small.
We prove this intuitive result for linear and Gaussian problems,
and then discuss how localization can be used in nonlinear problems.
Note that localization as described here can also be interpreted 
in the context of a more general robust Bayesian analysis \cite{IR12}, which examines how perturbations to the prior and likelihood affect the posterior distribution. 

\subsection{Localization of prior covariance and precision matrices}
\label{sec:LocalizationPriorDetail}
Let $[\bfC]_{i,j}$ be the $i,j$ entry of a covariance matrix $\bfC$.
Suppose that $\vert[\bfC]_{i,j}\vert\ll \vert[\bfC]_{i,i}\vert$ for $\vert i-j\vert\geq l$.
During localization, these ``small'' off-diagonal elements are set to zero.
The resulting localized covariance matrix $\bfC_\text{loc}$ has entries
\begin{equation}
\label{eqn:truncation}
[\bfC_\tloc]_{i,j}=[\bfC]_{i,j} \unit_{|i-j|\leq l},
\end{equation}
where $\unit_{|i-j|\leq l}$ is an indicator function. 
We define the bandwidth $l$ of a $m\times m$ matrix $\bfA$ by
\[
l=\min\{r: [\bfA]_{i,j}=0 \text{ if } |i-j|>r\},
\]
With these definitions and notation,
it becomes clear that localization turns the prior covariance matrix $\bfC$
with small off-diagonal elements 
into a banded matrix $\bfC_\text{loc}$,
whose bandwidth is less or equal to the threshold $l$ used during localization.
Moreover, the localized prior covariance matrix $\bfC_\text{loc}$ is positive definite if 
the minimum eigenvalue of $\bfC$ is above $\delta_C$,
where 
\begin{equation}
\label{eq:deltaC}
	\delta_C:=\max_i \sum_{j: |i-j|>l} |[\bfC]_{i,j}|.
\end{equation}
We show in Appendix~A (Proposition \ref{prop:loc}) that
\begin{equation}
\label{eq:CPertCov}
\|\bfC-\bfC_\tloc\|\leq \delta_C,
\end{equation}
i.e., the localized prior covariance matrix is a small perturbation
of the prior covariance matrix.
Note that, throughout this paper, we use $\|\,\cdot\,\|$ to denote the $l_2$ norm for a vector
$\bfx$ with elements $x_1,\dots,x_n$,
$\|\,\bfx\,\|=\sqrt{\sum_{j=1}^n x_j^2}$,
as well as the $l_2$ operator norm for a matrix $\bfA$, 
$\|\,\bfA\,\|=\sup_{v\in \reals^n, \|v\|=1}\|\bfA\bfv\|$.

Working with prior precision matrices,
rather than prior covariance matrices,
is sometimes more natural.
In this case, one can write the prior distribution as 
$p_0(\bfx) = \mathcal{N}(\bfm,\bfOmega^{-1})$,
where $\bfOmega$ is the prior precision matrix.
Precision matrices can be localized
in the same way as covariance matrices.
We assume, as above, that 
$\vert[\bfOmega]_{i,j}\vert\ll \vert[\bfOmega]_{i,i}\vert$ for $\vert i-j\vert\geq l$
and localize the prior precision matrix by setting
small off-diagonal elements equal to zero.
The result is a localized prior precision matrix with entries
\[
	[\bfOmega_\tloc]_{i,j}=[\bfOmega]_{i,j} \unit_{|i-j|\leq l}.
\]
Under our assumptions, 
the localized precision matrix is a small perturbation of the precision matrix:
\begin{equation}
\label{eq:deltaOmega}
\|\bfOmega-\bfOmega_\tloc\|\leq \delta_\Omega, \quad
\delta_\Omega:=\max_i \sum_{j: |i-j|>l} |[\bfOmega]_{i,j}|.
\end{equation}
Localization of the covariance or precision matrices results in localized prior distributions
 $p_{0,\tloc}(\bfx) = \mathcal{N}(\bfm,\bfC_\tloc)$ or $p_{0,\tloc} =\mathcal{N}(\bfm,\bfOmega^{-1}_\tloc)$.

\subsection{Localization of the observation matrix}
\label{sec:LocObs}
We first consider the case $\bfh(\bfx) = \bfH\bfx$,
where $\bfH$ is a given $k\times n$ matrix.
Assumption~(ii) in Section~\ref{sec:Intro}
implies that each observation $[\bfH\bfx]_j$ may depend
on all components of $\bfx$,
but the contributions of many components are negligible.
In this case, 
the observation matrix $\bfH$ can be
localized similarly to how we localized the prior covariance.

For a given observation matrix $\bfH$ and a given threshold~$l_H$, define 
a localized observation matrix~by
\[
[\bfH_\tloc]_{j,i}=[\bfH]_{j,i} \unit_{|o_j-i|\leq l_H}.
\]
where  $o_j$  represents the ``center'' 
of the $j$-th observation. 
Following \eqref{eq:deltaC}, we can quantify the difference between $\bfH$ and $\bfH_\tloc$ by 
\begin{equation}
\label{eq:deltaH}
	\delta_H=\max_i\left\{\sum_{j: |i-o_j|>l_H}|[\bfH]_{j,i}|, \sum_{i: |i-o_j|>l_H}|[\bfH]_{j,i}|\right\}. 
\end{equation}	
As before, if $\delta_H$ is small, then errors due to the localization are expected to be small.

\subsection{Localized posterior distributions of linear--Gaussian inverse problems}
We continue to assume that $\bfh(\bfx) = \bfH\bfx$,
so that the true (original)  posterior distribution is a Gaussian with mean and covariance given by
\begin{align*}
	\bfmhat &= \bfm+\bfK \left( \bfy-\bfH \bfx \right),\\
	\bfChat &= \left(\bfI -\bfK\bfH\right) \bfC,\\
	\bfK &= \bfC\bfH^T(\bfR+\bfH\bfC\bfH^T)^{-1},
\end{align*}
where $\bfK$ is the Kalman gain.

Localization of the observation matrix leads to the localized likelihood
$p_{\text{loc},l}(\bfy\vert\bfx)=\mathcal{N}(\bfH_\tloc \bfx,\bfR)$.
If we also localize the prior covariance matrix,
we obtain the localized prior $p_{0,\tloc}(\bfx) = \mathcal{N}(\bfm,\bfC_\tloc)$.
The localized prior and likelihood then define the \textit{localized posterior}
$p_\text{loc}(\bfx\vert\bfy)\propto p_{\text{loc},l}(\bfy\vert\bfx)p_{\text{loc},0}(\bfx)$,
whose mean and covariance are given by 
\begin{align*}
	\bfmhat_\tloc &= \bfm+\bfK_\tloc \left( \bfy-\bfH_\tloc \bfx \right),\\
	\bfChat_\tloc &= \left(\bfI -\bfK_\tloc\bfH_\tloc\right)\bfC_\tloc, \\
	\bfK_\tloc &= \bfC_\tloc\bfH_\tloc^T(\bfR+\bfH_\tloc\bfC_\tloc\bfH_\tloc^T)^{-1}.\nonumber
\end{align*}
We prove 
in Appendix~A (Proposition \ref{prop:loc})
that the means and covariance matrices of the localized and original posterior distributions satisfy 
\begin{align*}
\|\bfmhat-\hat{\bfm}_\tloc\|\leq & 
(\delta_C+\delta_H)\cdot D_1(\|\bfR^{-1}\|, \|\bfChat\|, \|\bfC^{-1}\|)\cdot (\|\bfm\|+\|\bfy\|), \\
\|\bfChat-\bfChat_\tloc\|\leq& (\delta_C+\delta_H) \cdot D_2(\|\bfR^{-1}\|, \|\bfChat\|, \|\bfC^{-1}\|),
\end{align*}
where the functions $D_1$ and $D_2$ are defined in Proposition \ref{prop:loc}. 
Similarly, if we work with the prior precision matrix,
we obtain, after localization, the prior $p_{0,\tloc}(\bfx) = \mathcal{N}(\bfm,\bfOmega^{-1}_\tloc)$,
which leads to a localized posterior whose mean and covariance are given by 
\begin{align}
\label{eq:LocPostMean}
	\bfmhat_{\tlocp} &= \bfm+\bfK_{\tlocp} \left( \bfy-\bfH_\tloc \bfx \right),\\
\label{eq:LocPostCov}	
	\bfChat_{\tlocp} &= (\bfI-\bfK_{\tlocp}\bfH_{\tloc})\bfOmega_\tloc^{-1},\\
	\bfK_{\tlocp} &= \bfOmega_\tloc^{-1}\bfH_\tloc^T
	(\bfR+\bfH_\tloc\bfOmega_\tloc^{-1}\bfH_\tloc^T)^{-1}.
\end{align} 
We find in Proposition \ref{prop:locprecision} that:
\begin{align*}
\|\bfmhat-\hat{\bfm}_{\tlocp}\|\leq & 
(\delta_\Omega+\delta_H)\cdot D_3(\|\bfR^{-1}\|, \|\bfChat\|, \|\bfOmega\|)\cdot (\|\bfm\|+\|\bfy\|), \\
\|\bfChat-\bfChat_{\tlocp}\|\leq& (\delta_\Omega+\delta_H) \cdot D_4(\|\bfR^{-1}\|, \|\bfChat\|, \|\bfOmega\|),
\end{align*}
where $D_3$ and $D_4$ are defined in Proposition \ref{prop:locprecision}. 

In summary, the difference between the localized and
unlocalized posterior distributions depends on the localization thresholds we chose,
on the structure of the prior covariance or precision matrix,
as well as on the observation matrix $\bfH$ and its localization.
The bandwidth of localizations of $\bfC$ or $\bfOmega$ and $\bfH$ should therefore be tuned
to obtain small differences between the localized and unlocalized problems.

We use a threshold for localization
because it makes our proofs simpler.
In practice, one may localize more effectively using suitable localization functions \cite{GC99},
which set small off-diagonal elements to zero smoothly
and which can preserve positive-definiteness during localization.
Moreover, we only consider (nearly) banded covariance matrices,
which arise, for instance, in linear--Gaussian Bayesian inverse problems
on one-dimensional spatial domains.
The conceptual ideas of localization, however, can be used 
in problems with 2D spatial domains (see Section~\ref{sec:Examples}
for a numerical example).
In fact, we anticipate that many of our ideas can be adapted to matrices 
with more general sparsity patterns 
but defer this investigation to future work. 

\subsection{Localization of nonlinear problems}
\label{sec:LocNonlin}
In a nonlinear inverse problem,
the function $\bfh$ is not linear, and thus the posterior distribution is non-Gaussian 
even when the prior distribution is Gaussian.
However, the function $\bfh$ is usually well understood
because it is the result of a careful modeling effort.
Thus, it is not unreasonable to assume that it is known
whether $\bfh$ is local
in the sense of Assumption (\textit{iii}) in Section~\ref{sec:Intro}.
If $\bfh$ is local, 
then we can localize it by neglecting the some of the components of $\bfx$ 
when computing the components, $[\bfh]_i$, of $\bfh$
(see Section~\ref{sec:lMwGObsLoc} for more detail on how to do this).
The localized $\bfh$, along with the localized Gaussian prior,
defines the localized posterior distribution of a nonlinear inverse problem. 
The localized posterior distribution may be close to the posterior distribution of the unlocalized problem,
but we do not prove this statement in the general, nonlinear setting.
The lack of theoretical results, however, does not prevent us
from \textit{using} localization in nonlinear problems,
and we present such an example in Section~\ref{sec:Examples}.
Moreover, more than a decade of experience with using localization in EnKF and NWP 
can also be viewed as numerical and empirical evidence
that localization is indeed applicable in nonlinear problems
(see also \cite{MHS17}).

\section{MCMC for local inverse problems}
\label{sec:MCMC}
MCMC is often used for the numerical solution of Bayesian inverse problems.
To illustrate the behavior of some MCMC algorithms on localized problems,
we first consider the extreme case of a linear problem with \textit{diagonal}
covariance and precision matrix and with a diagonal 
observation function $\bfh(\bfx) = \bfx$.
Specifically, suppose the target distribution is 
the $n$-dimensional Gaussian distribution $p(x) = \mathcal{N}(0,\bfI)$, 
where $\bfI$ is the identity matrix of  dimension~$n$.
Suppose that the current state of the Markov chain is $\bfx^k$.
The Metropolis-Hastings (MH) algorithm proposes a move to $\bfx'$
by drawing from a proposal distribution $q(\bfx'\vert \bfx^{k})$,
and accepts or rejects the move with probability
\begin{equation*}
	a^{k+1} =\min\left\{1, \frac{p(\bfx')q(\bfx^{k}\vert \bfx')}{p(\bfx^{k})q(\bfx'\vert \bfx^{k})}\right\};
\end{equation*} 
see, e.g., \cite{Owen,MacKay97,Kalos86}.
Averages over the samples generated in this way converge to expected values 
with respect to the target distribution~$p$ as $k\to\infty$.
A question of practical importance is:
how many samples are needed to 
accurately estimate expectations with respect to the target distribution?
The answer depends on the proposal distribution.
For a Gaussian proposal distribution
$q(\bfx^{k+1}\vert \bfx^k) = \mathcal{N}(\bfx^k,\sigma^2\bfI)$, which
produces a so-called random walk Metropolis (RWM) chain,
one must choose a proposal variance $\sigma^2$ such that the acceptance probability
is reasonably large while, at the same time,
the accepted MCMC moves are large enough to explore the space appropriately.
An optimal choice that achieves
this trade-off for RWM is $\sigma^2=O(n^{-1})$, 
where~$n$ is the dimension of the problem;
see, e.g., \cite{Beskos09,Roberts97,Roberts98}.
Optimal scalings of the proposal variance with problem
dimension are also known for other MCMC algorithms.
For example, the proposal distribution of the Metropolis-adjusted Langevin algorithm (MALA)
is defined by
\begin{equation*}
\bfx' = \bfx^k+\frac{\sigma^2}{2}\nabla \log p(\bfx^k)+\sigma\, \xi
\end{equation*}
where $\nabla$ denotes a gradient and $\xi$ is a vector
of $n$ standard normal variates.
An optimal choice is $\sigma=O(n^{-1/3})$.
For Hamiltonian Monte Carlo (see, e.g., \cite{Neal11,Duane87})
an optimal step size is $\sigma=O(n^{-1/4})$~\cite{Beskos13}
(here and below we refer to $\sigma$ as a step size
and to $\sigma^2$ as the proposal variance).

Setting aside issues of transient behavior \cite{christensen2005scaling},
the efficiency of an MCMC algorithm can be assessed by
computing the integrated auto-correlation time (IACT).
Throughout this paper we use the definitions and 
numerical approximations of IACT discussed in \cite{Wolff04}.
Heuristically,  the number of effective samples
is the number of samples divided by IACT;
see, e.g., \cite{Kalos86,MacKay97}.
In Figure~\ref{fig:Iso} we compute IACT for various MCMC algorithms
applied to the $n$-dimensional isotropic Gaussian,
as a function of $n$.
\begin{figure}[t]
\centering
\includegraphics[width=.6\textwidth]{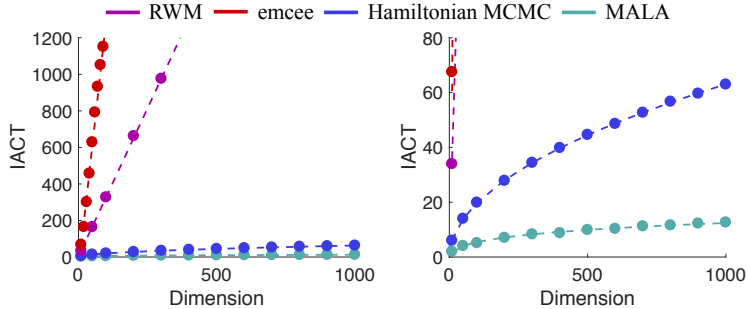}
\caption{
IACT as a function of dimension for various MCMC samplers
applied to an isotropic Gaussian.
Dots represent IACT, averaged over all $n$ variables,
for emcee (red), RWM (purple),
Hamiltonian MCMC (blue),
and MALA (teal).
The dashed lines represent linear fits (emcee and RWM),
fits to a square root (Hamiltonian MCMC),
and a fit to a $1/3$-degree polynomial (MALA).
}
\label{fig:Iso}
\end{figure}
We observe that IACT grows with dimension for 
all algorithms we consider, but at different rates.
For both RWM and an affine invariant sampler \cite{GoodmanWeare10,Hammer} 
called emcee (or the MCMC Hammer), 
IACT grows linearly with the dimension $n$;
for Hamiltonian MCMC, we observe that IACT grows with the square root of $n$,
while for MALA, IACT grows as~$n^{1/4}$.
Similar tests were done for the ``$t$-walk'' \cite{tWalk},
a general purpose ensemble sampler.
The numerical results of \cite{tWalk}
suggest a linear scaling of $t$-walk's IACT with dimension.

However, sampling an isotropic Gaussian is trivial,
and MCMC should be independent of dimension
for this problem because it can be decomposed into
$n$ independent sub-problems
(see also \cite{Rebeschini15,MHS17}).
An MCMC sampler that naturally makes use 
of local problem structure is the Gibbs sampler
(sometimes Gibbs samplers are also called ``heat bath'' or ``partial resampling,''
and these can be viewed as examples of ``single-component Metropolis algorithms,'' 
see, e.g., \cite{MCMCInPractice}).
The basic Gibbs sampler is as follows.
Let the target density be $p({\bfx})$, where $\bfx$ is shorthand
for the vector with $n$ elements $x_1,x_2,\dots,x_n$.
Set $j=1$, and let $\bfx^k$ be the current state of a Markov chain.
Then a Gibbs sampler proceeds as follows:
\begin{enumerate}[(i)]
\vspace{-2mm}
\item 
Set $k\to k+1$.
\vspace{-2mm}
\item
Sample $x_{j}^{k+1}$ from the conditional distribution 
$p(x_j\vert x_1^{k+1},\dots x_{j-1}^{k+1},x_{j+1}^{k},\dots, x_{n}^{k})$.
\vspace{-2mm}
\item
Repeat (ii) for all $n$ elements $x_j$ of $\bfx$.
\end{enumerate}
\vspace{-1mm}
Repeating this process $N_e$ times, one obtains samples such that
averages over these samples converge to expected values 
with respect to the target distribution $p$ as $N_e\to\infty$.

The Gibbs sampler generates independent samples, independently of dimension,
for the isotropic Gaussian
(an extreme example of a local problem).
Similarly, a block Gibbs sampler 
generates independent samples, independently of dimension,
if the covariance or precision matrices are block-diagonal.
This suggests that MCMC based on Gibbs samplers may be more effective
\textit{for local problems}
than the MCMC algorithms we considered above.
In this section, we investigate this idea in more detail and study
convergence rates of Gibbs samplers for Gaussian distributions
with banded covariance and precision matrices.

Many of the results we present below may be known,
but we decided to summarize what is important about Gibbs sampling for 
our purposes because 
(\textit{i}) the Gibbs sampler and its effectiveness in local
problems is essential to the understanding of how MCMC can function in high-dimensional problems
with local structure;
(\textit{ii}) we could not find references on the connection 
between MCMC convergence rates and local problem structure,
perhaps because relevant results are spread over several papers and books
in different disciplines (applied mathematics, physics, and statistics) and over several decades.

\subsection{Dimension independent convergence rates of Gibbs samplers}
\label{sec:Gibbs}
For simplicity, we assume that there are $m$ blocks of the same size $q$ so that $n=mq$.
We divide the $n$ elements of $\bfx$ according to the $m$ blocks
and write $\bfx=(\bfx_1,\cdots, \bfx_m)$,
with the understanding that each $\bfx_j$ consists of $q$ consecutive elements in $\bfx$. 
A blocked Gibbs sampler uses the conditionals defined for each $\bfx_j$,
i.e., at the $k$th step, 
we sample the block $\bfx_j$ using the $q$-dimensional conditional 
$p(\bfx_j\vert \bfx_1^{k+1},\dots \bfx_{j-1}^{k+1},\bfx_{j+1}^{k},\dots, \bfx_{m}^{k})$.
We call the resulting algorithm a ``Gibbs sampler with block-size $q$''
because the ``standard'' Gibbs sampler above is a Gibbs sampler of block-size one in this terminology. 

We first consider a Gibbs sampler with block-size $q$
for Gaussian distributions with $q$-block-tridiagonal covariance matrices.
We say that a matrix $\mathbf{A}$ is $q$-block-tridiagonal, if 
\[
\mathbf{A}_{i,j}=\mathbf{0}\quad \text{for}\quad (i,j)\notin \{(i,i),(i,i+1), (i,i-1), i=1,\cdots, m\},
\]
where $\bfA_{i,j}$ is the $(i,j)$-th $q\times q$ block of~$\bfA$.
It is straightforward to see that a matrix with bandwidth $l$ is $l$-block-tridiagonal, and a $q$-block-tridiagonal matrix has bandwidth less than $2q$. 
Figure \ref{fig:blocking} illustrates a $4$-block-tridiagonal matrix with bandwidth $l=4$.
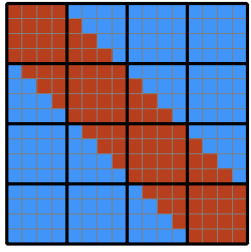
\begin{figure}[tb]
\begin{center}
\begin{tikzpicture}[scale=0.4]
\definecolor{locblue}{RGB} {65,149,249}
\definecolor{locred}{RGB} {184,63,29}
\fill[locblue] (-2,-2) rectangle (6,6);

\fill[locred] (-2,6) rectangle (0,4);
\fill[locred] (-1,5) rectangle (1,3);
\fill[locred] (-1.5,5.5) rectangle (.5,3.5);
\fill[locred] (0,4) rectangle (2,2);
\fill[locred] (-0.5,4.5) rectangle (1.5,2.5);
\fill[locred] (1,3) rectangle (3,1);
\fill[locred] (0.5,3.5) rectangle (2.5,1.5);
\fill[locred] (2,2) rectangle (4,0);
\fill[locred] (1.5,2.5) rectangle (3.5,0.5);
\fill[locred] (3,1) rectangle (5,-1);
\fill[locred] (2.5,1.5) rectangle (4.5,-0.5);
\fill[locred] (4,0) rectangle (6,-2);
\fill[locred] (3.5,0.5) rectangle (5.5,-1.5);

\draw[step=0.5cm,gray,very thin] (-2,-2) grid (6,6);
\draw[step=2cm,black,very thick] (-2.01,-2.01) grid (6,6);

\end{tikzpicture}
\end{center}
\caption{Sparsity pattern of a $4$-block-tridiagonal matrix with bandwidth $l=4$.
The red color indicates nonzero entries and the blue color indicates zero entries.
The black squares define the $q=4$ blocks $\bfC_{i,j}$.}
\label{fig:blocking}
\end{figure}
The convergence rate of the Gibbs sampler is dimension-independent
if the covariance matrix is $q$-block-tridiagonal,
as detailed in the following theorem.
\begin{thm}
\label{thm:gsnsimple}
Suppose the Gibbs sampler with block-size $q$ is applied to a Gaussian target distribution 
$p=\mathcal{N}(\bfm, \bfC)$ with $m$ blocks of size $q$.
Suppose $\bfC$ is $q$-block-tridiagonal.
Then the distribution of $\bfx^k$ converges to $p$ geometrically fast in all coordinates,
and we can couple $\bfx^k$ and a sample $\bfz\sim \mathcal{N}(\bfm,\bfC)$ such that 
\begin{equation}
\label{eqn:convergence}
\E\|\bfC^{-1/2}(\bfx^k-\bfz)\|^2\leq \beta^k n(1+\|\bfC^{-1/2} (\bfx^0-\bfm)\|^2),
\end{equation}
where
\begin{equation}
\label{eqn:betacov}
\beta\leq \frac{2(1-\Cond^{-1})^2\Cond^4}{1+2(1-\Cond^{-1})^2\Cond^4},
\end{equation}
with $\Cond$ being the condition number of $\bfC$. 
\end{thm}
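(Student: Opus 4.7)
The plan is to realize a full Gibbs sweep as a stochastic affine iteration with block Gauss--Seidel structure, couple two copies of the chain via shared noise, and reduce geometric convergence to an operator-norm bound on the iteration matrix in a weighted norm.

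Setting $\bfOmega := \bfC^{-1}$ and splitting $\bfOmega = \bfD + \bfL + \bfL^T$ into its block-diagonal, strict block-lower and strict block-upper triangular parts (with blocks of size $q$), each Gaussian conditional $p(\bfx_j\mid \bfx_{-j})$ is itself Gaussian with precision $\bfD_{jj}$ and mean affine in the other blocks. Composing the $m$ block updates in order yields the recursion
\begin{equation*}
\bfx^{k+1} = \bfM(\bfx^k-\bfm) + \bfm + \bm{\xi}^k, \qquad \bfM := -(\bfD+\bfL)^{-1}\bfL^T = \bfI - (\bfD+\bfL)^{-1}\bfOmega,
\end{equation*}
where $\bm{\xi}^k$ is a zero-mean Gaussian whose covariance is precisely the one that makes $\mathcal{N}(\bfm,\bfC)$ invariant under the sweep.

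I would then couple $\bfx^k$ to a stationary chain $\bfz^k$ started at $\bfz^0\sim\mathcal{N}(\bfm,\bfC)$ and driven by the \emph{same} noises $\bm{\xi}^k$, so that $\bfz := \bfz^k \sim \mathcal{N}(\bfm,\bfC)$ for every $k$ and the difference satisfies $\bfx^k-\bfz^k = \bfM^k(\bfx^0-\bfz^0)$ deterministically. Measuring distance in the weighted norm $\|v\|_{\bfOmega}^2 := v^T\bfOmega v = \|\bfC^{-1/2}v\|^2$ and using $\E\|\bfC^{-1/2}(\bfz^0-\bfm)\|^2 = n$ with the triangle inequality gives
\begin{equation*}
\E\|\bfC^{-1/2}(\bfx^k-\bfz)\|^2 \le \|\bfM\|_{\bfOmega}^{2k}\bigl(2\|\bfC^{-1/2}(\bfx^0-\bfm)\|^2 + 2n\bigr).
\end{equation*}
Up to harmless constants absorbed into $n(1+\|\bfC^{-1/2}(\bfx^0-\bfm)\|^2)$, claim~\eqref{eqn:convergence} reduces to proving $\|\bfM\|_{\bfOmega}^2\le\beta$.

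The main obstacle is this norm bound. Setting $\bfN := \bfD+\bfL$ and using $\bfN+\bfN^T-\bfOmega = \bfD$, a direct expansion of $\bfM^T\bfOmega\bfM$ yields the identity
\begin{equation*}
\bfOmega - \bfM^T\bfOmega\bfM = \bfOmega\,\bfN^{-T}\bfD\bfN^{-1}\,\bfOmega,
\end{equation*}
and then, after noting $\bfN\bfD^{-1}\bfN^T = \bfOmega + \bfL\bfD^{-1}\bfL^T$, taking Rayleigh quotients produces the clean representation
\begin{equation*}
\|\bfM\|_{\bfOmega}^2 \;=\; \frac{\gamma}{1+\gamma},\qquad \gamma \;:=\; \|\bfD^{-1/2}\bfL^T\bfOmega^{-1/2}\|^2.
\end{equation*}
Since $t\mapsto t/(1+t)$ is increasing, it suffices to prove $\gamma\le 2(1-\Cond^{-1})^2\Cond^4$. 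I would obtain this through a submultiplicative estimate $\gamma\le \|\bfL\|^2\,\|\bfD^{-1}\|\,\|\bfOmega^{-1}\|$, using Cauchy interlacing applied blockwise to the principal submatrices of $\bfOmega$ to get $\|\bfD^{-1}\|\le\|\bfOmega^{-1}\|$ and $\|\bfD\|\le\|\bfOmega\|$, and finally bounding $\|\bfL\|$ via $\bfL+\bfL^T=\bfOmega-\bfD$ together with a symmetric/skew-symmetric decomposition chosen to produce the $(1-\Cond^{-1})^2$ factor. The delicate part is that last step — controlling $\|\bfL\|$ in terms of $\|\bfOmega\|$ and $\Cond$ without picking up a dimension factor — and this is where I expect the real technical bookkeeping to live. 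Assembling everything and expressing the final answer through $\Cond = \|\bfC\|\|\bfC^{-1}\| = \|\bfOmega\|\|\bfOmega^{-1}\|$ gives $\|\bfM\|_{\bfOmega}^2\le\beta$, completing the proof.
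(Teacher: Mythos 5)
Your reduction is sound up to the final step, and the identity $\|\bfC^{-1/2}\Gauss\bfC^{1/2}\|^2=\gamma/(1+\gamma)$ with $\gamma=\|\diagO^{-1/2}\upO\bfC^{1/2}\|^2=\|\diagO^{-1/2}\,\upO\bfC\lowO\,\diagO^{-1/2}\|$ is a clean exact counterpart of what the paper establishes as the one-sided comparison $\Gauss\bfC\Gauss^T\preceq\frac{\gamma}{1+\gamma}\bfC$ via the stationarity relation \eqref{eqn:invariant} (the quantity $\gamma$ is literally the same: the smallest constant with $\upO\bfC\lowO\preceq\gamma\,\diagO$). A minor point first: your triangle-inequality coupling gives the constant $2n+2\|\bfC^{-1/2}(\bfx^0-\bfm)\|^2$, which does not dominate into $n(1+\|\bfC^{-1/2}(\bfx^0-\bfm)\|^2)$ when the initial offset is small; the paper instead computes $\E\|\bfC^{-1/2}\Delta^k\|^2$ exactly as (squared mean) plus (trace of covariance), which yields the stated constant. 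This is cosmetic, but the theorem as written claims the sharper form.

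The genuine gap is the bound on $\gamma$. In Theorem~\ref{thm:gsnsimple} it is $\bfC$ that is block-tridiagonal, not $\bfOmega=\bfC^{-1}$; the precision matrix is dense, $\lowO$ has $O(m^2)$ nonzero blocks, and your argument as outlined never invokes the hypothesis on $\bfC$ at all. The step that fails is controlling $\|\lowO\|$ by $\|\bfOmega\|$ and $\Cond$ alone: strictly (block-)triangular truncation is not bounded on the operator norm uniformly in dimension --- its norm can grow like $\log n$ relative to $\|\bfOmega\|$ even when $\Cond$ stays bounded (take $\bfOmega=\bfI+\epsilon\bfK$ with $\bfK$ a symmetric Hilbert-type kernel), and the symmetric/skew split does not rescue this because $\lowO+\upO=\bfOmega-\diagO$ controls only the symmetric part while the skew part is exactly the unbounded truncation. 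The paper's Lemma~\ref{lem:sparsecov} circumvents this by applying Woodbury to write $\Gauss=(\bfI-\bfC\upO)^{-1}\bfC\upO$ and then showing, from $\bfC\bfOmega=\bfI$ together with block-tridiagonality of $\bfC$, that the product $\bfC\upO$ has only the blocks $(i,i)$ and $(i,i+1)$ nonzero, with $(\bfC\upO)_{i,i}=\bfC_{i,i-1}\bfOmega_{i-1,i}$ and $(\bfC\upO)_{i,i+1}=-\bfC_{i,i+1}\bfOmega_{i+1,i+1}$; Lemmas~\ref{lem:norm} and~\ref{lem:matrixbasic} then give the dimension-free bound $\|\bfC\upO\|\leq\sqrt{2}(1-\Cond^{-1})\Cond$. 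To salvage your route you would need the same structural input, e.g.\ by writing
\begin{equation*}
\gamma=\left\|\diagO^{-1/2}\,(\upO\bfC)\,\bfOmega\,(\bfC\lowO)\,\diagO^{-1/2}\right\|
\end{equation*}
and bounding the structured factors $\upO\bfC$ and $\bfC\lowO$ as in the paper; without some such use of the sparsity of $\bfC$, no dimension-independent bound on $\gamma$ is possible.
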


Similarly, the convergence rate of the Gibbs sampler
is dimension-independent 
if the precision matrix, rather than the covariance matrix, is 
block-tridiagonal.
One can modify Theorem~\ref{thm:gsnsimple}
to address the case of banded precision matrices. 
\begin{thm}
\label{thm:gsnsimplePrecision}
Suppose the Gibbs sampler with block-size $q$
is applied to a Gaussian target distribution $p=\mathcal{N}(\bfm, \bfOmega^{-1})$ 
with $m$ blocks of size $q$. Suppose $\bfOmega$ is $q$-block-tridiagonal.
Then the distribution of $\bfx^k$ converges to $p$ geometrically fast in all coordinates,
and we can couple $\bfx^k$ and a sample $\bfz\sim \mathcal{N}(\bfm,\bfOmega^{-1})$ such that 
equation~(\ref{eqn:convergence}) holds with
\begin{equation}
\label{eqn:betaprecision}
\beta\leq \frac{\Cond(1-\Cond^{-1})^2}{1+\Cond(1-\Cond^{-1})^2},
\end{equation}
where $\Cond$ is the condition number of $\bfOmega$.
\end{thm}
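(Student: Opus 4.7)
The plan is to mirror the proof of Theorem~\ref{thm:gsnsimple} but swap the roles of covariance and precision: exploit the block-tridiagonal structure of $\bfOmega$ directly through a block Gauss--Seidel analysis in the $\bfOmega$-energy norm $\|\bfv\|_\bfOmega^2:=\bfv^T\bfOmega\bfv$, which is precisely the norm appearing in \eqref{eqn:convergence} since $\bfC^{-1/2}=\bfOmega^{1/2}$. First I would couple $(\bfx^k)$ with a stationary copy $(\bfz^k)$, $\bfz^0\sim\mathcal{N}(\bfm,\bfOmega^{-1})$, by driving both chains with the same i.i.d.\ Gaussian innovations in every block-conditional update. Because each conditional $p(\bfx_j\mid\bfx_{-j})$ is Gaussian with mean affine in $\bfx_{-j}$, the innovations cancel and $\mathbf{e}^k:=\bfx^k-\bfz^k$ evolves deterministically.

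Writing $\bfOmega=\mathbf{D}+\bfL+\bfL^T$ with $\mathbf{D}$ block-diagonal (the $q\times q$ diagonal blocks $\bfOmega_{i,i}$) and $\bfL$ strictly block-lower-triangular (only block-subdiagonal, given $q$-block-tridiagonality), a systematic-scan sweep of the block Gibbs sampler implements precisely the block Gauss--Seidel update, so
\[
(\mathbf{D}+\bfL)\,\mathbf{e}^{k+1}=-\bfL^T\mathbf{e}^k,\qquad \mathbf{e}^{k+1}=M\mathbf{e}^k,\qquad M:=-(\mathbf{D}+\bfL)^{-1}\bfL^T.
\]
Using $M=\bfI-(\mathbf{D}+\bfL)^{-1}\bfOmega$ and the identity $(\mathbf{D}+\bfL)+(\mathbf{D}+\bfL)^T-\bfOmega=\mathbf{D}$, a short algebraic manipulation yields the positive-definite identity
\[
\bfOmega-M^T\bfOmega M=\bfOmega(\mathbf{D}+\bfL^T)^{-1}\mathbf{D}(\mathbf{D}+\bfL)^{-1}\bfOmega,
\]
and expanding $(\mathbf{D}+\bfL)\mathbf{D}^{-1}(\mathbf{D}+\bfL)^T=\bfOmega+\bfL\mathbf{D}^{-1}\bfL^T$ gives the contraction formula
\[
\|M\|_\bfOmega^2=1-\frac{1}{1+\|\mathbf{D}^{-1/2}\bfL^T\bfOmega^{-1/2}\|^2}.
\]

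The quantitative core is therefore the bound $\|\mathbf{D}^{-1/2}\bfL^T\bfOmega^{-1/2}\|^2\le\Cond(1-\Cond^{-1})^2$, which rearranges to the stated form of $\beta$. The ingredients I would assemble are: (a) Cauchy interlacing, yielding $\lambda_{\min}(\bfOmega)\,\bfI\le\mathbf{D}\le\lambda_{\max}(\bfOmega)\,\bfI$; (b) the Schur-complement analysis of the $2\times 2$ principal submatrices of $\bfOmega$ (supported on consecutive block indices $i,i{+}1$), which, being principal submatrices, inherit $\lambda_{\min}\ge\lambda_{\min}(\bfOmega)$; and (c) the resulting block-diagonal inequality $\bfL\mathbf{D}^{-1}\bfL^T\le \mathbf{D}-\lambda_{\min}(\bfOmega)\,\bfI$. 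Conjugating (c) by $\bfOmega^{-1/2}$ while carefully retaining the subtracted $\lambda_{\min}(\bfOmega)\,\bfI$ then produces the sought $(1-\Cond^{-1})^2$ refinement.

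Iterating the contraction gives $\E\|\mathbf{e}^k\|_\bfOmega^2\le\beta^k\,\E\|\mathbf{e}^0\|_\bfOmega^2$, and since $\bfz^0\sim\mathcal{N}(\bfm,\bfOmega^{-1})$ one computes $\E\|\mathbf{e}^0\|_\bfOmega^2=n+\|\bfC^{-1/2}(\bfx^0-\bfm)\|^2\le n\bigl(1+\|\bfC^{-1/2}(\bfx^0-\bfm)\|^2\bigr)$, which is exactly \eqref{eqn:convergence}. The main obstacle will be Step~(c): a naive application of the Schur complement gives only $\bfL\mathbf{D}^{-1}\bfL^T\le\mathbf{D}$ and hence $\|\mathbf{D}^{-1/2}\bfL^T\bfOmega^{-1/2}\|^2\le\Cond$, which is a factor of $(1-\Cond^{-1})^2$ shy of the theorem. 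Recovering this extra factor requires first subtracting $\lambda_{\min}(\bfOmega)\,\bfI$ from each $2\times 2$ principal block of $\bfOmega$ (the result is still positive semidefinite by interlacing) and only then applying the Schur complement to the shifted block.
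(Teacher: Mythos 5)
Your architecture is essentially the paper's: the same innovation-sharing coupling, the same block Gauss--Seidel operator $M=-(\mathbf{D}+\bfL)^{-1}\bfL^T$, and the same reduction of everything to the single quantitative bound $\|\mathbf{D}^{-1/2}\bfL^T\bfOmega^{-1/2}\|^2\leq \Cond(1-\Cond^{-1})^2$. (Your exact identity $\|M\|_\bfOmega^2=1-\bigl(1+\|\mathbf{D}^{-1/2}\bfL^T\bfOmega^{-1/2}\|^2\bigr)^{-1}$ is a clean repackaging of the paper's stationarity relation \eqref{eqn:invariant} together with the step $\Gauss\bfC\Gauss^T\preceq\gamma(\lowO+\diagO)^{-1}\diagO(\lowO+\diagO)^{-T}$; the final bookkeeping with $\E\|\mathbf{e}^0\|_\bfOmega^2=n+\|\bfC^{-1/2}(\bfx^0-\bfm)\|^2$ also matches.)

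The gap is in step (c). The shifted Schur complement does give $\bfL\mathbf{D}^{-1}\bfL^T\preceq\mathbf{D}-\lambda_{\min}(\bfOmega)\bfI$, but conjugating by $\bfOmega^{-1/2}$ yields at best $\|\bfOmega^{-1/2}(\mathbf{D}-\lambda_{\min}\bfI)\bfOmega^{-1/2}\|\leq(\lambda_{\max}-\lambda_{\min})/\lambda_{\min}=\Cond(1-\Cond^{-1})$, one factor of $(1-\Cond^{-1})$ short of the target, which only proves $\beta\leq 1-\Cond^{-1}$ --- strictly weaker than \eqref{eqn:betaprecision}. No amount of care in the conjugation can rescue this, because the intermediate inequality is itself too weak: for $q=1$, $m=2$, $\bfOmega=\left(\begin{smallmatrix}a&b\\b&a\end{smallmatrix}\right)$ with $a>3b>0$, one has $\|\bfOmega^{-1/2}(\mathbf{D}-\lambda_{\min}\bfI)\bfOmega^{-1/2}\|=b/(a-b)$, which exceeds $\Cond(1-\Cond^{-1})^2=4b^2/(a^2-b^2)$, even though the true value $\|\mathbf{D}^{-1/2}\bfL^T\bfOmega^{-1/2}\|^2=b^2/(a^2-b^2)$ sits below the target. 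The additive correction $\mathbf{D}-\lambda_{\min}\bfI$ is simply the wrong shape. What the paper uses instead (Lemma \ref{lem:matrixbasic}) is a \emph{multiplicative} bound on the normalized off-diagonal blocks, $\|\bfOmega_{i,i}^{-1/2}\bfOmega_{i,i+1}\bfOmega_{i+1,i+1}^{-1/2}\|\leq 1-\Cond^{-1}$, obtained by testing $\bfv^T\bfOmega\bfv\geq\lambda_{\min}\|\bfv\|^2$ on a vector supported on two blocks with entries $\bfOmega_{i,i}^{-1/2}\bfx$ and $-\bfOmega_{j,j}^{-1/2}\bfy$, where $(\bfx,\bfy)$ is the top singular pair. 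This gives $\bfL\mathbf{D}^{-1}\bfL^T\preceq(1-\Cond^{-1})^2\mathbf{D}$, and then $\|\bfOmega^{-1/2}\mathbf{D}\bfOmega^{-1/2}\|\leq\Cond$ supplies the remaining factor. Substituting that lemma for your step (c) makes your proof go through.
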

\noindent
The proofs of Theorems~\ref{thm:gsnsimple} and~\ref{thm:gsnsimplePrecision} 
can be found in Appendix~A.

While upper bounds for the rate of convergence $\beta$
are independent of the dimension,
the upper bounds themselves are linear in $n$ and 
involve a norm of the initial condition, which may also scale linearly in $n$.
This does not cause practical difficulties
because the dimension independent scaling of the convergence \textit{rate} implies
that the number of iterations required to reach a given error level
scales logarithmically in $n$, 
which is essentially a constant in practice
(even when $n$ is large).
For example, suppose one wants that the $l^2$ error
$\E\|\bfC^{-1/2}(\bfx^k-\bfz)\|^2$ be bounded by a threshold $\varepsilon$.
To reach this goal, the Gibbs sampler must perform $k$ iterations, where
\[
\beta^k n(1+\|\bfC^{-1/2} (\bfx^0-\bfm)\|^2)\leq \epsilon\quad \Rightarrow\quad k\geq  \frac{\log n-\log \epsilon+\log(1+ \|\bfC^{-1/2} (\bfx^0-\bfm)\|^2) }{-\log \beta}.
\]

Finally, we emphasize that dimension-independent convergence for linear problems is not a new observation, and that there are multiple ways to accelerate this convergence \cite{Fox, GS89}. 
Yet it is often assumed that the spectral gap of an associated linear operator, 
e.g., the Gauss-Seidel operator, is dimension independent.  
Our theoretical contribution here is to show that this dimension-independent gap indeed exists when the inverse problem is local.

\subsubsection{Banded covariance matrices vs.~banded precision matrices}
\label{sec:PrecVSCov}
The upper bounds on the convergence rates
depend on the condition number of the covariance matrix,
or, equivalently, on the condition number of the precision matrix.
This means that convergence is fast, in any dimension,
only if this condition number is moderate.
However, if the condition number 
of the covariance matrix is moderate, 
a banded covariance matrix implies that the precision matrix
can be approximated by a banded matrix, and vice versa. 
This equivalence is made precise in Lemma 2.1 of \cite{BL12}. 
Thus, our assumptions and Theorems~\ref{thm:gsnsimple} and~\ref{thm:gsnsimplePrecision} 
describe and apply to one unified class of problems:
convergence of the Gibbs sampler is fast,
in any dimension, if the covariance \textit{and} precision matrices are banded,
i.e., if statistical interactions (correlations and conditional dependencies) are local. 

\subsubsection{Computational costs}
\label{sec:computcost}
While the upper bound of the convergence rate is independent of dimension,
the actual computational cost of sampling may not be independent of dimension.
The Gibbs sampler for Gaussians with banded covariance matrix
requires  matrix square roots and linear solves
of matrices of sizes $m\times m$ and $d\times d$, 
where $d = n-m$.
These square roots need only be computed once (not once per sample),
but the cost of this computation increases with~$n$ at a rate that 
depends on the bandwidth of the covariance matrix.
Generating a sample requires, at each block, 
solution of a banded linear problem,
and some matrix-vector multiplications
and vector-vector operations.
The cost-per-sample is dominated by the linear solve,
and this cost also increases with~$n$ and at a rate  that 
depends on the bandwidth of the covariance matrix.
The computational cost for one sample is thus,
roughly, $m$ times the cost of the linear solve.

If the precision matrix is banded, 
the conditional distributions used during Gibbs sampling simplify
and it is sufficient to condition on neighboring blocks
(see, e.g., equation~\eqref{tmp:xstep} in the Appendix).
The required matrix operations (square roots, linear solves)
depend on the size of the blocks $q$,
but not on the number of blocks.
Assuming that $q\ll m\ll n$,
the computational cost of a Gibbs sampler for a Gaussian
with banded precision matrix is, roughly, $m$
times the cost of computations with blocks of size~$q$.

Our discussion of the Gibbs sampler so far has applied to generic (localized) Gaussian targets. 
Thus the Gibbs sampler can, in principle, be used to draw samples from posterior distributions of linear inverse problems.
Assuming that the prior covariance or precision matrices are localized,
the localized \textit{posterior} covariance and precision matrices are also banded,
because, as shown in \cite{BL12}, 
the basic arithmetic operations in \eqref{eq:LocPostCov} 
preserve bandedness. 
The convergence rate of the Gibbs sampler
is independent of dimension in this case.
However, linear inverse problems can be solved by a variety of other specialized MCMC techniques; 
see, e.g., \cite{BuiEtAl13,Fox}.
We do not claim that the Gibbs sampler
is necessarily a competitive computational strategy for large scale (million or more variables in $\bfx$)
linear inverse problems with local structure,
but we do anticipate that effective samplers
 can be built from a combination of localization, Gibbs sampling,
acceleration, and preconditioning methods.

\subsection{Metropolis-within-Gibbs for localized nonlinear inverse problems}
\label{sec:lMwG}
In Section~\ref{sec:Gibbs} we described  the Gibbs sampler for generic Gaussian target distributions with banded covariance and precision matrices.
When $\bfh(\bfx)$ is nonlinear, however, the posterior distribution is not Gaussian and in general may not have tractable full conditionals, which makes the direct use of Gibbs sampling infeasible. 
In this section we continue to assume that the prior is Gaussian and use the Gibbs sampler to draw from this prior and then Metropolize. This simple Metropolis-within-Gibbs (MwG) sampler can handle nonlinear $\bfh$ and samples non-Gaussian posterior distributions. The sampler can be localized (l-MwG) by localizing the Gaussian prior and the likelihood.

\subsubsection{Localized Metropolis-within-Gibbs sampling: banded covariance}
\label{sec:inv}
Suppose the prior covariance matrix is block-tridiagonal (after localization),
with $m$ blocks of size $q$,
and further suppose that $\bfx^k$
is the current state of the Markov chain.
One iteration of the l-MwG sampler is as follows.
Start with the first of $m$ blocks.
Use the Gibbs proposal with block-size $q$ (see Section~\ref{sec:Gibbs})
to propose a local move $\bfx'$ by drawing a sample from the 
localized Gaussian prior,
conditioned on the current state $\bfx^k$.
Accept or reject the (local) move by taking the observations into account,
i.e., accept with probability
\begin{equation}
\label{eq:AccProb}
	a = \min\left\{1, \frac{\exp\left(-0.5\left(\bfy-\bfh(\bfx')\right)^T\bfR^{-1}\left(\bfy-\bfh(\bfx')\right)\right)}
	{\exp\left(-0.5\left(\bfy-\bfh(\bfx)\right)^T\bfR^{-1}\left(\bfy-\bfh(\bfx)\right)\right)}\right\},
\end{equation}
Iterating these steps over all $m$ blocks completes one move of l-MwG.
The l-MwG sampler converges to the localized posterior distribution.
This follows from the usual theory of Metropolis-within-Gibbs sampling.

\subsubsection{Localized Metropolis-within-Gibbs sampling: banded precision}
The l-MwG sampler 
can also be applied if the precision matrix,
rather than the covariance matrix, is given.
The sampler is as described above,
but the implementation using precision matrices can be numerically more efficient. 
If the precision matrix has bandwidth $l$, 
then the conditional distribution of a block depends only on a few neighboring blocks,
so that the computations required for drawing a sample from the 
conditional distributions require only matrices of size much less than $n$
(see also Section~\ref{sec:PrecVSCov}, and equation~\eqref{tmp:xstep} in the appendix).
This conditional independence also implies that one can 
sample the block independently of other far away blocks. 
This provides opportunities for leveraging parallel computing 
to reduce overall wall-clock time.

\subsubsection{Localization of the likelihood}
\label{sec:lMwGObsLoc}
The l-MwG sampler requires that $\bfh(\bfx)$ be evaluated
for each block even though only a small number of the components of $\bfx$ are 
changed during one of the local moves.
Since we assume that $\bfh(\bfx)$ is a local function,
i.e., each element of $\bfh(\bfx)$ depends only on a few components of $\bfx$,
one may want to use the local structure of $\bfh(\bfx)$ during sampling.
This localization can accelerate the computations of $\bfh(\bfx)$,
and can also increase acceptance rates and shorten the burn-in period.
Yet changing $\bfh(\bfx)$ alters the likelihood and, therefore, 
the posterior distribution of the inverse problem.
For linear $\bfh$, we showed in Section~\ref{sec:Localization} that this change in the posterior distribution can be small.
We now show how to localize the likelihood under the assumption that the localized posterior distribution can be written as
\begin{equation}
	\label{eqn:qinverse}
	p_\tloc(\bfx\vert\bfy)\propto p_{0,\tloc}(\bfx)\exp\left(-\sum_j  H_j(\bfx_{I_j}, \bfy_j)\right),
\end{equation}
where the observations $\bfy_j$ are the observations ``assigned'' to 
the set of variables  $\bfx_{I_j}$,
 $p_{0,\tloc}$
is the localized Gaussian prior distribution, and the functions $H_j$ can be nonlinear. 

Recall that the Gibbs sampler for the Gaussian prior
produces a local update $\bfx'$ from $\bfx$ 
such that $\bfx_j=\bfx'_j$ for $j\neq i$.
Localization of the likelihood means to accept the proposed local adjustment of the sample, $\bfx'$,
with probability 
\begin{equation}
\label{eq:ARCriterion}
\min\left\{1, \frac{\exp(-\sum_{j: i\in I_j}  H_j(\bfx'_{I_j}, \bfy_j))}{\exp(-\sum_{j: i\in I_j}  H_j(\bfx_{I_j}, \bfy_j))}\right\},
\end{equation}
i.e., we only use the observations assigned to the block we are sampling 
when we consider acceptance of the move.

The stationary distribution of the l-MwG sampler
is the localized posterior distribution \eqref{eqn:qinverse}.
To prove this statement,
it suffices to check the detailed balanced relation 
\[
p_\tloc(\bfx\vert\bfy)Q_i(\bfx,\bfx')=p_\tloc(\bfx'\vert\bfy)Q_i(\bfx',\bfx)
\]
for $\bfx\neq\bfx'$,  where $Q_i$ is the transition density resulting
from the above two steps (propose a local sample using the Gibbs proposal for the prior,
then accept or reject it using the local criterion~(\ref{eq:ARCriterion})). 
For $\bfx\neq\bfx'$, the transition density has the explicit form:
\[
Q_i(\bfx,\bfx')=K_i(\bfx,\bfx')\min\left\{1, \frac{\exp(-\sum_{j: i\in I_j}  H_j(\bfx'_{I_j}, \bfy_j))}{\exp(-\sum_{j: i\in I_j}  H_j(\bfx_{I_j}, \bfy_j))}\right\}.
\]
where the transition density $K_i(\bfx,\bfx')$ is defined by the Gibbs move.
The Gaussian prior $p_{0,\tloc}(\bfx)$ is 
the invariant distribution of this transition, in the sense that 
\[
p_{0,\tloc}(\bfx)K_i(\bfx,\bfx')=p_{0,\tloc}(\bfx')K_i(\bfx',\bfx).
\]
Without loss of generality, we assume $\sum_{j: i\in I_j}  H_j(\bfx'_{I_j}, \bfy_j)\leq\sum_{j: i\in I_j}  H_j(\bfx_{I_j}, \bfy_j)$
which leads to
\begin{align*}
p_\tloc(\bfx\vert\bfy)Q_i(\bfx,\bfx')&=p_{0,\tloc}(\bfx)K_i(\bfx,\bfx')\exp\left(-\sum_{j:i\in I_j}  H_j(\bfx_{I_j}, \bfy_j)\right),\\
p_\tloc(\bfx'\vert\bfy)Q_i(\bfx',\bfx)&=p_{0,\tloc}(\bfx')K_i(\bfx',\bfx) \exp\left(-\sum_{j:i\in I_j}  H_j(\bfx_{I_j}, \bfy_j)\right)\frac{\exp(-\sum_{j: i\in I_j}  H_j(\bfx_{I_j}, \bfy_j))}{\exp(-\sum_{j: i\in I_j}  H_j(\bfx'_{I_j}, \bfy_j))}. 
\end{align*}
Because we also have that
\[
p_{0,\tloc}(\bfx)K_i(\bfx,\bfx')=p_{0,\tloc}(\bfx')K_i(\bfx',\bfx),\quad \bfx'_j=\bfx_j\quad \forall j\neq i,
\]
detailed balance,
i.e., $p_\tloc(\bfx\vert\bfy)Q_i(\bfx,\bfx')=p_\tloc(\bfx'\vert\bfy)Q_i(\bfx',\bfx)$,
is now verified.

\subsubsection{Computational requirements of l-MwG}
Recall that the convergence rate of the Gibbs sampler for
the localized prior is independent of dimension
(Theorems~\ref{thm:gsnsimple} and~\ref{thm:gsnsimplePrecision}).
This means that the size of the proposed moves 
does not decrease as the dimension of $\bfx$ increases,
since the size of the move depends only on the local properties of 
the prior in one of the blocks, rather than the overall number of blocks.
In contrast, the step sizes of many other MCMC algorithms
decrease with the dimension of $\bfx$ (see Figure~\ref{fig:Iso}).
Moreover, the proposed l-MwG moves are local,
i.e., $\bfx'$ is different from $\bfx$ only in 
a few (much less than $n$) of its components,
independent of the number of observations $k$
and of the dimension $n$ of~$\bfx$.
Because we assume that $k=O(n)$,
the number of observations grows 
with the dimension, but the 
number of the observations per block
remains fixed. This suggests that the acceptance ratio in equation~(\ref{eq:AccProb})
may be independent of dimension.

If the size of the proposed moves and the acceptance ratio do not decrease with dimension, 
then the convergence rate of the Gibbs sampler could be independent of dimension.
Indicators of sampling efficiency, e.g., IACT,
would depend only on the \textit{properties} of the blocks 
rather than on the \textit{number} of blocks,
with properties of each block  defined by 
the local structure of the prior covariance and precision matrices
and local properties of the observation function.
The overall computational cost per sample of l-MwG 
is then linear in the number of blocks.
We do not provide a proof for the
dimension independence convergence of l-MwG,
but we provide numerical examples (linear and nonlinear)
in which IACT is indeed independent of dimension,
while IACT increases with dimension for other MCMC algorithms.
We also present an example in which we deliberately violate
some of our assumptions to demonstrate the limitations of these ideas.

\subsubsection{Limitations of l-MwG}
We note that l-MwG has the format of ``sample from the prior and correct (Metropolize) to account for the likelihood.'' 
This approach is generally not efficient and, in particular, degenerates as the observational noise diminishes and the posterior concentrates with respect to the prior.
Localization can somewhat mitigate the effect of posterior concentration (at least, relative to non-localized samplers) as Metropolization is applied only to low-dimensional blocks.
Nonetheless, practical MCMC samplers will require more sophisticated proposal distributions \textit{and} localization.
The l-MwG presented here should be viewed as a first and simple example of an MCMC sampler
that can achieve dimension independent performance by exploiting underlying local structure.
We remark that l-MwG using `` likelihood-informed proposals''
is feasible by mixing localization with other MCMC strategies,
such as more sophisticated proposal distributions \cite{GC11},
acceleration by matrix splittings and analogies of Gibbs samplers to linear solvers (see \cite{Fox}),
multigrid methods \cite{GS89}, or preconditioning. 
This is beyond the scope of this paper, and will be investigated in the future.

\section{Discussion of assumptions and effective dimension}
\label{sec:Discussion}
\subsection{Physical interpretation of local problems}
\label{sec:physical}
We have shown, for linear problems, that localization causes
small errors if a problem is local (see Assumptions (\textit{i}--\textit{iii})
in Section~\ref{sec:Intro}).
We argued in Section~\ref{sec:MCMC} that localized problems can be solved efficiently by MCMC.
All this is relevant only if there are indeed inverse problems
which are local and which can be localized,
i.e., if Assumptions (\textit{i}--\textit{iii})
in Section~\ref{sec:Intro} are valid for some interesting problems.

The assumptions in Section~\ref{sec:Intro} correspond to prior distributions with short
correlation lengths (e.g., in a Gaussian process) and small
neighborhood sizes (e.g., in a Gaussian Markov random field
\cite{rue2005gaussian}) relative to the dimensions of the physical domain.  
A common choice are Gaussian prior distributions with precision
matrices defined via Laplace-like operators; see, e.g.,
\cite{Stuart10,lindgren2011explicit,MartinEtAl12,BuiEtAl13,PetraEtAl14,CuiEtAl16}
and our examples in Section~\ref{sec:Examples}.  
These priors typically have banded precision matrices and, 
in many cases, also have (nearly) banded covariance matrices. 
The priors are updated to posterior distributions by
likelihoods involving observations that depend largely on local properties.  
This means in particular that the set of observations, $\bfy$,
may inform \textit{all} components of $\bfx$,
but each individual component of the observation, $[\bfy]_j$, $j=1,\dots,k$,
may only inform a subset of the components of $\bfx$.
An example of a local observation function is when
some or all components of the ``quantity of interest'' $\bfx$ are directly observed, 
which is often the case in state estimation problems. 

We expect that the assumptions are valid in many
geophysical and engineering applications, where the target
distribution is the posterior distribution of a physical quantity
defined over a spatial domain. 
For example, observation matrices $\bfH$ in image deblurring are often
constructed through the discretization of kernels that have (nearly)
compact support, and, for that reason, are typically local.  
Observations of diffusion processes are local on sufficiently short time scales.
PDEs where information is carried mostly along characteristics
(e.g., transport equations) give rise to local observations when
the quantity of interest is an initial condition.
We already brought up NWP and the EnKF 
as an example of a local problem in which localization 
enables efficient computations in high dimensional problems.
Similarly, exploiting localization in importance sampling (particle filtering)
is also a current topic in NWP, see, e.g.,
\cite{LeiBickel11,Penny16,PJvL15,Rebeschini15,Poterjoy15,Poterjoy16,Poterjoy17,
Todter15,Reich13,Lee16}.
NWP, however, is usually not considered an inverse problem
due to its sequential-in-time nature.

It is important to realize that many important problems are not local, and that localization is not useful for such problems.
An example is computed tomography,
where each observation might depend on material properties along an entire
tomographic ray, leading to observation matrices which are not local in the sense we define here.

\subsection{Connections with infinite dimensional inverse problems and effective dimensions}
\label{sec:InfDimProbs}
One can think of $\bfx$ as the discretization of a physical quantity in some domain, 
for instance on a grid with $n$ degrees of freedom or in Fourier basis of $n$ modes.
For a given domain, the dimension~$n$ of $\bfx$ grows as the discretization is refined. 
If the number of observations $k$ is held constant and we let $n\to\infty$, 
then we describe what happens as the discretization is refined while the domain and observation network remain fixed. 
This leads to the concept of an effective dimension, 
which may be small (finite) even when the apparent dimension is large (infinite); see, e.g, \cite{CM13,Agapiou16}.

Related to a small effective dimension are low-rank updates 
from prior to posterior distributions \cite{SpantiniEtAl15, CuiEtAl14, FlathEtAl11}.
A low-rank update means that the posterior distribution differs from the 
prior distribution in $n_\text{eff}\ll n$ directions. 
More precisely, for Gaussian problems,
a low-rank update means that 
the difference between prior and posterior covariance is low rank.
In practice this occurs, for example, 
when the number of observations is much less than the dimension of $\bfx$,
or when the observations constrain only a few linear combinations of the components of $\bfx$.
Indeed, some definitions of effective dimension
are directly related to the difference of prior and posterior covariance.
In \cite{Agapiou16}, an effective dimension is defined by: 
\begin{equation}
\label{eqn:effdim1}
	n_{\text{eff},1} = \text{tr}((\bfC - \hat{\bfC})\bfC^{-1}),
\end{equation}
where $\bfC$ and $\hat{\bfC}$ are prior and posterior covariance, respectively, and where $\text{tr}(\cdot)$ is the trace of a matrix.
Alternatively, one can consider precision matrices and define an effective dimension by
\begin{equation}
\label{eqn:effdim2}
n_{\text{eff},2} = \text{tr}((\hat{\bfOmega} - \bfOmega)\bfOmega^{-1}),
\end{equation}
where $\bfOmega$ and $\hat{\bfOmega}$
are the prior and posterior precision matrices, respectively.
Other effective dimensions have been defined 
for specific importance sampling algorithms;
see, e.g., \cite{CM13,Agapiou16}.
Certain MCMC and importance sampling algorithms can exploit a small effective
dimension or low rank updates, and can be made discretization
invariant---such that indicators of computational efficiency become independent of the chosen grid; 
see, e.g., \cite{Cotter13,BuiEtAl13,FlathEtAl11,PetraEtAl14,CuiEtAl16b}.

Our assumptions and problem setting describe a different mechanism for reaching large dimensions, 
because we assume that the number of observations is on
the order of the dimension, $k=O(n)$.
The assumptions translate to a problem for which 
the discretization and the number of observation per unit ``length'' remain fixed, 
while the size of the domain grows.
As an example, consider a Gaussian process
defined on an interval of length $L$. 
Further suppose that this process is observed every $r$ units of distance,
i.e., we have $L/r$ observations.
We investigate the situation where the domain size $L$
gets bigger but the number of observations per unit length remains constant.
The limit $L\to\infty$ may not be meaningful
because it would correspond to an infinitely large domain.
Considering large $L$, however, is meaningful
because it describes a large domain
and a large number of observations.
Moreover, as $L$ increases,
the dimension $n$, the number of observations $k$,
and the effective dimension all increase.
The cartoon example of sampling a high-dimensional isotropic Gaussian
illustrates the performance one can expect
from MCMC when the size of the domain is large,
and the number of observations is on the order of the dimension
(i.e., large, well-observed domains).

The assumptions defining a local problem (see Section~\ref{sec:Intro})
do not in general imply that updates from prior to posterior are low-rank. 
Rather, we replace assumptions about low effective dimension
and\slash or low-rank updates by 
assumptions about locality of the precision\slash covariance matrices
and local observations.
The Gibbs samplers discussed above are efficient
when precision and covariance matrices are banded,
and if observations are local.
If these assumptions are only approximately satisfied,
we propose to localize the problem,
i.e., to replace covariance\slash precision matrices by banded matrices.
The resulting localized problem can be solved efficiently,
but at the cost of additional errors due to the localization procedure.

In summary, we suggest that while MCMC 
for generic high-dimensional problems might remain difficult,
one can effectively solve two classes of problems.
If an effective dimension is small and\slash or an update from 
prior to posterior distribution is low rank,
then one can exploit this structure and create effective
MCMC algorithms.
This case has been discussed extensively over the past years.
Our contribution is to suggest that MCMC can also be made efficient if
effective dimensions are large or if updates from prior to posterior distributions
are high rank, as long as the prior covariance and precision matrices are banded
and the observations are local.
In fact, the situation we describe is analogous to linear algebra.
High-dimensional matrices are easy to deal with if they are either
low-rank or banded (sparse).
The same seems true in MCMC for Bayesian inverse problems---these
problems are manageable if either an effective dimension is small, or
if the covariance and precision matrices have banded structure
and if the observations are local.

\section{Numerical illustrations}
\label{sec:Examples}
We consider several linear and nonlinear examples to illustrate local inverse problems,
their localization, as well as the dimension independence of Gibbs
and l-MwG samplers.

\subsection{Example 1: Gaussian prior with exponential covariance function}
We consider a Gaussian prior on the interval $z\in\left[0,L\right]$
with mean and covariance function
\begin{equation*}
\mu(z) = 5\,\sin(2\pi z),\quad
k(z,z') = C\,\exp\left(-\frac{\vert z-z'\vert }{2\rho}\right).
\end{equation*}
where $\rho=0.02$, $C=10$,
and fix a discretization of the domain with $\Delta z=0.01$.
This results in a slowly varying prior mean and,
for a given discretization, 
the covariance matrix has off-diagonal elements that
decay quickly away from the diagonal.

For a given domain size $L$, the dimension of the discretized Gaussian 
is $n=L/\Delta z$, i.e., we have 100 state variables per unit length.
For numerical stability we add $10^{-6}\,\mathbf{I}$ to the discretized covariance matrix.
Samples from the prior are shown (in teal) for a problem with $L=2$ 
in Figure~\ref{fig:AnisoIllu}.
\begin{figure}[t]
\centering
\includegraphics[width=.4\textwidth]{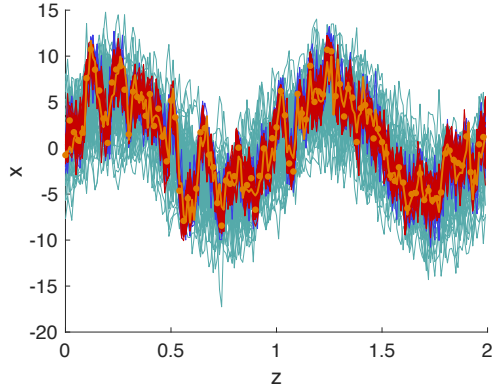}
\caption{
Illustration of the Gaussian inverse problem of example~1.
Teal -- 50 samples of the prior distribution.
Red -- 50 samples of l-MwG.
Blue (often hidden) -- 50 samples of posterior distribution.
Orange line -- ``true state'' that gives rise to the data.
Orange dots -- data.
}
\label{fig:AnisoIllu}
\end{figure}
The data are measurements of a prior sample,
collected every $2\Delta z$ length units.
The measurements are perturbed by Gaussian noise with mean zero
and identity covariance matrix, i.e.,
$\bfR={\bf{I}}$, the identity matrix of size $n/2$.
Note that the observation network is such that no localization of 
the observation matrix $\bfH$ is required,
because the observations are already local (point-wise measurements
and diagonal $\bfR$).
For a given $L$ we have $n$ variables to estimate and $n/2$ data points,
or, equivalently, we have 100 variables and 50 measurements per unit length.
The true state and measurements are illustrated (in orange) for a problem 
with $L=2$ in Figure~\ref{fig:AnisoIllu}.
The data and prior distribution define a Bayesian posterior distribution,
and we show 50 samples of this posterior distribution (for $L=2$) in Figure~\ref{fig:AnisoIllu}.

To illustrate the banded structure and relative size of the elements of the
prior and posterior covariance\slash precision matrices,
we plot the absolute value of the elements of these matrices,
scaled by their largest element, 
in the left panels of Figure~\ref{fig:Prior1}.
These scaled absolute values are between 0 and 1,
and indicate where off-diagonal elements are small.
\begin{figure}[t]
\centering
\includegraphics[width=.4\textwidth]{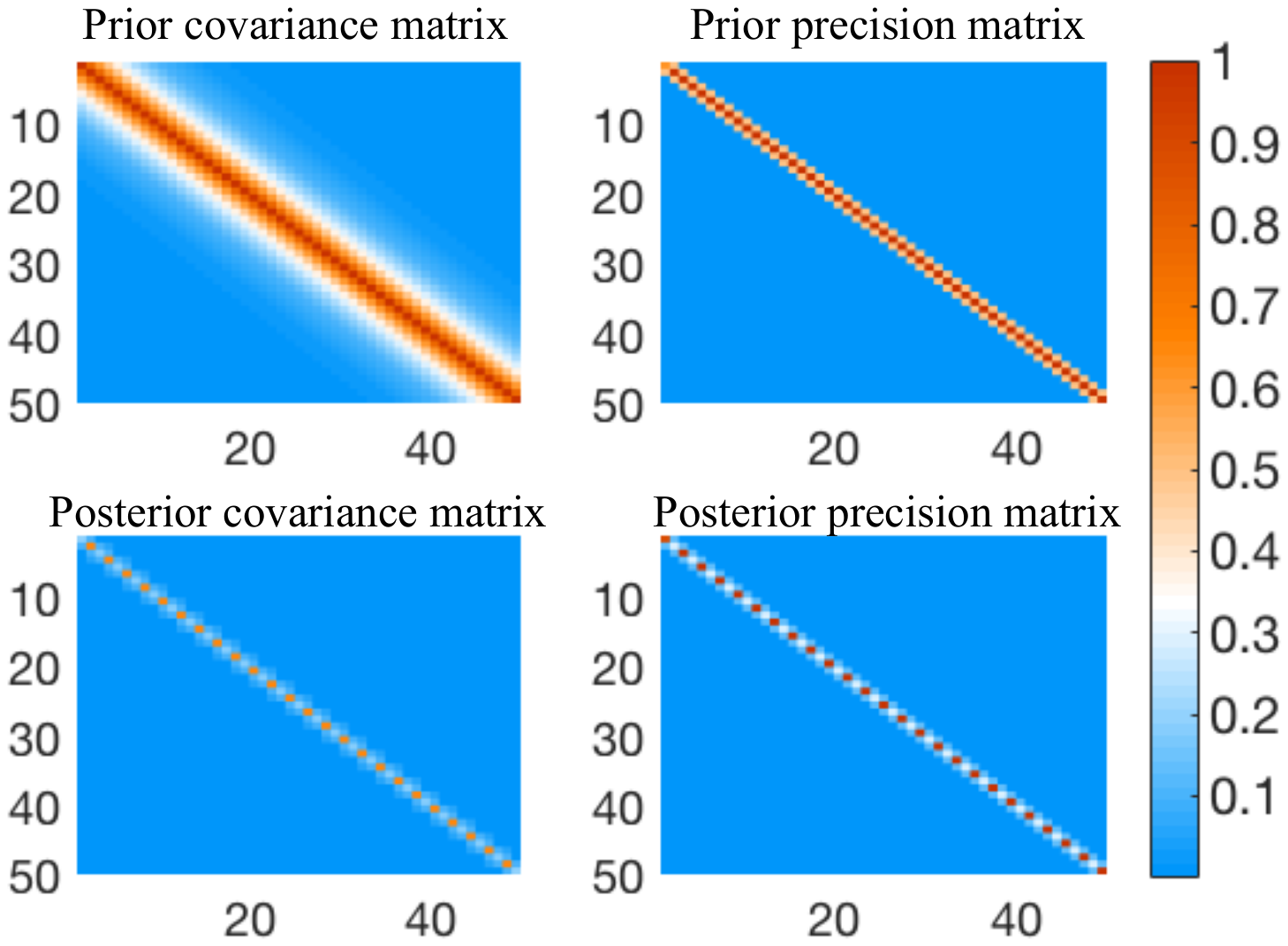}
\includegraphics[width=.4\textwidth]{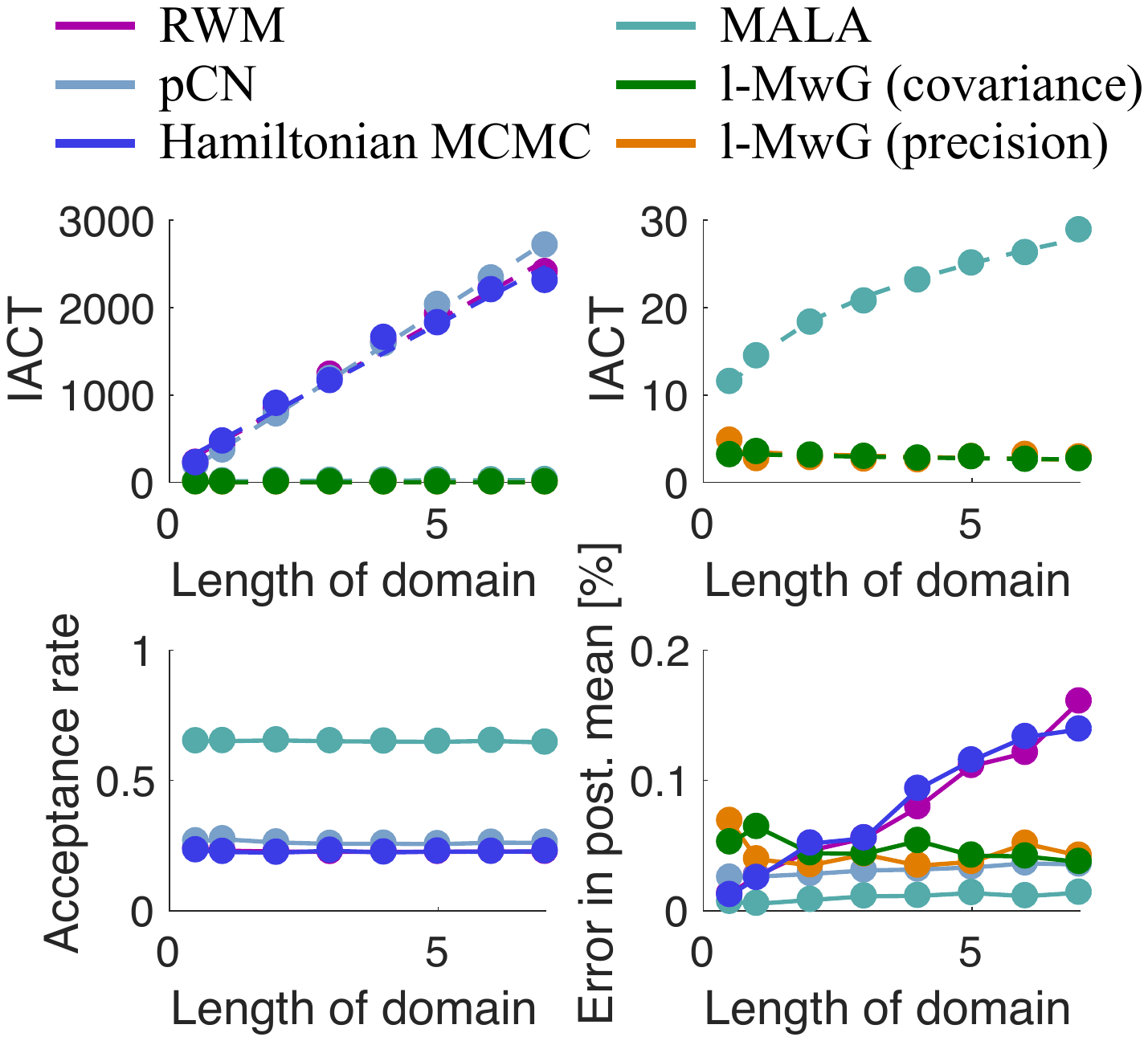}
\caption{
Covariance and precision matrices, IACT, acceptance rate
and errors for Example~1.
\textit{Left panels}. 
Prior and posterior covariance and precision matrices of example~1 with $L=0.5$.
Top row: prior covariance (left) and precision matrix (right).
Bottom row: posterior covariance (left) and precision matrix (right).
\textit{Right panels}.
Top row: IACT as a function of domain length for
RWM, pCN, Hamiltonian MCMC (left)
and MALA and l-MwG (right).
Bottom row: acceptance rate as a function of domain length (left)
and mean squared error in posterior mean (right).
}
\label{fig:Prior1}
\end{figure}
The prior and posterior covariance matrix
have nearly banded structure in the sense that the elements near the diagonal
are larger than the off-diagonal elements.
The condition number of the prior precision and covariance matrices is about 64
(independently of the domain length).
We localize the prior covariance matrix by setting 
all elements smaller than $0.1$ equal to zero.
Since the prior precision matrix is tridiagonal,
we do not need to localize it.
The l-MwG sampler in ``precision matrix implementation''
already makes use of the banded structure of the precision matrix
because we condition only on a few neighbors, rather than the full state.
Since the observation matrix $\bfH$ is already local (direct observations of the state variables),
$\bfH$ does not need to be localized. 
The likelihood has the form~(\ref{eqn:qinverse})
and we make use of this structure
by considering blocks of size $2$ with one observation in each block.

For a fixed domain length $L$,
we apply RWM, pCN \cite{Cotter13}, MALA, Hamiltonian MCMC, and l-MwG
(with covariance localization or in precision matrix implementation).
We tune the proposal variance of RWM, pCN, MALA, Hamiltonian MCMC
by considering a scaling of the step size with 
$n^{-k}$, for several different values of $k$, e.g.,
$k=\{1,2,3,4,5,6\}$ (recall that the dimension of our discretization is $n=L/\Delta z$).
We call the step size that lead to the smallest IACT ``optimal.''
All chains are initialized by a random sample from the prior
and we produce $10^6$ samples with pCN and RWM, 
$10^5$ with Hamiltonian MCMC,
$10^3$ with MALA, and 500 with the l-MwG samplers.
The set-up for this problem is also summarized in Table~\ref{tab:Prior1}.
\begin{table}[tb]
\begin{center}
\caption{Configurations and IACT scalings for examples 1,2 and 3.
In each cell, left to right, are the configurations and results for examples 1,2 and 3.}
\label{tab:Prior1}
\begin{tabular}{l c c c}
Algorithm				& Tuned step size			& Sample size					& IACT scaling\\
\hline
RWM				&$n^{-1/2},n^{-1/2},n^{-1/2}$	&	$10^6,10^6,10^6$			& $n,n,n$ \\
MALA				&$n^{-1/6},n^{-1/4},n^{-1/4}$	&	$10^4,10^4,10^4$			& $n^{1/4},n^{1/4},n^{1/4}$ \\
Hamiltonian MCMC		&$n^{-1/2},n^{-1/2},n^{-1/3}$	&	$10^5,10^5,10^5$			& $n,n,n$ \\
pCN					&$n^{-1/2},n^{-1/2},n^{-1/3}$	&	$10^6,10^6,10^6$			& $n,n,n$ \\
l-MwG (precision)		&n/a						&	$500,500,5000$			& const.,const.,$n^{1/2}$\\
l-MwG (covariance)		&n/a						&	$500,500,5000$			& const.,const.,$n^{1/2}$ \\
\end{tabular}
\end{center}

\end{table}

We perform numerical experiments for domain lengths
ranging from $L=0.5$ to $L=7$,
which leads to problems of dimension $n=50$ to $n=700$,
and with $25$ to $350$ observations.
For each domain length and algorithm we compute the corresponding IACT.
This leads to the scalings, obtained by least-squares fitting of polynomials, 
of IACT with dimension,
as shown in Table~\ref{tab:Prior1},
and as illustrated in Figure~\ref{fig:Prior1}.
We observe that RWM, pCN, Hamiltonian MCMC,
and MALA exhibit an increasing IACT (at different rates) as the domain
size and the number of observations increase,
while both implementations of l-MwG are characterized by an IACT
that remains constant.

We further compute the acceptance rate
(i.e., the acceptance ratio, averaged over all moves)
for RWM, pCN, Hamiltonian MCMC, and MALA.
Results are shown in Figure~\ref{fig:Prior1},
and we note that our tuning of the step-size for each algorithm
keeps the acceptance rate constant as dimension increases.
Finally, we compute an error to check that each algorithm
indeed samples the correct distribution.
We define an ``error in posterior mean'' by
\begin{equation}
\label{eq:Error}
	e = \frac{1}{n}\sum_{j=1}^n ([\hat\bfm]_j-[\bar\bfx]_j)^2,
\end{equation} 
where $\hat\bfm$ is the posterior mean 
and where $\bar\bfx$ is the average over the MCMC samples.
We compute this error, which describes how far our estimated
posterior mean is from the actual posterior mean,
for each algorithm and show the results in Figure~\ref{fig:Prior1}.
We note that all algorithms lead to a small error.
Since this is also true for l-MwG with prior covariance localization,
we conclude that the localized problem is indeed nearby the unlocalized problem we set out to solve.

In this example, it is not the overall dimension,
the overall number of observations, or the rank of the update of
prior to posterior covariance matrix that defines performance bounds for l-MwG.
Because the local problem structure is used during problem formulation and during its MCMC solution, 
the characteristics of each loosely coupled block define the behavior of the sampler.
The overall number of blocks is irrelevant.

One may wonder why the IACT of pCN increases with dimension,
even though pCN is ``by design'' dimension independent.
The reason is \textit{how} dimension increases in this example
(see the discussion in Section~\ref{sec:InfDimProbs}).
The pCN algorithm is dimension independent if 
the dimension increases
because a discretization is refined,
while the size of the domain, 
the number of observations, and
an effective dimension remain constant.
In the current example,
pCN is dimension independent for fixed domain size $L$
and a fixed observation network (fixed $k$),
while we decrease the discretization parameter $\Delta z$.
In such a scenario, an effective dimension remains constant.
However, as we increase the domain size $L$ \textit{and} the number of observations $k$,
while keeping the discretization parameter $\Delta z$ fixed,
the apparent and effective dimensions both increase, and the performance of pCN deteriorates 
(see left panel of Figure~\ref{fig:EffDimEx1} and also Section~\ref{sec:InfDimProbs}).
Similarly, the number of non-negligible
eigenvalues of prior or posterior covariance matrices increases with 
the domain size and the number of observations (see right panel of Figure~\ref{fig:EffDimEx1}).
The usual scenario considered for pCN (and other function space MCMC algorithms)
is that the number of ``relevant'' eigendirections remains constant as dimension increases.
This, too, is not the case for this example and may not be true for other
local inverse problems.
\begin{figure}[t]
\centering
\includegraphics[width=.9\textwidth]{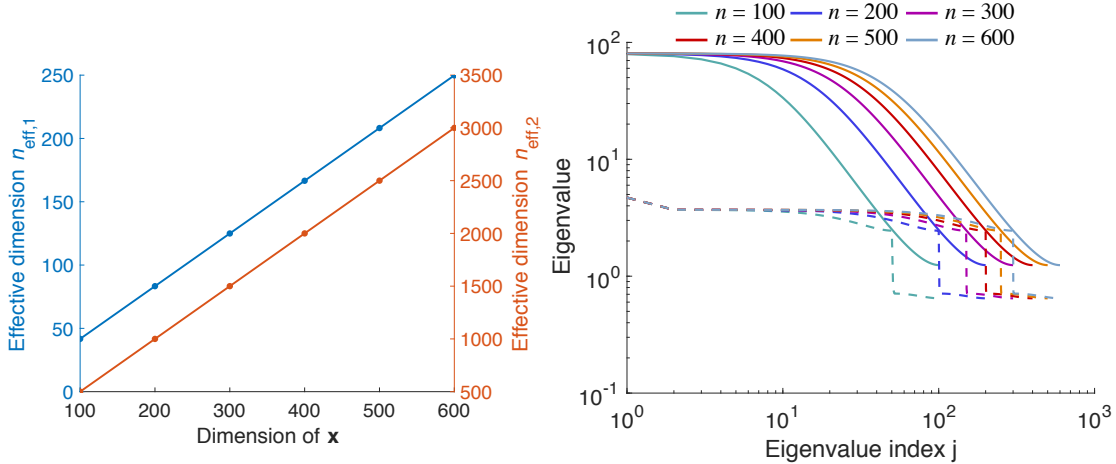}
\caption{
Covariance and precision matrices, IACT, acceptance rate
and errors for Example~1.
\textit{Left panel}. 
Effective dimensions (see Section~\ref{sec:InfDimProbs})
as a function of dimension (equivalently domain size
and number of observations).
\textit{Right panel}.
Eigenvalues of prior (solid)
and posterior (dashed) covariance matrices
for domains of sizes between $L=1$ and $L=6$.
}
\label{fig:EffDimEx1}
\end{figure}

The original convergence analysis of pCN can be found in~\cite{hairer2014spectral}, and applies to a setting where dimension increases because the discretization of the unknown function is refined. The analysis requires that the limiting ($n \to \infty$) observation operator be well defined. 
This is not the case in the above example, with increasing domain size and an increasing number of observations. 
The limit of an increasing domain size would be a domain of infinite size, which is difficult to describe with mathematical rigor.

\subsection{Example 2: Gaussian prior with squared Laplacian as precision matrix}
We now consider a Gaussian prior on $z\in[0,L]$
with mean zero and with a precision matrix derived from the squared Laplacian,
which leads to a pentadiagonal precision matrix after discretization.
As before, we fix the discretization and chose $\Delta z = 0.01$.
The Laplacian is approximated by the $n=L/\Delta z$ dimensional matrix
$\mathbf{L} = 1/(\Delta z)^2\mathbf{A}$,
where $\mathbf{A}$ is a matrix with $2$ on the main diagonal  
and $-1$ on the first upper and lower diagonal.
We define the prior precision matrix by
$\mathbf{\Omega}=(1/\rho^2\mathbf{I}+\mathbf{L})^2$,
where $\rho=0.06$.
As in example~1, we collect data by collecting noisy measurements
of a prior sample every $L/(2\Delta z)$ units.
The data are perturbed by a Gaussian random variable with mean
zero and covariance $\mathbf{R}=\mathbf{I}$.
As in Example~1, the discretization and observation network 
yields $100$ discrete state variables and $50$ data points per unit length.
The condition number of the prior precision or covariance matrix is about $21\cdot 10^3$.

In Figure~\ref{fig:Ex2},
we illustrate the banded structure of the prior and posterior covariance
and precision matrices for a problem with domain length $L=0.5$.
\begin{figure}[t]
\centering
\includegraphics[width=.4\textwidth]{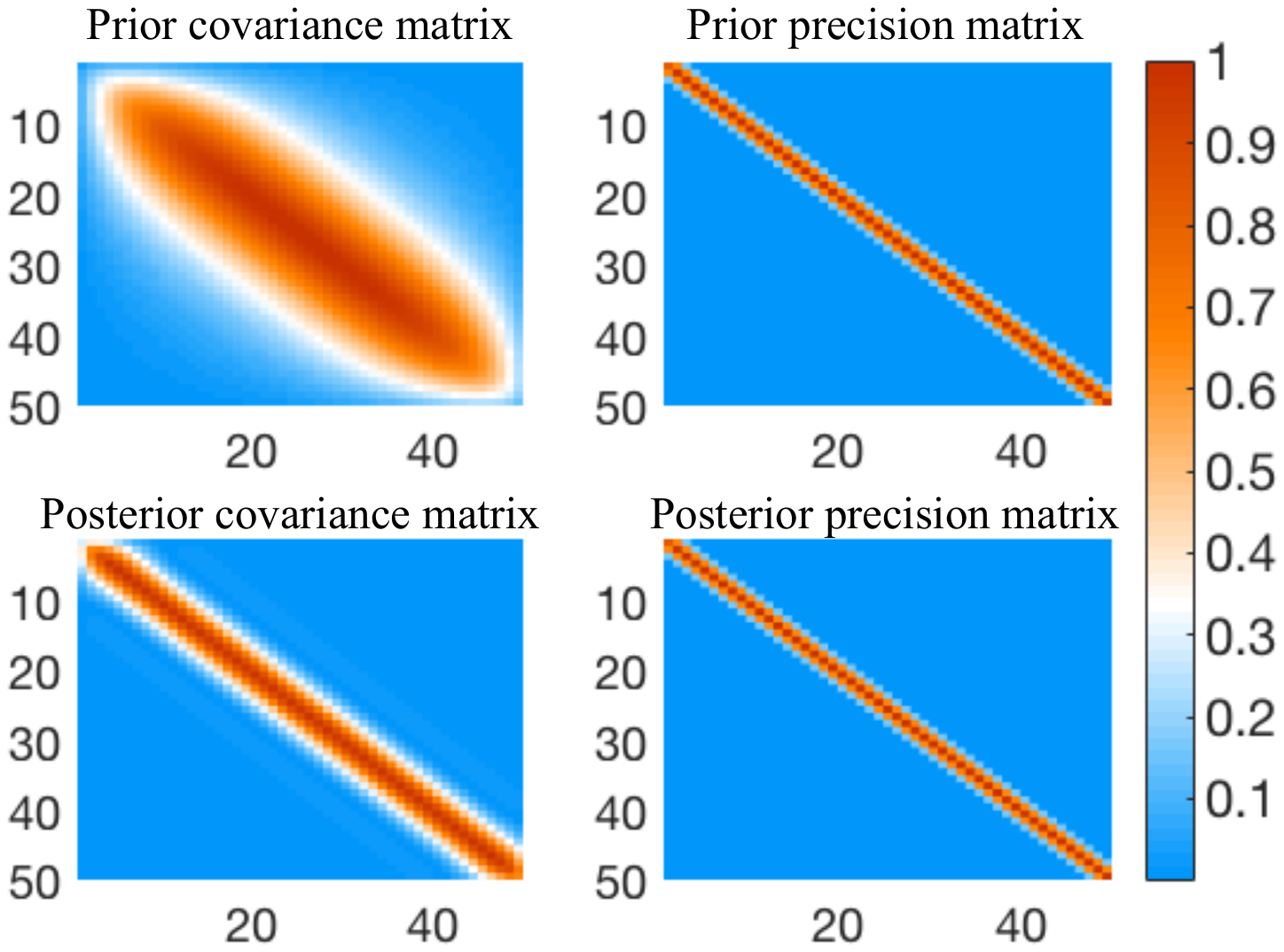}
\includegraphics[width=.4\textwidth]{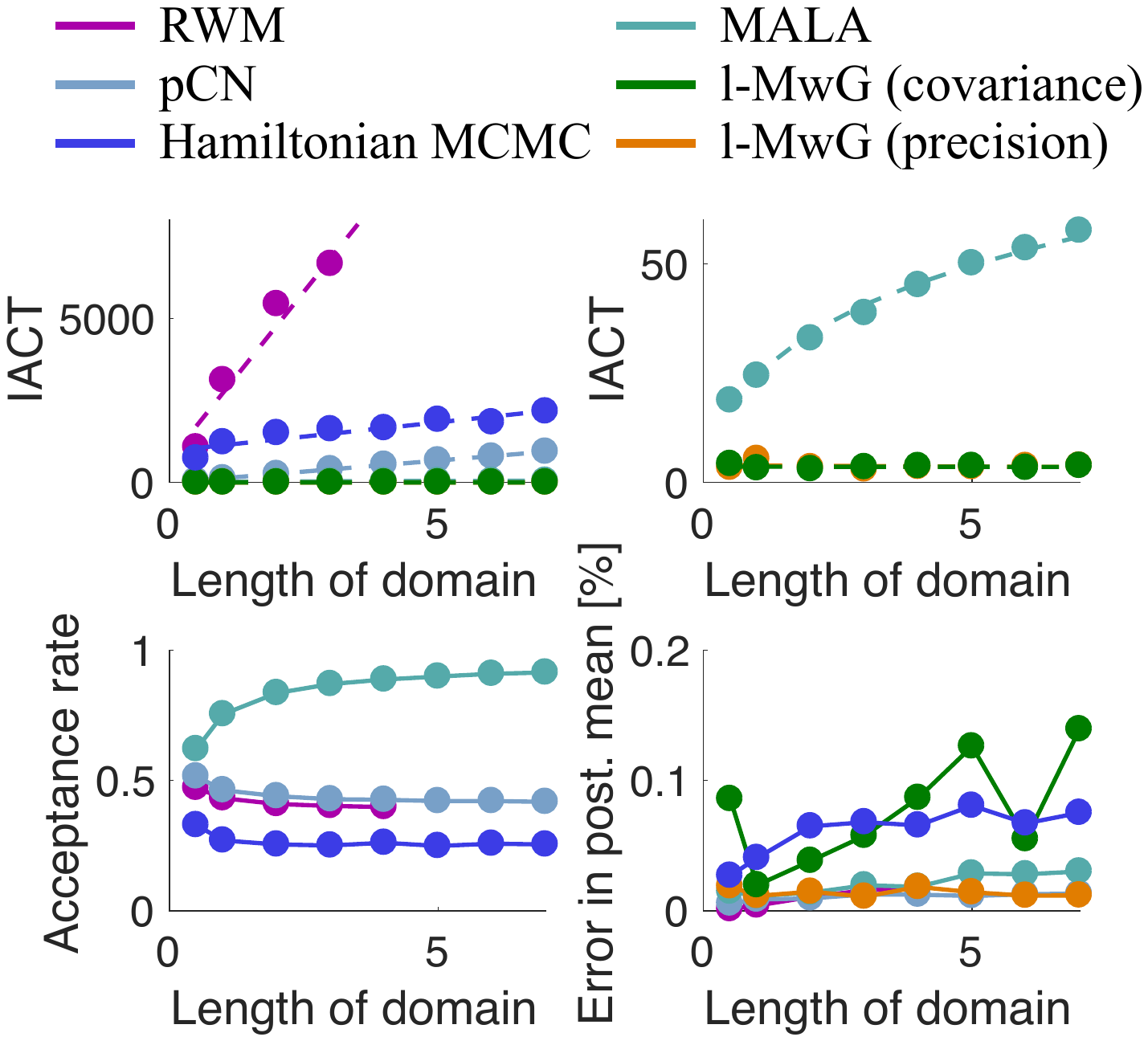}
\caption{
Covariance and precision matrices, IACT, acceptance rate
and errors for Example~2.
\textit{Left panels}. 
Prior and posterior covariance and precision matrices of Example~2 with $L=0.5$.
Top row: prior covariance (left) and precision matrix (right).
Bottom row: posterior covariance (left) and precision matrix (right).
\textit{Right panels}.
Top row: IACT as a function of domain length for
RWM, pCN, Hamiltonian MCMC (left)
and MALA and l-MwG (right).
Bottom row: acceptance rate as a function of domain length (left)
and mean squared error in posterior mean (right).
}
\label{fig:Ex2}
\end{figure}
As in example~1, the prior covariance matrix is nearly banded
(off-diagonal elements are small compared to diagonal elements),
while the prior precision matrix is banded (pentadiagonal).
Thus, as before, we localize the prior covariance matrix
by setting all elements below a threshold of 0.01 equal to zero.
The l-MwG sampler in prior precision matrix implementation
does not require localization since the prior precision is banded.
We also apply RMW, pCN, MALA, Hamiltonian MCMC,
and tune their step-size
as described in Example~1, with our findings summarized in Table~\ref{tab:Prior1}.
The number of samples we consider for each algorithm is also given in Table~\ref{tab:Prior1}.

We vary the domain length $L$ from $L=0.5$ to $L=7$,
apply the various samplers and compute the corresponding IACTs.
This leads to the scalings of IACT with domain length 
(or, equivalently, dimension) 
as shown in Table~\ref{tab:Prior1},
and as illustrated in Figure~\ref{fig:Ex2}.
While the scalings of IACT with domain size
for RWM, Hamiltonian MCMC and pCN are equal (all scale linearly),
we find that IACT of pCN and Hamiltonian MCMC is reduced compared to RWM.
As in Example~1, we find that 
the l-MwG samplers and MALA exhibit smaller IACT
than the other algorithms we tested,
and that IACT of the l-MwG samplers remains constant
as we increase the domain length.
We further find that the acceptance rate of MALA
first increases with $L$, but then levels off 
(due to our tuning of the step-size).
Similarly, the acceptance rate of RWM, pCN, and Hamiltonian MCMC
first decreases, but then levels off for large $L$.
All algorithms yield a small error (see equation~(\ref{eq:Error})),
which, in the case of l-MwG in covariance matrix implementation,
supports our claim that the localized problem is indeed
a small perturbation of the unlocalized problem. 
The reasons for increase in IACT with dimension for pCN
are the same as in example~1.

\subsection{Example 3: banded covariance, but full precision matrix}
We now consider a discrete Gaussian prior and Gaussian inverse problem that
does not necessarily have a limit as the discretization is refined.
We pick this perhaps unphysical problem to illustrate limitations of l-MwG
when the condition number is large and when it increases with dimension.

We pick a dimension, $n$, and chose a zero mean
and a covariance matrix given by an $n\times n$ matrix $\mathbf{C}$
which has 2 on the main diagonal and $-1$ on the first upper and lower diagonal.
We vary the dimension between $n=50$ and $n=700$
which leads to condition numbers between $1\cdot 10^3$ and $2\cdot 10^5$
(the condition number strictly increases with dimension).
Because of the large condition number,
the banded covariance does not imply that the precision matrix is also banded.
In fact, for this prior, the covariance matrix is tridiagonal,
but the precision matrix is full, as illustrated in Figure~\ref{fig:Prior3}.
\begin{figure}[tb]
\centering
\includegraphics[width=.4\textwidth]{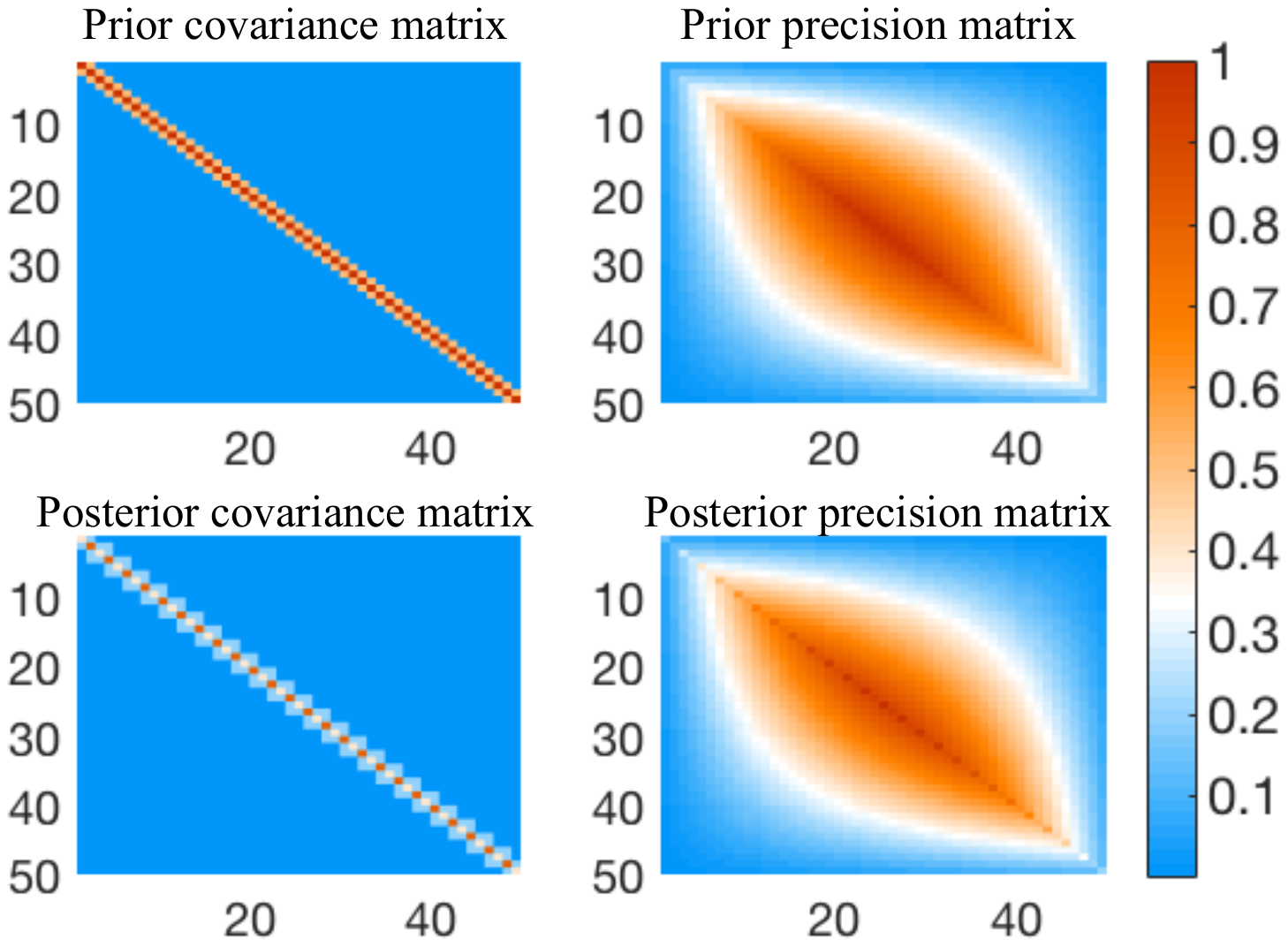}
\includegraphics[width=.4\textwidth]{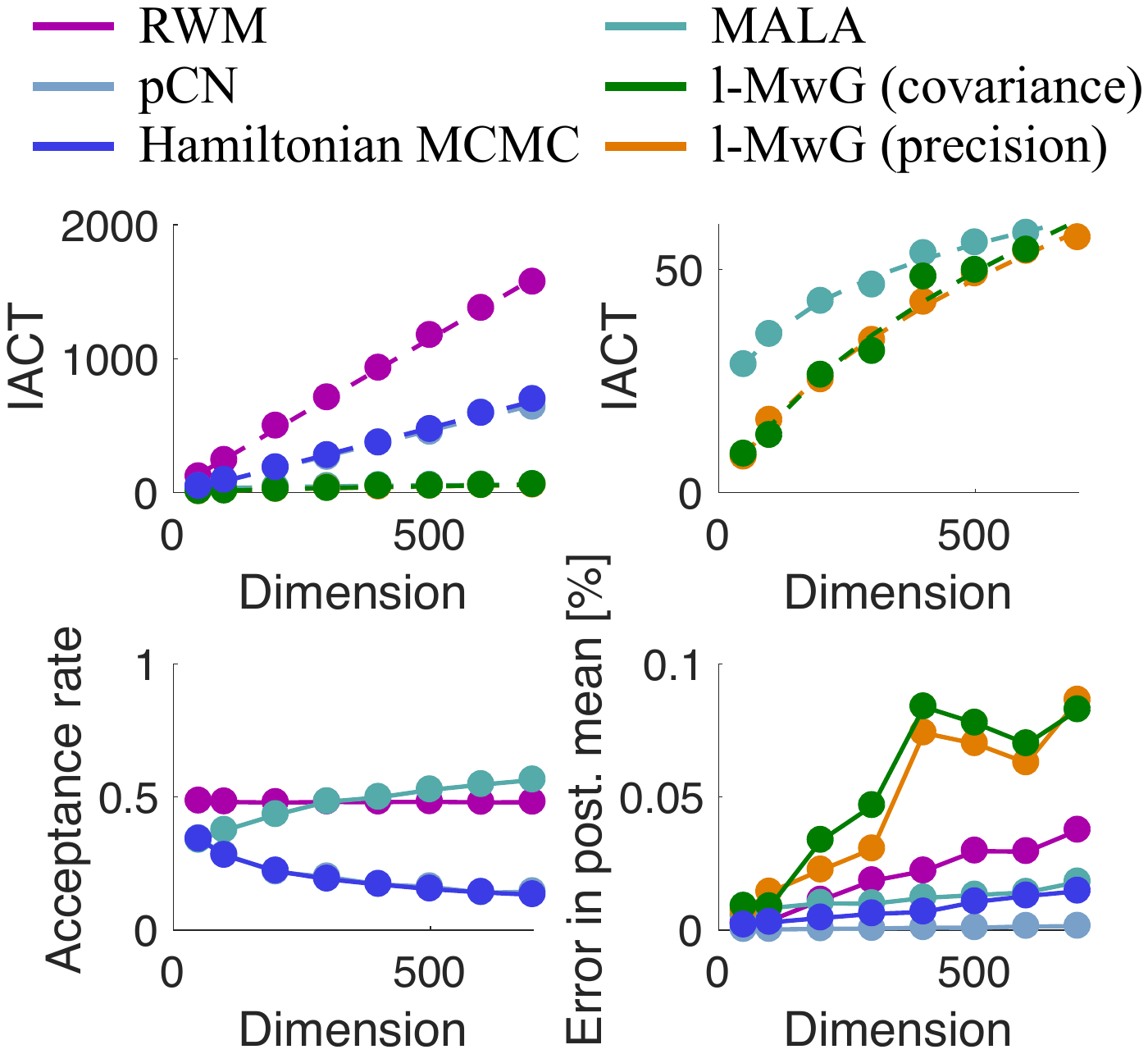}
\caption{
Covariance and precision matrices, IACT, acceptance rate
and errors for Example~3.
\textit{Left panels}. 
Prior and posterior covariance and precision matrices of Example~3 with $L=0.5$.
Top row: prior covariance (left) and precision matrix (right).
Bottom row: posterior covariance (left) and precision matrix (right).
\textit{Right panels}.
Top row: IACT as a function of domain length for
RWM, pCN, Hamiltonian MCMC (left)
and MALA and l-MwG (right).
Bottom row: acceptance rate as a function of domain length (left)
and mean squared error in posterior mean (right).
}
\label{fig:Prior3}
\end{figure}
As in examples~1 and~2,
we observe every other variable and perturb these measurements
by a Gaussian with mean zero and covariance $\mathbf{R}=\mathbf{I}$,
where $\mathbf{I}$ is the identity matrix of dimension $n/2$.
Because the prior precision matrix is not banded,
the posterior precision is also not banded as shown in Figure~\ref{fig:Prior3}.

We solve the Gaussian inverse problem 
by RWM, MALA, pCN, Hamiltonian MCMC and l-MwG.
The set-up and tuning of the algorithms are as described in Examples~1 and~2,
and summarized in Table~\ref{tab:Prior1}.
The l-MwG sampler in covariance matrix implementation does not require
localization since the prior covariance is banded (tridiagonal).
The precision matrix on the other hand cannot be localized efficiently
because the bandwidth of the prior precision matrix is large (see Figure~\ref{fig:Prior3}),
and also increases with $n$.
We thus do \textit{not} localize the l-MwG sampler in precision matrix implementation
and condition on all variables during sampling, 
not only on nearby variables.

For each algorithm and dimension, 
we compute IACT (see Table~\ref{tab:Prior1} for the rates with which
IACT increases with dimension for the various algorithms).
We observe that IACT increases for all algorithms we consider,
as illustrated by Figure~\ref{fig:Prior3}. 
We observe that IACT for the l-MwG samplers now also increases.
The rate is $n^{1/2}$ which is larger than the rate of MALA which,
as in examples~1 and~2, is equal to $n^{1/4}$.
The reasons for increase in IACT with dimension for pCN
are the same as in examples~1 and~2.
The increase in IACT of l-MwG with dimension can be understood within
our theory by the increase in condition number with dimension.
This numerical experiment thus suggests that our assumption
of moderate condition number is indeed necessary
(rather than being a tool for proving the theorems),
which corroborates our claim that l-MwG
is effective (nearly dimension independent) for a unified class of problems
with banded precision \emph{and} covariance matrix.
The bandedness of one implies bandedness of the other
by the small condition~number.

\subsection{Example 4: image deblurring}
We consider a linear inverse problem 
similar to an inverse problem in image deblurring
and assume that $\bfx$ is a column-stack of the pixels of a 2D image.
For example, for an image with $64\times 64$ pixels,
the dimension of $\bfx$ is $n=64^2=4,096$.
Thus, for high-resolution image deblurring (images with a large number of pixels),
the dimension of $\bfx$ is huge ($10^7$ for a $4,000\times 4,000$ image).
We consider the images of sizes $32\times32$, $64\times64$, $128\times128$, $256\times256$
and shown in in Figure~\ref{fig:Ex4}.
\begin{figure}[tb]
\centering
\includegraphics[width=.8\textwidth]{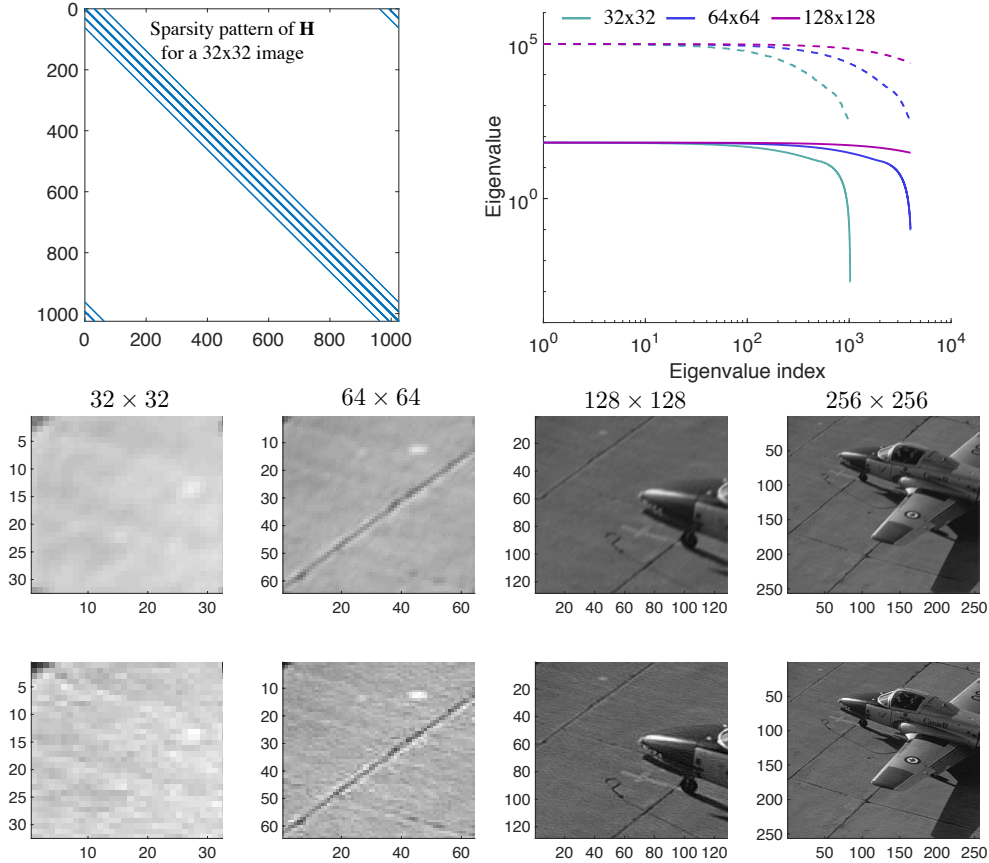}
\caption{
Top left: sparsity pattern of the matrix $\bfH$ for a $32\times 32$ image.
Top right: the largest eigenvalues of the prior precision (solid)
and posterior precision (dashed) matrices of $32\times 32$ (turquoise),
$64\times 64$ (blue) and $128\times 128$ (purple) images.
Center row: blurred images.
Bottom row: mean of $10^4$ samples of the Gibbs sampler.
}
\label{fig:Ex4}
\end{figure}
These images are obtained by cutting square images out of a given, slightly larger image.
We use this example to illustrate that IACT of the Gibbs sampler can remain constant 
as the dimension of $\bfx$ and the number of observations increase with image size.
Many practical difficulties, e.g., 
estimation of noise and regularization parameters (see, e.g., \cite{FN16,Bardsley12,FHLMW16}),
are neglected in this example.

We assume that the image $\bfx$ is blurred by multiplication from the left with a
given $n\times n$ matrix $\bfH$ and that a noisy version of the
blurred image, $\bfy$, is available.
Thus,
\begin{equation*}
	\bfy=\bfH\bfx + \bf{\eta}, \quad \bf{\eta}\sim\mathcal{N}(0,\lambda^{-1}\bfI),
\end{equation*}
where $\lambda = 10^5$ describes a (small) Gaussian measurement noise.
The matrix $\bfH$ represents blurring the image by a Gaussian kernel
with standard deviation 0.7, which is ``realistic'' because 
it effectively averages only a few pixels in each direction.
Throughout this example we assume periodic boundary conditions.
The prior is a Gaussian with mean zero and the prior precision is the 
2D Laplacian, 
i.e., $p_0(\bfx)=\mathcal{N}(0,\delta^{-1} \bfL^{-1})$, where $\delta=10$.
For this set-up, the posterior distribution is the Gaussian 
\begin{equation*}
	p(\bfx\vert \bfb)\propto \exp\left(
	-\frac{\lambda}{2}\vert\vert \bfH\bfx -\bfb\vert\vert^2
	-\frac{\delta}{2}\vert\vert \bfL^{1/2}\bfx\vert\vert^2
	\right),
\end{equation*}
and the posterior precision matrix is given by
\begin{equation*}
	\bfOmega = \lambda \bfH^T\bfH + \delta \bfL.
\end{equation*}

The prior precision matrix $\bfL$ is banded
and, thus, no localization of the prior is required.
In this example, localization of the observation matrix is straightforward:
since blurring occurs locally, 
only a few bands near the diagonal of $\bfH$ 
are ``significant'' and all other elements of $\bfH$ are ``small.''
We can thus localize $\bfH$ by setting all elements below a threshold equal to zero.
The threshold is 1\% of the maximum value of all elements of $\bfH$.
This gives rise to a sparse matrix $\bfH_\text{loc}$
whose sparsity pattern is illustrated for a $32\times 32$ image in Figure~\ref{fig:Ex4}.
Since $\bfH_\tloc$ and $\bf{L}$ are sparse,
the posterior precision matrix $\bfOmega_\tloc$ is also sparse (see also \cite{BL12}).
Errors due to localization of $\bfH$ are small.
We compute that $\vert\vert\bfH-\bfH_\tloc\vert\vert\approx 0.02$
and that the differences in the posterior covariance
are $\vert\vert\hat{\bfC}-\hat{\bfC}_\tloc\vert\vert\approx 3.3\cdot 10^{-3}$,
the error in the posterior mean is $\vert\vert\hat{\bfm}-\hat{\bfm}_\tloc\vert\vert\approx 1$
(independently of image size).

We can compute the effective dimension $n_{\text{eff},2}$ in \eqref{eqn:effdim2}
and the largest (at most $4000$) eigenvalues of the prior and posterior precision matrices
for some of the localized problems. 
Our results are shown in Figure~\ref{fig:Ex4} and displayed in Table~\ref{tab:ResultsImageDeblurr}.
\begin{table}[ht]
\begin{center}
\caption{Dimension, effective dimension and IACT of a Gibbs sampler
as a function of the image size.}
\label{tab:ResultsImageDeblurr}
\begin{tabular}{rcccc}
Image size: & $32\times32$ &$64\times64$ &$128\times128$ &$256\times256$ \\
\hline
Dimension: & 1,024 & 4,096 & 16,384 & 65,536\\
$n_{\text{eff},2}:$ & $4.8\cdot 10^8$ & $7.4\cdot 10^9$ & $1.2\cdot 10^{11}$& - \\
IACT (l-MwG): & 2.92 & 2.97 & 1.74 & 1.11
\end{tabular}
\end{center}
\end{table}
We note that the effective dimension and the number of non-negligible eigenvalues increase
with image size and dimension.
Thus, as in the previous examples, the high-dimension of this local inverse problem
is not caused by a large apparent, but small effective dimension,
or by an increasing number of negligible eigenvalues,
but rather \textit{all} dimensions of this problem are large.
We did not compute the effective dimension or eigenvalues
corresponding to the $256\times 256$ image
because the required matrix operations
were difficult to do (on our laptop) even when exploiting the sparsity of the matrices.

We implement a Gibbs sampler for the posterior distribution
using the posterior precision matrix $\bfOmega_\tloc$.
We define blocks using the 2D structure of the image rather than
consecutive elements of the column stacks. 
In specific, the component index is two dimensional, i.e. 
$[\bfx]_{i,j}$ denotes the $(i,j)$-th pixel of an $n'\times n'$ image. And the $(a,b)$-th block of size $q=q'\times q'$consists of entries
\[
\bfx_{a,b}=([\bfx]_{a+i,b+j},i,j=1,\cdots,q').
\]
Two blocks $\bfx_{a_1,b_1}$ and $\bfx_{a_2,b_2}$ are neighbors if $|a_1-a_2|\leq 1$ and $|b_1-b_2|\leq 1$. 
During one iteration of the Gibbs sampler,
we sample one block conditioned on the neighboring blocks.
The blocks are of size $16\times 16$ for the $32\times 32$ and $64\times 64$ images,
for the $128\times 128$ image we use blocks of size $32\times32$
and for the $256\times 256$ image we use blocks of size $64\times 64$.
For each image, we draw 10,000 samples from the posterior distribution using the Gibbs sampler.
We then compute IACT for every 8$th$ pixel.
The averages of these IACTs are displayed in Table~\ref{tab:ResultsImageDeblurr}.
IACT is near one for all four images we consider
and we emphasize that, as in previous examples,
IACT does not increase with dimension.
The slight decrease in IACT as dimension increases can be
attributed to the larger block size we use.
For example, if the block size is the size of the image,
we draw independent samples and IACT equals one.

We checked the results we obtained by the Gibbs sampler as follows.
The sample average is, to three digits, the same as the posterior mean
of the localized problem (as computed by linear algebra, rather than sampling).
We also compute the trace of the sample covariance matrix
and compared it to the trace of the sample covariance
of 10,000 samples obtained by a ``direct'' sampler
that makes use of the sparse Cholesky factorization of the posterior precision matrix.
The average of these variances agrees to the first three digits.
This is perhaps not surprising in view of the small IACT.

We also applied pCN and MALA for this problem,
but could not obtain useable results even for the smallest image (dimension $n=1,024$).
We tuned the step sizes of these algorithms,
but for all choices we considered,
we either accepted often, in which case the moves were too small (large IACT),
or we observed that larger moves were almost never accepted.
The reason for the difficulties with MALA and pCN
is the large effective dimension of this problem.

\subsection{Example 5: a nonlinear inverse problem}
We consider a nonlinear inverse problem whose likelihood 
involves numerical solution of the Lorenz'96 (L96) model \cite{L96}
\begin{equation*}
	\frac{\text{d}x_i}{\text{d}t} = (x_{i+1}-x_{i-2})x_{i-1} - x_i+8,
\end{equation*}
where $i=1,\dots,n$ and 
$x_{-1}=x_{n-1}$, $x_0=x_n$, $x_{n+1}=x_1$.
We consider L96 models with $n=40$ and $n=400$.
Specifically, we try to estimate the initial condition $\bfx_0=(x_1(0),\dots,x_n(0))^T$
given noisy observations of every other state variable at time $T=0.2$.
We write this as
\begin{equation}
\label{eq:L96Likelihood}
	\bfy = \bfH\mathcal{M}_{0\to T}(\bfx_0) + \mathbf{\eta},\quad \mathbf{\eta}\sim\mathcal{N}(\mathbf{0},\bfI),
\end{equation}
where  $\bfH$ is a $n/2\times n$ matrix
which selects every other component of $\bfx_T=\mathcal{M}_{0\to T}(\bfx_0)$;
$\mathcal{M}_{0\to T}$ is the numerical solver of the L96 model
from $t=0$ to $t=T$, i.e., $\mathcal{M}_{0\to T}(\bfx_0)$ 
maps the initial condition $\bfx_0$ to $\bfx_T$.
We use a fourth order Runge-Kutta  scheme with time step $\Delta t=0.01$.

We assume a Gaussian prior whose mean and covariance we obtain as follows.
We initialize the L96 model with an arbitrary initial condition  near the attractor
and perform a simulation for 100 time units ($10^4$ time steps).
The mean of this L96 trajectory is used as the prior mean,
and we use a localized version of the covariance matrix computed from the trajectory
as the prior covariance matrix. 
The localization is done as follows.
We first compute the Hadarmard (element by element) product of
the sample covariance and a localization matrix whose elements
are $[\bfA]_{i,j}=\exp\left(-\left(d(i,j)/(3\,\sqrt{2})\right)^2\right)$,
where $d$ is a periodic distance: $d(i,j)=\min\{|i-j|,|i-j+n|, |i-j-n|\}$.
Subsequently, off-diagonal elements below a threshold
(1\% of the largest element) are set to zero. 
This results in a localized, sparse prior covariance matrix.
The eigenvalues of the prior covariances for problems of dimensions
$n=40$ and $n=400$ are shown in Figure~\ref{fig:Ex5Spectra}.
\begin{figure}[tb]
\centering
\includegraphics[width=.5\textwidth]{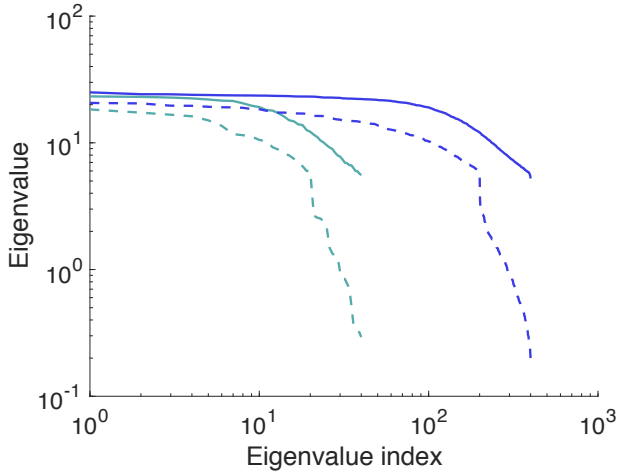}
\caption{
Eigenvalues of the prior covariance matrices for the L96 model.
Turquoise: $n=40$.
Blue: $n=400$.
Solid: prior covariance matrix.
Dashed: approximate posterior covariance.
}
\label{fig:Ex5Spectra}
\end{figure}
As in the above examples, the eigenvalues do not decay quickly
and the number of non-negligible eigenvalues increases with dimension.

The localized prior and a likelihood defined by equation~(\ref{eq:L96Likelihood})
define the (non-Gaussian) posterior distribution
\begin{equation}
\label{eq:Ex5Post}
	p(\bfx_0\vert\bfy) \propto p_0(\bfx_0)
	\exp\left(-\frac{1}{2} \vert\vert\bfH\mathcal{M}_{0\to T}(\bfx_0)-\bfy\vert\vert^2\right).
\end{equation}
We minimize $-\log(p(\bfx\vert\bfy))$ by Gauss-Newton,
using adjoints for gradient calculations, and use the optimization result
as a reference solution for the MCMC results. 
We compute the inverse of the Gauss-Newton approximation of the Hessian
of $-\log(p(\bfx\vert\bfy))$, evaluated at its minimizer, 
and use this matrix, $\bfP$, as an approximation of posterior covariance
(see, e.g., \cite{ALB16,TalagrandCourtier,sakov2012,TLM} for discussion
of these approximations in NWP).
The eigenvalues of the approximate posterior covariance matrices
for problems of dimension $n=40$ and $n=400$
are shown in Figure~\ref{fig:Ex5Spectra}.
Note that the number of non-negligible eigenvalues increases with dimension.
With the approximate posterior covariance $\bfP$ we can also compute an effective dimension. 
Here we use $n_{\text{eff},1}$  as in \eqref{eqn:effdim1},
because the problem is naturally formulated in terms of covariance matrices, 
rather than precision matrices.
We compute $n_{\text{eff},1}=17.94$ when $n=40$
and $n_{\text{eff},1}=181.36$ when $n=400$.
As in the above examples, this effective dimension increases with dimension.

We use MALA, pCN and l-MwG to draw samples from the posterior distribution~(\ref{eq:Ex5Post}),
which is not  Gaussian because the L96 model is nonlinear,
i.e., $\mathcal{M}_{0\to T}(\bfx_0)$ is not a linear function of $\bfx_0$.
The MALA proposal we use is 
\begin{equation*}
	\bfx_{k+1}' = \bfx_k+\frac{\sigma^2}{2}\nabla \log p(\bfx_k)+\sigma\, \xi
\end{equation*}
where $\xi\sim\mathcal{N}(0,\bfC_\text{MALA})$.
We chose the covariance matrix $\bfC_\text{MALA}$
to be a diagonal $n\times n$ matrix 
whose diagonal elements are the diagonal elements of $\bfP$.
For this example, this choice for the covariance of $\xi$ normalizes each dimension, and
it gives better results, in terms of  acceptance ratio and IACT, than using the identity matrix.
We consider several choices for the step size $\sigma$,
and for each choice generate $10^5$ samples.
For fixed $\sigma$, we compute IACT for all $n$ variables,
and then average.
We also compute the average (along the chain) of the acceptance ratio.
The results for L96 models of dimensions $n=40$ and $n=400$ are shown in Figure~\ref{fig:Ex5}.
\begin{figure}[tb]
\centering
\includegraphics[width=.7\textwidth]{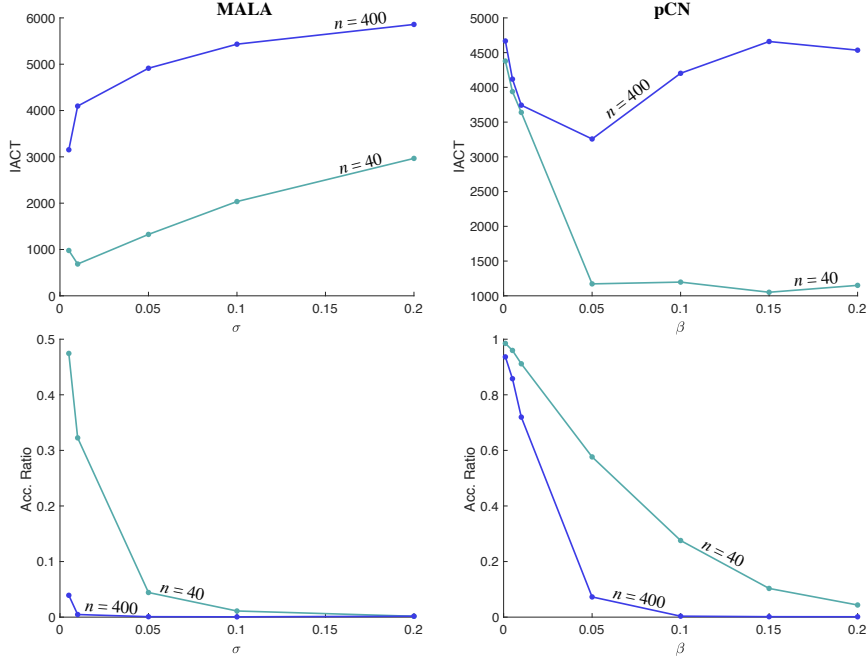}
\caption{
Average IACT (top) and average acceptance ratio (bottom)
as a function of the step-size.
Left: MALA. Right: pCN.
}
\label{fig:Ex5}
\end{figure}
We note that a small step-size is required in order
to retain a reasonable acceptance ratio.
As we increase the dimension from $n=40$ to $n=400$,
we note that the acceptance ratio for a given step size $\sigma$ drops.
For this  reason, IACT increases as the dimension increases.
The smallest (average) IACTs we could achieve for the $n=40$ and $n=400$ dimensional
problems are shown in Table~\ref{tab:Ex5IACT}.
It is evident that IACT increases with dimension in this problem. 

\begin{table}[htp]
\begin{center}
\begin{tabular}{rcccccc}
& &MALA &pCN &l-MwG-2 & l-MwG-4 &l-MwG-8 \\
\hline
&$n=40$ & 686 & 1,051 &55&60&266\\
&$n=400$ &3,153  & 3,257 &43&81&257
\end{tabular}
\caption{IACT of MCMC methods: The IACT of MALA and pCN are minimized with $\sigma$ and $\beta$. The q in l-MwG-q  is the block size.  }
\label{tab:Ex5IACT}
\end{center}
\end{table}%

The pCN proposal we use is 
\begin{equation*}
	\Delta\bfx_{k+1}' = \sqrt{1-\beta^2}\Delta\bfx_k+\beta\, \xi_k
\end{equation*}
where $\xi_k\sim\mathcal{N}(0,\bfC)$ and where $\Delta\bfx$
are perturbations from the prior mean.
We consider several choices for $\beta$ and, for each choice,
we generate $10^5$ samples and compute the average IACT
and average acceptance ratio as above.
Our results are shown in the right panels of Figure~\ref{fig:Ex5}.
We find that the acceptance ratio of pCN is generally higher than that of MALA,
but IACT of pCN and MALA are comparable.
We also observe that IACT of pCN increases with dimension.
The smallest IACT we found are shown in comparison with those of MALA
in Table~\ref{tab:Ex5IACT}.

We implement the l-MwG sampler using the prior covariance matrix and observation localization.
The blocks are consecutive components of $\bfx_0$
(but taking into account the periodicity of the L96 model)
and the block sizes we tested are $2$, $4$ and $8$.
Localization of the observation function is done by 
using, for each block, all observations within the block 
as well as the neighboring two observations on each side
(accounting for the periodicity of L96).
For each block size, we generate $10^4$ samples 
and compute an average IACT as above.
Results are shown in Table~\ref{tab:Ex5IACT}.
Note that IACT of l-MwG is significantly smaller
than the IACT we computed for pCN or MALA.
More importantly we observe that IACT does not 
increase significantly when we increase the dimension.

We illustrate the localized prior and posterior distributions
in Figure~\ref{fig:Ex5Samples}.
\begin{figure}[tb]
\centering
\includegraphics[width=.7\textwidth]{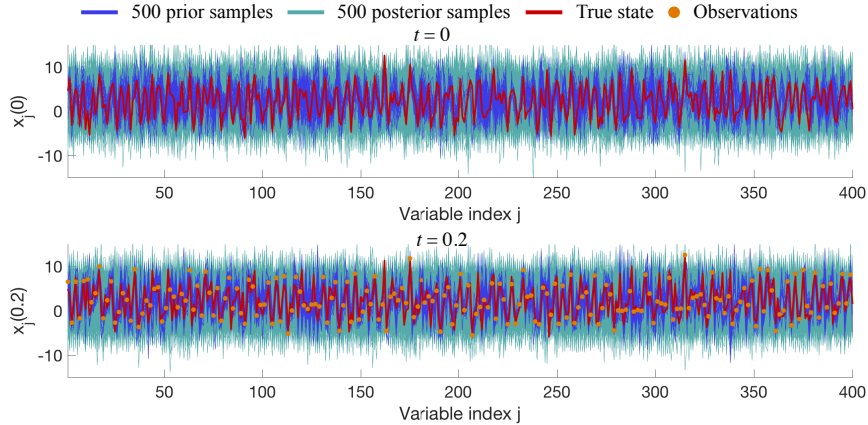}
\caption{
Top: 500 prior samples (turquoise), 
the true state (red) and 500 posterior samples
obtained by l-MwG (blue).
Bottom:
500 model states at time $t=0.2$
obtained by using the L96 model on the prior (turquoise) and posterior (blue) samples.
Shown in red is the true state and the orange dots are the observations. 
}
\label{fig:Ex5Samples}
\end{figure}
Samples of the prior distribution serve as a reference
and 500 samples of the prior are shown in turquoise;
500 samples of the posterior distribution,
obtained by l-MwG with block size eight,
are shown in blue.
The true state is shown in red.
Localization of the prior and using l-MwG to draw samples from the
localized problem produces a meaningful posterior distribution: 
the uncertainty is reduced and the posterior samples are centered around the true state.
The lower panel of Figure~\ref{fig:Ex5Samples} illustrates
how the uncertainty in the localized prior and posterior distributions propagates through time.
We show the 500 samples of the prior (turquoise) and posterior (blue)
mapped to time $T=0.2$ by the L96 dynamics $\mathcal{M}_{0\to T}(\bfx_0)$.
Again, the true state (at time $T$) is shown in red and lies well within the cloud of posterior samples.
Plotting states at time $T$ also allows us to compare to the observations,
shown as orange dots, and we see good agreement between
the model states at time $T$,  generated from posterior samples, 
and the observations.

We emphasize that l-MwG, or in fact any MCMC method,
may not be appropriate for solving this inverse problem.
The minimizer of $-\log(p(\bfx_0\vert\bfy))$ 
and the associated Hessian-based covariance are good approximations
of the posterior mean and covariances we computed by MCMC,
but Gauss-Newton optimization is more efficient than MCMC for this problem.
Our main messages for this example are that
(\textit{i}) l-MwG can be used on nonlinear problems;
(\textit{ii}) l-MwG, or other MCMC samplers that make use
of the local structure of the problem, may be more effective
than MCMC methods that do not make use of local problem structure (pCN and MALA);
and (\textit{iii}) increasing the dimension of this problem 
has almost no effect on the performance of l-MwG,
but MCMC samplers that do not make use of local problem structure (pCN and MALA)
are ineffective if the dimension and effective dimension are large.
  
\section{Summary and conclusions}
\label{sec:Summary}
The main goal of this paper is to demonstrate
that ideas of localization in numerical weather prediction (NWP) and
the ensemble Kalman filter (EnKF) have relevance in inverse problems and MCMC.
During localization, one restricts prior statistical interactions
(correlations and\slash or conditional dependencies) and the effects of
observations to neighborhoods.
We have discussed conditions under which such ideas can be used
in the context of inverse problems and their solution by MCMC.
For example, we proved that localization introduces small errors
for a class of linear ``local'' problems,
but expect that this also holds for nonlinear local problems.

We reviewed the Gibbs sampler as a natural MCMC
sampler that exploits local problem structure.
We observed that its performance is dimension independent when
sampling Gaussian distributions with banded precision and covariance matrices.
We presented a Metropolis-within-Gibbs sampler
that can be applied to linear or nonlinear problems.
We demonstrated our ideas in several numerical examples in which 
Gibbs samplers outperformed MALA, Hamiltonian MCMC, RWM and pCN.

Localization is useful for inverse problems only 
if there are interesting applications in which localization can be applied.
We speculate that local structure is common in physics and engineering,
with NWP being the most spectacular example.
Finally, we have discussed that the notion of high dimensionality
in local problems is different from what 
is usually assumed in function space MCMC.
The literature on function space MCMC focuses
on problems with a large apparent dimension,
but few observations and low-rank updates from prior to posterior.
Localization is useful in a different scenario,
in which the apparent and effective dimensions,
and the number of observations, are large,
and updates from prior to posterior are not low-rank.

Our study neglects many practical challenges.
For example, localization may require some tuning,
which in itself can be computationally expensive.
More importantly,
we have not rigorously defined localization for non-Gaussian problems, 
where enforcing bandedness of the covariance matrices may not be sufficient
because higher moments become important.
Taking inspiration instead from the sparsity of the
precision matrix, a useful route may be to consider the conditional independence
structure of more general non-Gaussian Markov random fields \cite{spantini2017}.
We hope to address these issues in future work.

\section*{Acknowledgements}
M.~Morzfeld gratefully acknowledges support by the Office of Naval Research 
(grant number N00173-17-2-C003), by the National Science Foundation (grant DMS-1619630), 
and by the Alfred P. Sloan Foundation (Sloan Research Fellowship). 
X.T.~Tong gratefully acknowledges support by the National University of Singapore (grant R-146-000-226-133).
Y.M.~Marzouk gratefully acknowledges support from the DOE Office of Advanced Scientific Computing Research (grant DE-SC0009297).

\appendix
\section{Bias caused by localization} 
In Section~\ref{sec:Localization}, 
we propose to localize the prior covariance or precision matrix, and apply the l-MwG. 
This leads to samples from the localized posterior distribution,
not from the exact posterior distribution.
In this section we show that the difference between these two distributions can be small. 

First we need the following estimate.
\begin{lem}
\label{lem:perturbinverse}
Suppose $\bfA$ is a symmetric positive definite matrix, and $\bfB$ is a symmetric matrix such that $\|\bfB\|\leq \|\bfA^{-1}\|^{-1}$, then 
\[
\|(\bfA+\bfB)^{-1}-\bfA^{-1}\|\leq \frac{\|\bfA^{-1}\|^2\|\bfB\|}{1-\|\bfB\|\|\bfA^{-1}\|}.
\]
\end{lem}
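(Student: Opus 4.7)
The plan is to bound the perturbation via a Neumann series expansion of the inverse. First I would factor
\[
\bfA + \bfB \;=\; \bfA\bigl(\bfI + \bfA^{-1}\bfB\bigr),
\]
so that, provided $\bfI + \bfA^{-1}\bfB$ is invertible,
\[
(\bfA+\bfB)^{-1} - \bfA^{-1} \;=\; \bigl[(\bfI + \bfA^{-1}\bfB)^{-1} - \bfI\bigr]\bfA^{-1}.
\]
The hypothesis $\|\bfB\|\le\|\bfA^{-1}\|^{-1}$ gives the submultiplicative estimate $\|\bfA^{-1}\bfB\|\le\|\bfA^{-1}\|\|\bfB\|\le 1$, and (assuming strict inequality, which is the only nontrivial case since otherwise the stated bound is vacuous) guarantees convergence of the Neumann series.

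Next I would expand
\[
(\bfI + \bfA^{-1}\bfB)^{-1} \;=\; \sum_{k=0}^{\infty} (-\bfA^{-1}\bfB)^k,
\]
so that the $k=0$ term cancels $\bfI$ and
\[
(\bfA+\bfB)^{-1} - \bfA^{-1} \;=\; \sum_{k=1}^{\infty} (-\bfA^{-1}\bfB)^k \bfA^{-1}.
\]
Then the triangle inequality and submultiplicativity of the operator norm give
\[
\|(\bfA+\bfB)^{-1} - \bfA^{-1}\| \;\le\; \|\bfA^{-1}\|\sum_{k=1}^{\infty}\bigl(\|\bfA^{-1}\|\|\bfB\|\bigr)^k.
\]

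Finally I would sum the geometric series, which under the assumption $\|\bfA^{-1}\|\|\bfB\|<1$ yields
\[
\|\bfA^{-1}\|\cdot\frac{\|\bfA^{-1}\|\|\bfB\|}{1-\|\bfA^{-1}\|\|\bfB\|} \;=\; \frac{\|\bfA^{-1}\|^2\|\bfB\|}{1-\|\bfA^{-1}\|\|\bfB\|},
\]
which is the claimed bound. There is no real obstacle here; the only subtlety is justifying invertibility of $\bfI + \bfA^{-1}\bfB$ and convergence of the Neumann series, both of which follow immediately from the norm hypothesis (with the caveat that the stated bound is meaningful only in the strict-inequality regime). Note that symmetry and positive-definiteness of $\bfA$ are not actually used in this argument beyond ensuring that $\bfA^{-1}$ exists, so the estimate in fact holds for any invertible $\bfA$.
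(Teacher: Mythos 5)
Your proof is correct, but it takes a genuinely different route from the paper. The paper writes $\bfB=\Psi L\Psi^T$ via its eigendecomposition, applies the Woodbury identity to get
\[
\bfA^{-1}-(\bfA+\bfB)^{-1}=\bfA^{-1}\Psi\bigl(L^{-1}+\Psi^T\bfA^{-1}\Psi\bigr)^{-1}\Psi^T\bfA^{-1},
\]
and then bounds the middle factor by a quadratic-form argument, $|v^T(L^{-1}+\Psi^T\bfA^{-1}\Psi)v|\geq \|\bfB\|^{-1}-\|\bfA^{-1}\|$, which uses the symmetry of $\bfB$ and the definiteness of $\bfA$. Your Neumann-series argument reaches the identical bound by writing $\bfA+\bfB=\bfA(\bfI+\bfA^{-1}\bfB)$, expanding $(\bfI+\bfA^{-1}\bfB)^{-1}$ geometrically, and summing; it is more elementary (no Woodbury, no spectral decomposition) and, as you note, strictly more general, since it needs only that $\bfA$ be invertible and $\|\bfA^{-1}\|\|\bfB\|<1$, with no symmetry or definiteness hypotheses on either matrix. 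What the paper's route buys is an explicit formula for the difference of inverses as a low-rank-style correction, which is in the same spirit as the Kalman-gain manipulations used elsewhere in the appendix, but for the purpose of this norm estimate your approach is cleaner. Your handling of the boundary case $\|\bfB\|\|\bfA^{-1}\|=1$ (where the stated bound is vacuous) is also the right observation.
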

\begin{proof}
Let $\bfB=\Psi L \Psi^T$ be the eigenvalue decomposition of matrix $\bfB$. Remove columns of $\Psi$ that are eigenvectors of value $0$, so $L$ only contains nonsingular terms. Then by the Woodbury matrix identity
\[
\bfA^{-1}-(\bfA+\Psi L \Psi^T)^{-1}=\bfA^{-1} \Psi(L^{-1}+\Psi^T \bfA^{-1} \Psi)^{-1} \Psi^T \bfA^{-1}.
\]
Let $v$ be the eigenvector corresponds to the largest absolute eigenvalue of $(L^{-1}+\Psi^T \bfA^{-1} \Psi)^{-1}$. Then
\[
|v^T (L^{-1}+\Psi^T \bfA^{-1}\Psi ) v|\geq  |v^TL^{-1} v|-|v^T\Psi^T \bfA^{-1} \Psi v|\geq \|L\|^{-1} -\|\bfA^{-1}\|. 
\]
Moreover, $ \|L\|=\|\bfB\|$, so $\|(L^{-1}+\Psi^T \bfA^{-1} \Psi)^{-1}\|\leq  (\|\bfB\|^{-1} -\|\bfA^{-1}\|)^{-1}$, and
\[
\|(\bfA+\bfB)^{-1}-\bfA^{-1}\|\leq \frac{\|\bfA^{-1}\|^2}{ \|\bfB\|^{-1}-\|\bfA^{-1}\|}.
\]
\end{proof}

\begin{prop}
\label{prop:loc}
Let $\bfC$ be a prior covariance matrix
and $\bfH$ be an observation matrix.
Let $\bfC_\tloc$ and $\bfH_\tloc$
be localized covariance and observation matrices,
and let  $\delta_C$ and $\delta_H$ be as defined in \eqref{eq:deltaC} and \eqref{eq:deltaH}.
\begin{enumerate}[a)]
\item 
The localized prior covariance and observation matrices
are small perturbations of the unlocalized covariance 
and observation matrices in the sense that
\[
\|\bfC-\bfC_\tloc\|\leq \delta_C, \quad \|\bfH_\tloc-\bfH\|\leq \delta_H.
\]
Moreover, if $\delta_C\leq \|\bfC^{-1}\|^{-1}$, then $\bfC_\tloc$ is positive semidefinite. 
\item 
If $\delta_C\leq \|\bfC^{-1}\|^{-1},\Delta_1\leq \|\bfChat\|^{-1}$,
the localization creates a small perturbation of the posterior covariance matrix
in the sense that
\[
\|\bfChat-\bfChat_\tloc\|\leq \frac{\|\bfChat\|^2\Delta_1}{1-\Delta_1 \|\bfChat\|},\quad \Delta_1:=\frac{\|\bfC^{-1}\|^2\delta_C }{1-\delta_C \|\bfC^{-1}\|}+(2\delta_H+\delta_H^2) \|\bfR^{-1}\|.
\]
\item 
Under the same conditions as in b),
the localization creates a small perturbation of the 
posterior mean in the sense that
\begin{align*}
\|\bfmhat_\tloc-\bfmhat\|&\leq \left(\frac{\|\bfC^{-1}\|^2 \|\bfChat\|\delta_C }{(1-\delta_C \|\bfC^{-1}\|)(1-\Delta_1 \|\bfChat\|)}+\frac{\|\bfC^{-1}\|\|\bfChat\|^2\Delta_1}{1-\Delta_1 \|\bfChat\|}\right) \|\bfm\|\\
&\quad\quad+\frac{\|\bfChat\|\|\bfR^{-1}\|}{1-\Delta_1 \|\bfChat\|}(\|\bfChat\|\Delta_1+\delta_H ) \| \|\bfy\|. 
\end{align*}
\end{enumerate}
\end{prop}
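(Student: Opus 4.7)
The plan is to prove the three parts in order, working throughout in the information (precision) form of the Gaussian, so that perturbations of both the prior covariance and the observation matrix enter as additive perturbations of $\bfChat^{-1}$, and then invoking Lemma \ref{lem:perturbinverse} twice.

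For (a), I would bound $\|\bfC-\bfC_\tloc\|$ by observing that this symmetric matrix has entries only outside the band $|i-j|\leq l$, and its $i$-th row sums (in absolute value) to at most $\delta_C$; for a symmetric matrix the operator 2-norm is bounded by the maximum absolute row sum (Gershgorin / Hölder), so $\|\bfC-\bfC_\tloc\|\leq\delta_C$. The same argument, applied to rows \emph{and} columns to handle the non-symmetric case (which explains the $\max$ in the definition of $\delta_H$), gives $\|\bfH-\bfH_\tloc\|\leq\delta_H$. Positive semidefiniteness of $\bfC_\tloc$ under $\delta_C\leq\|\bfC^{-1}\|^{-1}$ is immediate from Weyl's inequality: $\lambda_{\min}(\bfC_\tloc)\geq\lambda_{\min}(\bfC)-\|\bfC-\bfC_\tloc\|\geq 0$.

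For (b), use Woodbury to write $\bfChat^{-1}=\bfC^{-1}+\bfH^T\bfR^{-1}\bfH$ and likewise for the localized version, so that
\begin{equation*}
\bfChat_\tloc^{-1}-\bfChat^{-1}=(\bfC_\tloc^{-1}-\bfC^{-1})+(\bfH_\tloc^T\bfR^{-1}\bfH_\tloc-\bfH^T\bfR^{-1}\bfH).
\end{equation*}
The first piece is bounded by $\|\bfC^{-1}\|^2\delta_C/(1-\delta_C\|\bfC^{-1}\|)$ via Lemma \ref{lem:perturbinverse} applied with $\bfA=\bfC$, $\bfB=\bfC_\tloc-\bfC$ and the estimate from (a). The second piece is bounded by expanding $\bfH_\tloc=\bfH+(\bfH_\tloc-\bfH)$ and collecting the three cross-terms, yielding $(2\delta_H+\delta_H^2)\|\bfR^{-1}\|$ after the $\|\bfH\|$-type prefactor is absorbed using $\|\bfH\|\leq 1$ (a standard normalization) or into the $\delta_H$ bookkeeping. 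Together these give $\|\bfChat_\tloc^{-1}-\bfChat^{-1}\|\leq\Delta_1$. A second application of Lemma \ref{lem:perturbinverse}, now with $\bfA=\bfChat^{-1}$ and $\bfB=\bfChat_\tloc^{-1}-\bfChat^{-1}$, delivers the advertised bound on $\|\bfChat-\bfChat_\tloc\|$.

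For (c), rewrite the means in information form, $\bfmhat=\bfChat(\bfC^{-1}\bfm+\bfH^T\bfR^{-1}\bfy)$ and analogously for $\bfmhat_\tloc$, and split
\begin{equation*}
\bfmhat-\bfmhat_\tloc=(\bfChat-\bfChat_\tloc)\bfC^{-1}\bfm+\bfChat_\tloc(\bfC^{-1}-\bfC_\tloc^{-1})\bfm+(\bfChat-\bfChat_\tloc)\bfH^T\bfR^{-1}\bfy+\bfChat_\tloc(\bfH-\bfH_\tloc)^T\bfR^{-1}\bfy.
\end{equation*}
Each factor is now controlled by (a) and (b), together with the auxiliary estimate $\|\bfChat_\tloc\|\leq\|\bfChat\|/(1-\Delta_1\|\bfChat\|)$, which itself follows from Lemma \ref{lem:perturbinverse} applied to $\bfChat^{-1}$. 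Grouping the resulting terms by $\|\bfm\|$ and $\|\bfy\|$ reproduces the two coefficients in the statement.

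The main obstacle is the bookkeeping in (c): matching denominators of the forms $1-\delta_C\|\bfC^{-1}\|$ and $1-\Delta_1\|\bfChat\|$, and in particular handling $\bfChat\bfH^T\bfR^{-1}$ without an uncontrolled $\|\bfH\|$ appearing. I would dispose of this by identifying $\bfChat\bfH^T\bfR^{-1}$ with the Kalman gain and exploiting $\bfChat\bfH^T\bfR^{-1}\bfH=\bfI-\bfChat\bfC^{-1}$ (a Woodbury consequence) together with a square-root manipulation to show $\|\bfChat\bfH^T\bfR^{-1}\|\leq\|\bfChat\|\|\bfR^{-1}\|$ under the working hypotheses. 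Apart from this, the entire proof reduces to two uses of Lemma \ref{lem:perturbinverse} and the row-sum bound from (a).
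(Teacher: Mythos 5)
Your proposal is correct and follows essentially the same route as the paper's proof: the row/column-sum bound for part (a), the information-form identity $\bfChat^{-1}=\bfC^{-1}+\bfH^T\bfR^{-1}\bfH$ with two applications of Lemma \ref{lem:perturbinverse} for part (b), and the same telescoping of $\bfChat_\tloc\bfC_\tloc^{-1}-\bfChat\bfC^{-1}$ and $\bfChat_\tloc\bfH_\tloc^T-\bfChat\bfH^T$ for part (c). The only cosmetic difference is that the Kalman-gain manipulation you anticipate needing at the end is unnecessary: the paper simply invokes the normalization $\|\bfH\|=1$ (which you already use in part (b)) to get $\|\bfH^T\bfR^{-1}\|\leq\|\bfR^{-1}\|$ directly.
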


\begin{proof}
For  claim a), Define $\Delta=\bfC-\bfC_\tloc,$ apply Lemma \ref{lem:norm} with block size $1$, 
\[
\|\bfC-\bfC_\tloc\|\leq \left(\max_i \sum_{j}|[\Delta]_{i,j}| \right)=\delta_C.
\]
The bound $\|\bfH-\bfH_\tloc\|\leq \delta_H$ follows the same argument. 

For  claim b), we will exploit the identity $\bfChat^{-1}=\bfC^{-1}+\bfH^T \bfR^{-1} \bfH$. By Lemma \ref{lem:perturbinverse}, 
\[
\|\bfC^{-1}-\bfC_\tloc^{-1}\|\leq \frac{\|\bfC^{-1}\|^2\delta_C }{1-\delta_C \|\bfC^{-1}\|}. 
\]
Under our normalization assumption that $\|\bfH\|=1$, 
\begin{align}
\notag
\|\bfH^T \bfR^{-1} \bfH-\bfH^T_\tloc \bfR^{-1} \bfH_\tloc\|
&\leq  \|(\bfH^T-\bfH^T_\tloc)\bfR^{-1}\bfH\|+\|\bfH^T\bfR^{-1}(\bfH-\bfH_\tloc)\|\\
\notag
&\quad\quad+\|(\bfH^T-\bfH^T_\tloc)\bfR^{-1}(\bfH-\bfH_\tloc)\|\\
\label{tmp:HRH}
&\leq 2\delta_H \|\bfR^{-1} \bfH\|+ \delta_H^2 \|\bfR^{-1}\|\leq (2\delta_H+\delta_H^2) \|\bfR^{-1}\|. 
\end{align}
Therefore
\[
\|(\bfC^{-1}+\bfH^T \bfR^{-1} \bfH)-(\bfC_\tloc^{-1}+\bfH^T_\tloc \bfR^{-1} \bfH_\tloc)\|\leq \frac{\|\bfC^{-1}\|^2\delta_C }{1-\delta_C \|\bfC^{-1}\|}+(2\delta_H+\delta_H^2) \|\bfR^{-1}\|=:\Delta_1.
\]
We apply Lemma \ref{lem:perturbinverse} to $\bfChat=(\bfC^{-1}+\bfH^T \bfR^{-1} \bfH)^{-1}$, and obtain
\[
\|\bfChat-\bfChat_\tloc\|\leq \frac{\|\bfChat\|^2\Delta_1}{1-\Delta_1 \|\bfChat\|}.
\]
As a consequence, we also have that $\|\bfChat_\tloc\|\leq \frac{\|\bfChat\|^2\Delta_1}{1-\Delta_1 \|\bfChat\|}+\|\bfChat\|= \frac{\|\bfChat\|}{1-\Delta_1 \|\bfChat\|}$.

As for claim c), the difference is given by
\[
\bfmhat_\tloc-\bfmhat=(\bfChat_\tloc \bfC_\tloc^{-1}-\bfChat \bfC^{-1})\bfm+(\bfChat_\tloc\bfH_\tloc^T-\bfChat\bfH^T)\bfR ^{-1}\bfy.
\]
The norm of the matrix $(\bfChat_\tloc\bfH_\tloc^T-\bfChat\bfH^T)\bfR ^{-1}$ 
can be bounded using claim b)
\begin{align*}
\|(\bfChat_\tloc\bfH_\tloc^T-\bfChat\bfH^T)\bfR ^{-1}\|&\leq  \|\bfChat-\bfChat_\tloc\|\|\bfH^T \bfR ^{-1}\|
+\|\bfChat_\tloc\|\|\bfH_\tloc-\bfH\| \|\bfR^{-1}\|\\
&\leq  \frac{\|\bfChat\|\|\bfR^{-1}\|}{1-\Delta_1 \|\bfChat\|}(\|\bfChat\|\Delta_1+\delta_H ).
\end{align*}
As for the norm of the matrix $\bfChat_\tloc \bfC_\tloc^{-1}-\bfChat \bfC^{-1}$, we use
\[
\|\bfChat_\tloc \bfC_\tloc^{-1}-\bfChat \bfC^{-1}\|
\leq \|(\bfC_\tloc)^{-1}-\bfC^{-1}\|\|\bfChat_\tloc\|+\|\bfChat_\tloc-\bfChat\|\|\bfC^{-1}\|.
\]
From part  (b), we have
\[
\|\bfC_\tloc^{-1}-\bfC^{-1}\|\|\bfChat_\tloc\|\leq \frac{\|\bfC^{-1}\|^2 \|\bfChat\|\delta_C }{(1-\delta_C \|\bfC^{-1}\|)(1-\Delta_1 \|\bfChat\|)},
\quad\|\bfChat_\tloc-\bfChat\|\|\bfC^{-1}\|\leq \frac{\|\bfC^{-1}\|\|\bfChat\|^2\Delta_1}{1-\Delta_1 \|\bfChat\|}.
\]
In summary:
\begin{align*}
\|\bfmhat_\tloc-\bfmhat\|&\leq \left(\frac{\|\bfC^{-1}\|^2 \|\bfChat\|\delta_C }{(1-\delta_C \|\bfC^{-1}\|)(1-\Delta_1 \|\bfChat\|)}+\frac{\|\bfC^{-1}\|\|\bfChat\|^2\Delta_1}{1-\Delta_1 \|\bfChat\|}\right) \|\bfm\|\\
&\quad\quad+\frac{\|\bfChat\|\|\bfR^{-1}\|}{1-\Delta_1 \|\bfChat\|}(\|\bfChat\|\Delta_1+\delta_H ) \| \|\bfy\|. 
\end{align*}

\end{proof}

\begin{prop}
\label{prop:locprecision}
Let $\bfOmega$ be a prior precision matrix
and $\bfH$ be an observation matrix.
Let  $\delta_\Omega$ and $\delta_H$ be as defined in \eqref{eq:deltaOmega} and \eqref{eq:deltaH}. Let $\bfOmega_\tloc$ and $\bfH_\tloc$
be the localized precision and observation matrices, then
\begin{enumerate}[a)]
\item 
If $\delta_\Omega\leq \|\bfOmega\|,\Delta_2\leq \|\bfChat\|^{-1}$,
the localization creates a small perturbation of the posterior precision and covariance matrix
in the sense that $\|\bfChat^{-1}-\bfChat^{-1}_\tlocp\|\leq \Delta_2$ and
\[
\|\bfChat-\bfChat_\tlocp\|\leq \frac{\|\bfChat\|^2\Delta_2}{1-\Delta_2 \|\bfChat\|},\quad\text{where}\quad \Delta_2:=\delta_\Omega+(2\delta_H+\delta_H^2) \|\bfR^{-1}\|.
\]
\item 
Under the same conditions as in a),
the localization creates a small perturbation of the 
posterior mean in the sense that
\[
\|\bfmhat_\tlocp-\bfmhat\|\leq  \frac{ \|\bfChat\|\delta_\Omega+\|\bfOmega\|\|\bfChat\|^2\Delta_2 }{1-\Delta_2 \|\bfChat\|} \|\bfm\|+\frac{\|\bfChat\|\|\bfR^{-1}\|}{1-\Delta_2 \|\bfChat\|}(\|\bfChat\|\Delta_2+\delta_H )\|\bfy\|. 
\]
\end{enumerate}
\end{prop}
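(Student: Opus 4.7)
The plan is to mirror the proof of Proposition \ref{prop:loc} but exploit the fact that the prior \emph{precision}, rather than the covariance, enters directly into the posterior precision. From the usual linear--Gaussian identities one has $\bfChat^{-1} = \bfOmega + \bfH^T \bfR^{-1} \bfH$ and $\bfmhat = \bfChat(\bfOmega \bfm + \bfH^T \bfR^{-1} \bfy)$, with the analogous formulas holding for the localized quantities. This avoids the detour through $\|\bfC^{-1} - \bfC_\tloc^{-1}\|$ that produced the extra factor $\|\bfC^{-1}\|^2 \delta_C / (1 - \delta_C \|\bfC^{-1}\|)$ in Proposition \ref{prop:loc}; here the perturbation of the posterior precision is simply a clean sum of a prior piece and an observation piece.

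For claim a), I would first write
\[
\bfChat^{-1} - \bfChat^{-1}_\tlocp = (\bfOmega - \bfOmega_\tloc) + (\bfH^T \bfR^{-1} \bfH - \bfH_\tloc^T \bfR^{-1} \bfH_\tloc),
\]
use \eqref{eq:deltaOmega} to bound the first summand by $\delta_\Omega$, and recycle the manipulation \eqref{tmp:HRH} from Proposition \ref{prop:loc} to bound the second summand by $(2\delta_H + \delta_H^2)\|\bfR^{-1}\|$. This yields $\|\bfChat^{-1} - \bfChat^{-1}_\tlocp\| \le \Delta_2$. Then Lemma \ref{lem:perturbinverse}, applied to $\bfA = \bfChat^{-1}$ with perturbation $\bfB = \bfChat^{-1}_\tlocp - \bfChat^{-1}$, delivers the bound on $\|\bfChat - \bfChat_\tlocp\|$. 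As a byproduct I would record the corollary $\|\bfChat_\tlocp\| \le \|\bfChat\|/(1 - \Delta_2 \|\bfChat\|)$, which is needed for claim b).

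For claim b), I would decompose the mean difference as
\[
\bfmhat_\tlocp - \bfmhat = (\bfChat_\tlocp \bfOmega_\tloc - \bfChat \bfOmega) \bfm + (\bfChat_\tlocp \bfH_\tloc^T - \bfChat \bfH^T) \bfR^{-1} \bfy,
\]
and then split each bracketed factor by the usual add-and-subtract trick: $\bfChat_\tlocp \bfOmega_\tloc - \bfChat \bfOmega = \bfChat_\tlocp(\bfOmega_\tloc - \bfOmega) + (\bfChat_\tlocp - \bfChat)\bfOmega$, and analogously for the observation factor. Each piece is then estimated using $\|\bfOmega_\tloc - \bfOmega\| \le \delta_\Omega$, $\|\bfH_\tloc - \bfH\| \le \delta_H$, the norm bound on $\bfChat_\tlocp$ from claim a), and the covariance perturbation bound itself. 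Summing gives exactly the two-term estimate in the statement, with the prior coefficient $(\|\bfChat\|\delta_\Omega+\|\bfOmega\|\|\bfChat\|^2\Delta_2)/(1-\Delta_2\|\bfChat\|)$ and the data coefficient $\|\bfChat\|\|\bfR^{-1}\|(\|\bfChat\|\Delta_2+\delta_H)/(1-\Delta_2\|\bfChat\|)$.

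The argument is not conceptually deep; it is a telescoping of norm inequalities, and the heavy lifting has already been done in Lemma \ref{lem:perturbinverse} and in the estimate \eqref{tmp:HRH} carried out inside Proposition \ref{prop:loc}. The one place that requires care is bookkeeping: the factor $1/(1-\Delta_2\|\bfChat\|)$ arising from the covariance bound must be propagated consistently through the mean estimate, and in particular I would use the tight bound $\|\bfChat_\tlocp\| \le \|\bfChat\|/(1-\Delta_2\|\bfChat\|)$ rather than the looser $\|\bfChat_\tlocp\| \le \|\bfChat\|+\|\bfChat-\bfChat_\tlocp\|$ so that the final estimate matches the clean form displayed in the proposition.
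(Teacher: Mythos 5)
Your proposal is correct and follows essentially the same route as the paper's proof: bound $\|\bfChat^{-1}-\bfChat^{-1}_\tlocp\|$ by splitting into the prior-precision term (bounded by $\delta_\Omega$) and the observation term (recycling \eqref{tmp:HRH}), invert via Lemma \ref{lem:perturbinverse}, and then control the mean difference through the same two-factor decomposition with add-and-subtract. The only cosmetic remark is that the two bounds on $\|\bfChat_\tlocp\|$ you distinguish at the end coincide once the covariance perturbation estimate is substituted, so there is no real choice to make there.
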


\begin{proof}
The proofs are similar to the proofs of Proposition \ref{prop:loc}. 
To prove a), we follow the proof of Proposition \ref{prop:loc} a)
and find that $\|\bfOmega-\bfOmega_\tloc\|\leq \delta_\Omega$,
and 
\eqref{tmp:HRH}
\[
\|\bfH^T \bfR^{-1} \bfH-\bfH^T_\tloc \bfR^{-1} \bfH_\tloc\|\leq (2\delta_H+\delta_H^2) \|\bfR^{-1}\|. 
\]
Therefore
\[
\|(\bfOmega+\bfH^T \bfR^{-1} \bfH)-(\bfOmega_\tloc+\bfH^T_\tloc \bfR^{-1} \bfH_\tloc)\|\leq \delta_\Omega+(2\delta_H+\delta_H^2) \|\bfR^{-1}\|=:\Delta_2.
\]
We apply Lemma \ref{lem:perturbinverse} to $\bfChat=(\bfOmega+\bfH^T \bfR^{-1} \bfH)^{-1}$, 
\[
\|\bfChat-\bfChat_\tlocp\|\leq \frac{\|\bfChat\|^2\Delta_2}{1-\Delta_2 \|\bfChat\|}.
\]
As a consequence, we have that $\|\bfChat_\tlocp\|\leq \frac{\|\bfChat\|}{1-\Delta_2 \|\bfChat\|}$.

To prove claim b), we write
\[
\bfmhat_\tlocp-\bfmhat=(\bfChat_\tlocp \bfOmega_\tloc-\bfChat \bfOmega)\bfm+(\bfChat_\tlocp\bfH_\tloc^T-\bfChat\bfH^T)\bfR ^{-1}\bfy.
\]
The norm of the matrix $(\bfChat_\tlocp\bfH_\tloc^T-\bfChat\bfH^T)\bfR ^{-1}$ 
can be bounded directly using claim a)
\begin{align*}
\|(\bfChat_\tlocp\bfH_\tloc^T-\bfChat\bfH^T)\bfR ^{-1}\|&\leq  \|\bfChat-\bfChat_{\tlocp}\|\|\bfH^T \bfR ^{-1}\|
+\|\bfChat_\tlocp\|\|\bfH_\tloc-\bfH\| \|\bfR^{-1}\|\\
&\leq  \frac{\|\bfChat\|\|\bfR^{-1}\|}{1-\Delta_2 \|\bfChat\|}(\|\bfChat\|\Delta_2+\delta_H ).
\end{align*}
The norm of the matrix $\bfChat_\tlocp \bfOmega_\tloc-\bfChat \bfOmega$ can be bounded by
\[
\|\bfChat_\tlocp \bfOmega_\tloc-\bfChat \bfOmega\|
\leq \|\bfOmega_\tloc-\bfOmega\|\|\bfChat_\tlocp\|+\|\bfChat_\tlocp-\bfChat\|\|\bfOmega\|.
\]
From part (a), we have
\[
\|\bfOmega_\tloc-\bfOmega\|\|\bfChat_\tlocp\|\leq \frac{ \|\bfChat\|\delta_\Omega }{1-\Delta_2 \|\bfChat\|},
\quad\|\bfChat_\tlocp-\bfChat\|\|\bfOmega\|\leq \frac{\|\bfOmega\|\|\bfChat\|^2\Delta_2}{1-\Delta_2 \|\bfChat\|}.
\]
In summary:
\[
\|\bfmhat_\tlocp-\bfmhat\|\leq  \frac{ \|\bfChat\|\delta_\Omega+\|\bfOmega\|\|\bfChat\|^2\Delta_2 }{1-\Delta_2 \|\bfChat\|} \|\bfm\|+\frac{\|\bfChat\|\|\bfR^{-1}\|}{1-\Delta_2 \|\bfChat\|}(\|\bfChat\|\Delta_2+\delta_H )\|\bfy\|. 
\]
\end{proof}

\section{Convergence rates for Gibbs samplers}
\label{sec:proofsGibbs}
In this section, we prove Theorems \ref{thm:gsnsimple} and \ref{thm:gsnsimplePrecision}.

\subsection{Review of Gibbs and Gauss-Seidel}
\label{sec:GibbsGauss}
Our strategy for proving these theorems relies on the 
Gauss-Seidel operator associated  with the Gibbs sampler. 
The connection between the Gibbs sampler and the 
Gauss-Seidel operator is  well documented \cite{GS89, RS97,GG01},
but we briefly review it here so that our paper can be read and understood independently.  

We consider sampling $\bfx\sim\mathcal{N}(\bfm,\bfOmega^{-1})$ using the Gibbs sampler of block size $q$, assuming $\bfOmega$ is of dimension $mq\times mq$. We will use $\bfOmega_{i,j}$ to denote its $(i,j)$-$q\times q$ sub-block, which should not be confused with the matrix entry $[\bfOmega]_{i,j}$.  We say an $mq\times mq$ matrix $\bfOmega$ is block-strictly-lower-triangular (BSLT), if $\bfOmega_{i,j}$ is nonzero only if $i>j$. We define block-diagonal and block-strictly-upper-triangular (BSUT) in an analogous way. 
Let $\bfOmega=\lowO+\diagO+\upO$ be the 
BSLT + block-diagonal + BSUT
decomposition of the precision matrix $\bfOmega$. 
In other words, $\lowO$ consists of the blocks $\{\bfOmega_{i,j}\}_{i>j}$ of $\bfOmega$,  $\diagO$ consists of the diagonal blocks of $\bfOmega$, and $\upO=\lowO^T$. 

We investigate how the $j$-th coordinate is updated at the $k+1$-th iteration. 
For simplicity, let us write the state before this coordinate update as
\[
(\bfz_1,\cdots,\bfz_m)=(\bfx^{k+1}_1,\cdots,\bfx^{k+1}_{j-1},\bfx^{k}_{j},\cdots,\bfx^{k}_{m}).
\]
Then $\log(p(\bfz))$, ignoring a constant term, can be written as
\begin{align*}
-\frac{1}{2}(\bfz-\bfm)^T \bfOmega (\bfz-\bfm)&=-\frac{1}{2}\sum_{i,i'} (\bfz_i-\bfm_i)^T\bfOmega_{i,i'} (\bfz_{i'}-\bfm_{i'})\\
&= -\frac{1}{2} (\bfz_j-\bfm_j)^T \bfOmega_{j,j}(\bfz_j-\bfm_j)-\frac{1}{2} (\bfz_j-\bfm_j)^T\sum_{i\neq j}\bfOmega_{j,i}(\bfz_{i}-\bfm_{i})\\
&\quad -\frac{1}{2} \sum_{i\neq j}(\bfz_{i}-\bfm_{i})^T \bfOmega_{i,j} (\bfz_j-\bfm_j)+F_j'(\bfz)\\
&=-\frac12 (\bfz_j-\bfm_j')^T\bfOmega_{j,j}(\bfz_j-\bfm_j')^T+F_j(\bfz).
\end{align*}
Here $F_j$ and $F_j'$ are functions independent of $\bfz_j$, and 
\[
\bfm'_j:=\bfm_j-\sum_{i\neq j} \bfOmega_{j,j}^{-1}\bfOmega_{j,i}(\bfz_i-\bfm_i)=\bfm_j-\sum_{i<j} \bfOmega_{j,j}^{-1}\bfOmega_{j,i}(\bfx^{k+1}_i-\bfm_i)-\sum_{i>j} \bfOmega_{j,j}^{-1}\bfOmega_{j,i}(\bfx^{k}_i-\bfm_i). 
\]
Using Bayes' formula,
we find that the probability of $\bfx^{k+1}_j$
conditioned on $\bfx^{k+1}_i$, $i<j$ and $\bfx^{k}_i$, $i>j$
is proportional to 
$ \exp(-\frac12 (\bfz_j-\bfm_j')^T\bfOmega_{j,j}(\bfz_j-\bfm_j'))$.
In other words, the updated coordinate has the distribution
\begin{equation}
\label{tmp:xstep}
\bfx^{k+1}_{j}\sim \mathcal{N}\left(\bfm_j-\sum_{i<j} \bfOmega_{j,j}^{-1}\bfOmega_{j,i}(\bfx^{k+1}_i-\bfm_i)-\sum_{i>j} \bfOmega_{j,j}^{-1}\bfOmega_{j,i}(\bfx^{k}_i-\bfm_i),\bfOmega_{j,j}^{-1} \right).
\end{equation}
One way to obtain a sample of this distribution is to find the solution, $\bfx^{k+1}_j$, 
of the linear equation
\begin{equation}
\label{tmp:line}
\bfOmega_{j,j} (\bfx^{k+1}_{j}-\bfm_j)+\sum_{i<j} \bfOmega_{j,i} (\bfx^{k+1}_{i}-\bfm_i)+\sum_{i>j} \bfOmega_{j,i} (\bfx^{k}_{i}-\bfm_i)=\xi^{k+1}_{j}
\end{equation}
where $\xi^{k+1}_{j}$ is an independent sample from $\mathcal{N}(0, \bfOmega_{j,j})$.

Let $\bfOmega=\lowO+\diagO+\upO$ be the BSLT + 
block-diagonal + BSUT decomposition of the precision matrix $\bfOmega$.
If we concatenate  \eqref{tmp:line} for all coordinates $j$, 
the equation becomes
\[
(\lowO+\diagO)(\bfx^{k+1}-\bfm)+ \upO (\bfx^k-\bfm)=\xi^{k+1},
\]
where $\xi^{k+1}$ is the concatenation of $\xi^{k+1}_j$, 
and distributed as $\xi^{k+1}\sim \mathcal{N}(0, \diagO)$. 
An equivalent representation is 
\begin{equation}
\label{tmp:covariance1}
(\bfx^{k+1}-\bfm)=-(\lowO+\diagO)^{-1}\upO  (\bfx^k-\bfm)+(\lowO+\diagO)^{-1}\xi^{k+1}.
\end{equation}
The correlation between two consecutive iterations then is determined by 
\[
	\Gauss:=-(\lowO+\diagO)^{-1}\upO.
\]	 
The spectral radius of $\Gauss$ decides how fast  the Gibbs sampler converges in $l_2$ norm. 

The matrix $\Gauss$ is known as the ``Gauss-Seidel operator,''
which is also used in the iterative solution of linear equations. 
Recall that 
$\mathcal{N}(0, \bfC)$ is the invariant distribution of 
$\bfx^k-\bfm$.
By comparing covariances on both sides of \eqref{tmp:covariance1}, we find that
\begin{equation}
\label{eqn:invariant}
\bfC=\Gauss \bfC \Gauss^T+(\lowO+\diagO)^{-1} \diagO (\lowO+\diagO)^{-T}.
\end{equation}
It follows that $\bfC\preceq \Gauss \bfC \Gauss^T$, which in turn implies that the spectral radius of $\Gauss$ is less than $1$, which implies convergence.
Here and below, for two symmetric matrices $\bfA$ and $\bfB$, we use $\bfA\preceq \bfB$ to indicate that $\bfB-\bfA$ is positive semidefinite. However, in order to show  the convergence rate is dimension-independent, we need to exploit the banded structure of $\bfC$ or~$\bfOmega$, which will be the purpose of the next section.

\subsection{Gauss-Seidel with localized structures}
First we need two estimates.

\begin{lem}
\label{lem:matrixbasic}
For any positive definite $qm\times qm $ matrix $\bfC$, denote its maximum eigenvalue as $\lambda_\text{max}$, its minimum eigenvalue as $\lambda_\text{min}$, its condition number as $\Cond=\lambda_\text{max}/\lambda_\text{min}$, its inverse $\bfOmega=\bfC^{-1}$. Then the $q\times q$ blocks satisfy:
\begin{equation}
\label{eq:Inequalities}
\lambda_\text{min} \bfI\preceq \bfC_{i,i}\preceq \lambda_\text{max} \bfI,\quad
\lambda^{-1}_\text{max}\bfI\preceq \bfOmega_{i,i}\preceq \lambda^{-1}_\text{min} \bfI,
\end{equation}
\begin{equation}
\label{eq:Inequalities2}
\|\bfOmega^{-1/2}_{i,i}\bfOmega_{i,j}\bfOmega^{-1/2}_{j,j}\|\leq 1-\Cond^{-1},\quad
\|\bfC^{-1/2}_{i,i}\bfC_{i,j}\bfC^{-1/2}_{j,j}\|\leq 1-\Cond^{-1}.
\end{equation}
 
\end{lem}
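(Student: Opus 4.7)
The plan is to handle the diagonal-block bounds \eqref{eq:Inequalities} first by a direct min-max argument on principal submatrices, then reduce the off-diagonal bound \eqref{eq:Inequalities2} to the smallest eigenvalue of a normalized $2\times 2$ block, which is in turn controlled by a Rayleigh quotient argument. For \eqref{eq:Inequalities}, fix a unit vector $v\in\reals^q$ and let $\tilde v\in\reals^{qm}$ be its embedding that is zero outside block $i$. Then $v^T\bfC_{i,i}v=\tilde v^T\bfC\tilde v\in[\lambda_\text{min},\lambda_\text{max}]$ since $\|\tilde v\|=1$, giving $\lambda_\text{min}\bfI\preceq \bfC_{i,i}\preceq \lambda_\text{max}\bfI$. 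The bound for $\bfOmega_{i,i}$ follows by the same embedding argument applied to $\bfOmega$, whose eigenvalues are $\{\lambda_\text{max}^{-1},\dots,\lambda_\text{min}^{-1}\}$.

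For \eqref{eq:Inequalities2}, assuming $i\neq j$, I would form the $2q\times 2q$ principal submatrix
\[
M=\begin{pmatrix}\bfC_{i,i}&\bfC_{i,j}\\\bfC_{j,i}&\bfC_{j,j}\end{pmatrix},
\]
whose spectrum also lies in $[\lambda_\text{min},\lambda_\text{max}]$ by the min-max principle. Setting $D=\text{diag}(\bfC_{i,i}^{1/2},\bfC_{j,j}^{1/2})$, the conjugation $N:=D^{-1}MD^{-1}$ has the block form $N=\begin{pmatrix}\bfI & A\\ A^T & \bfI\end{pmatrix}$ with $A=\bfC_{i,i}^{-1/2}\bfC_{i,j}\bfC_{j,j}^{-1/2}$. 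A short calculation using an eigenvector $(u,v)^T$ shows that if $\mu$ is an eigenvalue of $N$ then $AA^T u=(\mu-1)^2 u$, so the eigenvalues of $N$ are exactly $1\pm \sigma_k(A)$. In particular $\lambda_\text{min}(N)=1-\|A\|$, and it suffices to show $\lambda_\text{min}(N)\geq \Cond^{-1}$.

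This final step is a one-line Rayleigh quotient estimate: for any unit $w\in\reals^{2q}$,
\[
w^T N w=(D^{-1}w)^T M (D^{-1}w)\geq \lambda_\text{min}(M)\,\|D^{-1}w\|^2 \geq \lambda_\text{min}\cdot \|D\|^{-2} \geq \lambda_\text{min}/\lambda_\text{max}=\Cond^{-1},
\]
where the bound $\|D\|^2\leq \lambda_\text{max}$ comes from the already-established \eqref{eq:Inequalities}. Hence $\|A\|\leq 1-\Cond^{-1}$. The bound for the $\bfOmega$ blocks is obtained by the identical argument with $\bfOmega$ replacing $\bfC$, using that $\bfOmega$ has the same condition number. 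The only substantive step is recognizing that after the conjugation by $D^{-1}$ the off-diagonal block norm equals $1-\lambda_\text{min}(N)$; with that reduction in hand, the loss of exactly one factor of $\Cond$ in the Rayleigh quotient is automatic, so I do not anticipate a serious obstacle.
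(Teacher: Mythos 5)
Your proposal is correct and follows essentially the same route as the paper: the diagonal-block bounds come from the identical embedding/Rayleigh-quotient argument, and the off-diagonal bound in both cases reduces to evaluating the quadratic form of $\bfC$ (or $\bfOmega$) on a vector supported on blocks $i,j$ built from normalized singular vectors of $\bfC_{i,i}^{-1/2}\bfC_{i,j}\bfC_{j,j}^{-1/2}$, with the diagonal bounds supplying the factor $\Cond^{-1}$. Your detour through the spectrum $1\pm\sigma_k(A)$ of the normalized two-block matrix $N$ is just a clean repackaging of the paper's direct computation $\bfv^T\bfOmega\bfv=2(1-\|\Gamma_{i,j}\|)$, so there is nothing to flag.
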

\begin{proof}
Let $\lambda_i$ be an eigenvalue of $\bfC_{i,i}$ and $v\in \reals^q$ be one of its eigenvectors with norm $1$. Let $\textbf{v}$ be the $\reals^{qm}$ vector with its $i$-th block being $v$. Then 
\[
\lambda_i=v^T\bfC_{i,i}v= \textbf{v}^T\bfC\textbf{v} \in [\lambda_\text{min},\lambda_\text{max}]. 
\] 
The left inequality in~(\ref{eq:Inequalities}) follows. The right inequality of equation~(\ref{eq:Inequalities}) can be derived in a similar fashion.

Let $\bfx$ and $\bfy$ be the left and right singular vectors corresponding 
to the largest singular value of 
$\Gamma_{i,j}=\bfOmega^{-1/2}_{i,i}\bfOmega_{i,j}\bfOmega^{-1/2}_{j,j}$. 
The vectors $\bfx$ and $\bfy$ are of dimension $q$, 
have norm one,  $\|\bfx\|_2=\|\bfy\|_2=1$, 
and $\bfx^T \Gamma_{i,j}\bfy =\|\Gamma_{i,j}\|$. 
Now consider  an $qm$ dimensional vector $\bfv$, 
where its $i$-th block is $\bfOmega_{i,i}^{-1/2}\bfx$,  
its $j$-th block is $-\bfOmega_{j,j}^{-1/2}\bfy$, 
and all other blocks are zero.  
Then 
\begin{align*}
\bfv^T \bfOmega\bfv&=\bfx^T\bfOmega_{i,i}^{-1/2}\bfOmega_{i,i}\bfOmega_{i,i}^{-1/2}\bfx-2\bfx^T\bfOmega_{i,i}^{-1/2}\bfOmega_{i,j}\bfOmega_{j,j}^{-1/2}\bfy+
\bfy^T\bfOmega_{j,j}^{-1/2}\bfOmega_{j,j}\bfOmega_{j,j}^{-1/2}\bfy\\
&=2-2 \bfx^T \Gamma_{i,j}\bfy=2(1-\|\Gamma_{i,j}\|). 
\end{align*}
On the other hand,  
\[
\|\bfv\|^2=\|\bfOmega_{i,i}^{-1/2}\bfx\|^2+\|\bfOmega_{j,j}^{-1/2}\bfy\|^2\geq 2\lambda_\text{min}.
\]
Thus 
\[
2(1-\|\Gamma_{i,j}\|)=2\bfv^T \bfOmega\bfv\geq 2\lambda_\text{min} \lambda_\text{max}^{-1}=2\Cond^{-1}.
\]
The left inequality in~(\ref{eq:Inequalities2}) follows. Since the derivation above uses nothing of $\bfOmega$ other than its eigenvalues, so the right inequality in~(\ref{eq:Inequalities2}) also holds.
\end{proof}

For our proofs below,
we need the following bound for an operator norm. 
For $q=1$, this bound appeared in \cite{BL08} as inequality (A2).
Note that this bound is well-suited for block-sparse $\bfA$ 
since then the right hand side consists of only a few terms.  

\begin{lem}
\label{lem:norm}
For any $qm\times qm$ matrix $\bfA$, the following holds
\[
\|\bfA\| \leq \left(\max_{i=1,\cdots,m} \sum_{j=1}^m \|\bfA_{i,j}\|\right)^{1/2} \left(\max_{i=1,\cdots,m} \sum_{j=1}^m \|\bfA_{j,i}\|\right)^{1/2}.
\]
\end{lem}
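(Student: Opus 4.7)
The plan is to adapt the classical Schur test bound $\|A\|_{op} \leq \sqrt{\|A\|_1 \|A\|_\infty}$ from the scalar case to the block case, with entries $|a_{ij}|$ replaced by operator norms $\|\bfA_{i,j}\|$ of the blocks. The key observation is that although the blocks are matrices rather than scalars, the triangle inequality together with submultiplicativity of the operator norm gives exactly the scalar-style estimate needed to run the Schur argument block-by-block.

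First I would fix an arbitrary vector $\bfv \in \mathbb{R}^{qm}$ and partition it into $m$ sub-vectors $\bfv_1,\ldots,\bfv_m \in \mathbb{R}^q$ corresponding to the block structure of $\bfA$. Writing $(\bfA\bfv)_i = \sum_j \bfA_{i,j}\bfv_j$ and applying the triangle inequality and $\|\bfA_{i,j}\bfv_j\| \leq \|\bfA_{i,j}\|\,\|\bfv_j\|$, I get
\[
\|(\bfA\bfv)_i\| \;\leq\; \sum_j \|\bfA_{i,j}\|\,\|\bfv_j\| \;=\; \sum_j \|\bfA_{i,j}\|^{1/2}\,\bigl(\|\bfA_{i,j}\|^{1/2}\|\bfv_j\|\bigr).
\]
Applying Cauchy–Schwarz to this sum separates the $i$-dependence from a weighted $\|\bfv_j\|^2$ sum, yielding
\[
\|(\bfA\bfv)_i\|^2 \;\leq\; \Bigl(\sum_j \|\bfA_{i,j}\|\Bigr)\Bigl(\sum_j \|\bfA_{i,j}\|\,\|\bfv_j\|^2\Bigr) \;\leq\; R\sum_j \|\bfA_{i,j}\|\,\|\bfv_j\|^2,
\]
where $R := \max_i \sum_j \|\bfA_{i,j}\|$ is the row-sum quantity in the lemma.

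Next I would sum over $i$ from $1$ to $m$, use $\|\bfA\bfv\|^2 = \sum_i \|(\bfA\bfv)_i\|^2$, and swap the order of summation. The inner sum $\sum_i \|\bfA_{i,j}\|$ is bounded uniformly in $j$ by $C := \max_j \sum_i \|\bfA_{i,j}\|$, which (after relabeling) matches the column-sum factor $\max_i \sum_j \|\bfA_{j,i}\|$ stated in the lemma. This gives
\[
\|\bfA\bfv\|^2 \;\leq\; R\sum_j \|\bfv_j\|^2 \sum_i \|\bfA_{i,j}\| \;\leq\; R\,C \sum_j \|\bfv_j\|^2 \;=\; R\,C\,\|\bfv\|^2.
\]
Taking the supremum over unit $\bfv$ and then square roots yields $\|\bfA\| \leq \sqrt{R\,C}$, which is exactly the claimed inequality.

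There is no real obstacle here: the only subtlety worth flagging is that the block version requires submultiplicativity $\|\bfA_{i,j}\bfv_j\| \leq \|\bfA_{i,j}\|\|\bfv_j\|$ (which holds for the $\ell_2$ operator norm) in order to reduce the block estimate to a scalar Schur test on the non-negative ``majorant'' matrix with entries $\|\bfA_{i,j}\|$. Everything else is a direct application of triangle inequality, Cauchy–Schwarz, and Fubini on finite sums.
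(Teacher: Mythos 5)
Your proof is correct, but it takes a genuinely different route from the paper's. You run the block Schur test directly: triangle inequality plus submultiplicativity reduces everything to the nonnegative majorant matrix with entries $\|\bfA_{i,j}\|$, and then Cauchy--Schwarz and a swap of summation give $\|\bfA\bfv\|^2\le R\,C\,\|\bfv\|^2$ in one pass, for arbitrary $\bfA$. The paper instead first proves the one-factor bound $\|\bfA\|\le\max_i\sum_j\|\bfA_{i,j}\|$ for \emph{symmetric} $\bfA$ by an eigenvector argument (pick the block $i_*$ of the eigenvector with largest norm and read off the inequality from the $i_*$-th block row), and then handles general $\bfA$ via $\|\bfA\|=\|\bfA^T\bfA\|^{1/2}$, bounding the block row sums of $\bfP=\bfA^T\bfA$ by $\|\bfP_{i,j}\|\le\sum_k\|\bfA_{k,i}\|\|\bfA_{k,j}\|$. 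Your argument is more self-contained and avoids the detour through $\bfA^T\bfA$; the paper's route has the side benefit of isolating the sharper single-factor estimate for symmetric matrices as an explicit intermediate step. Both yield exactly the stated two-factor bound, and your identification of $\max_j\sum_i\|\bfA_{i,j}\|$ with the column-sum factor $\max_i\sum_j\|\bfA_{j,i}\|$ in the lemma is the right reading of the indices.
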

\begin{proof}
First we show the claim for symmetric $\bfA=\bfA^T$.
In this case, the bound for the norm of $\bfA$ is
\begin{equation}
\label{tmp:symmetric}
\|\bfA\| \leq \max_{i=1,\cdots,m} \sum_{j=1}^m \|\bfA_{i,j}\|. 
\end{equation}
Let $\bfv=[\bfv_1,\ldots, \bfv_m]\in \reals^{qm}$ be an eigenvector so that $\lambda \bfv=\bfA\bfv$ while $|\lambda|=\|\bfA\|$. 
Suppose $\|\bfv_{i_*}\|=\max_i \{\|\bfv_i\|\}$, then 
\[
|\lambda|\|\bfv_{i_*}\|=\|\lambda\bfv_{i_*}\|=\left\|\sum_{j=1}^m \bfA_{i_*,j}\bfv_{j}\right\|\leq \sum_{j=1}^m \|\bfA_{i_*,j}\|\|\bfv_{j}\|\leq \|\bfv_{i_*}\| \sum_{j=1}^m \|\bfA_{i_*,j}\|.
\]
This leads to $\|\bfA\|\leq  \sum_{j=1}^m \|\bfA_{i_*,j}\|$ and hence \eqref{tmp:symmetric}. 

For a general $\bfA$, note that $\|\bfA\|=\|\bfA^T \bfA\|^{1/2}$. 
The matrix $\bfP=\bfA^T\bfA$ is symmetric, 
and its blocks are 
\[
\bfP_{i,j}=\sum_k \bfA_{k,i}^T\bfA_{k,j}.
\]
Applying \eqref{tmp:symmetric} to $\bfP$, we obtain
\begin{align*}
\|\bfP\|\leq \max_i \sum_{j=1}^m\|\bfP_{i,j}\|&\leq  \max_i \sum_j \sum_k \|\bfA_{k,i}^T\bfA_{k,j}\|\\
&\leq  \max_i \sum_k \sum_j \|\bfA_{k,i}\|\|\bfA_{k,j}\|\\
&\leq  \max_i \sum_k  \|\bfA_{k,i}\| \sum_j\|\bfA_{k,j}\|\\
&\leq \left(\max_i \sum_k  \|\bfA_{k,i}\|\right)\left(\max_i \sum_j  \|\bfA_{i,j}\|\right).
\end{align*}
This leads to our general claim. 
\end{proof}

Now we are at the position to establish bounds for the Gauss Seidel operator.
\begin{lem}
\label{lem:sparseprecision}
If $\bfOmega$ is block-tridiagonal with $\Cond$ being its condition number, then 
\[
 \Gauss \bfC\Gauss^T\preceq \frac{\Cond(1-\Cond^{-1})^2}{1+\Cond(1-\Cond^{-1})^2} \bfC. 
 \]
\end{lem}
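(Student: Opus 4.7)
My plan is to leverage the Lyapunov-type identity \eqref{eqn:invariant} and reformulate the desired inequality as a comparison between two positive definite matrices that can be analyzed block by block. Set $\bfL := \lowO + \diagO$ so that $\bfOmega = \bfL + \upO$ and $\Gauss = -\bfL^{-1}\upO$, and let $M := \bfL^{-1}\diagO\bfL^{-T}$, which is the "noise covariance" that appears in \eqref{eqn:invariant}. The identity $\bfC = \Gauss\bfC\Gauss^T + M$ lets me rewrite the target bound $\Gauss\bfC\Gauss^T \preceq \alpha\bfC$ (with $\alpha = \Cond(1-\Cond^{-1})^2/(1+\Cond(1-\Cond^{-1})^2)$) equivalently as $(1-\alpha)\bfC \preceq M$. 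Taking inverses, this is the same as
\[
M^{-1} \preceq \bigl(1 + \Cond(1-\Cond^{-1})^2\bigr)\,\bfOmega.
\]

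The next step is to compute $M^{-1}$ explicitly. Since $M^{-1} = \bfL^T \diagO^{-1} \bfL = (\diagO + \upO)\diagO^{-1}(\diagO + \lowO)$, straightforward block-multiplication yields $M^{-1} = \bfOmega + \upO\diagO^{-1}\lowO$. So the problem reduces to the cleaner estimate
\[
\upO\diagO^{-1}\lowO \preceq \Cond(1-\Cond^{-1})^2\,\bfOmega.
\]
Now I exploit the block-tridiagonal structure: because $\upO$ has only the super-diagonal blocks $\bfOmega_{i,i+1}$ and $\lowO = \upO^T$, the product $\upO\diagO^{-1}\lowO$ is \emph{block-diagonal}, with $(i,i)$-block equal to $\bfOmega_{i,i+1}\bfOmega_{i+1,i+1}^{-1}\bfOmega_{i,i+1}^T$ for $i < m$ and zero for $i=m$.

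To bound each such block, I apply the left inequality of \eqref{eq:Inequalities2} in Lemma \ref{lem:matrixbasic} to the adjacent pair $(i,i+1)$: letting $\bfB_{i,i+1} := \bfOmega_{i,i}^{-1/2}\bfOmega_{i,i+1}\bfOmega_{i+1,i+1}^{-1/2}$, we have $\|\bfB_{i,i+1}\| \leq 1-\Cond^{-1}$, so $\bfB_{i,i+1}\bfB_{i,i+1}^T \preceq (1-\Cond^{-1})^2\bfI$ and hence
\[
\bfOmega_{i,i+1}\bfOmega_{i+1,i+1}^{-1}\bfOmega_{i,i+1}^T = \bfOmega_{i,i}^{1/2}\bfB_{i,i+1}\bfB_{i,i+1}^T\bfOmega_{i,i}^{1/2} \preceq (1-\Cond^{-1})^2\,\bfOmega_{i,i}.
\]
Summing over blocks gives $\upO\diagO^{-1}\lowO \preceq (1-\Cond^{-1})^2\diagO$. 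It remains to compare $\diagO$ with $\bfOmega$: by \eqref{eq:Inequalities} applied to $\bfOmega$, each $\bfOmega_{i,i} \preceq \lambda_{\max}(\bfOmega)\bfI$, so $\diagO \preceq \lambda_{\max}(\bfOmega)\bfI \preceq \Cond\,\lambda_{\min}(\bfOmega)\bfI \preceq \Cond\,\bfOmega$. Chaining these gives exactly the bound we need.

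The main conceptual step is step one — recognizing that the invariance identity \eqref{eqn:invariant} converts the Gauss-Seidel contraction bound into an equivalent statement purely about $\bfOmega$ and its block decomposition, after which all of the work is local (block-by-block). The tightest point of the argument is the bound $\diagO \preceq \Cond\,\bfOmega$: this is where the assumption that $\Cond$ is moderate becomes essential, and where the factor $\Cond$ in the final rate originates. Everything else is a direct application of Lemma \ref{lem:matrixbasic} together with the explicit sparsity pattern of $\upO\diagO^{-1}\lowO$ in the block-tridiagonal setting.
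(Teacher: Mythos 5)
Your proof is correct, and while it starts from the same invariance identity \eqref{eqn:invariant} as the paper, the execution is genuinely different. The paper bounds $\Gauss\bfC\Gauss^T$ against the noise covariance $M=(\lowO+\diagO)^{-1}\diagO(\lowO+\diagO)^{-T}$ from above, by writing $\upO\bfC\upO^T=\diagO^{1/2}(\diagO^{-1/2}\upO\diagO^{-1/2})(\diagO^{1/2}\bfC\diagO^{1/2})(\diagO^{-1/2}\upO\diagO^{-1/2})^T\diagO^{1/2}$ and controlling the middle factor via the global operator-norm bound $\|\diagO^{-1/2}\upO\diagO^{-1/2}\|\leq 1-\Cond^{-1}$ (which requires Lemma \ref{lem:norm}) together with $\|\diagO\|\|\bfC\|\leq\Cond$. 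You instead pass to the inverse, compute $M^{-1}=\bfOmega+\upO\diagO^{-1}\lowO$ exactly, and observe that in the block-tridiagonal case the correction term is block-diagonal with blocks $\bfOmega_{i,i+1}\bfOmega_{i+1,i+1}^{-1}\bfOmega_{i,i+1}^T$, each of which is bounded by $(1-\Cond^{-1})^2\bfOmega_{i,i}$ directly from Lemma \ref{lem:matrixbasic}; the factor $\Cond$ then enters cleanly through $\diagO\preceq\Cond\,\bfOmega$. Your route buys a few things: it bypasses Lemma \ref{lem:norm} entirely, it never manipulates $\bfC$ except through its condition number, and it isolates exactly where each factor in the constant $\Cond(1-\Cond^{-1})^2$ originates. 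The paper's route is slightly more uniform in that the same template (bound $\upO\bfC\upO^T\preceq\gamma\diagO$, then conjugate) is reused with modifications in the harder banded-covariance case (Lemma \ref{lem:sparsecov}), where the analogue of your explicit inverse computation would be less clean. Both arguments land on the identical constant.
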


\begin{proof}
Since $\bfOmega$ is block-tridiagonal, 
$\upO$ has at most one  nonzero block in each row and each column, 
and, likewise, $\diagO^{-1/2} \upO\diagO^{-1/2}$ has 
at most one  nonzero block in each row and each column. 
Therefore, by Lemma \ref{lem:norm}
\[
\left\|\diagO^{-1/2} \upO\diagO^{-1/2}\right\|\leq \max_{i=1,\ldots,m-1} \left\|\bfOmega_{i,i}^{-1/2}\bfOmega_{i,i+1}\bfOmega^{-1/2}_{i+1,i+1}\right\|, 
\] 
which by Lemma \ref{lem:matrixbasic} is bounded by $1-\Cond^{-1}$. 

To continue, we look at the right hand side of \eqref{eqn:invariant}. 
We want to show that for some $\gamma>0$,
\begin{equation}
\label{eqn:noise}
\Gauss \bfC\Gauss^T\preceq  \gamma(\lowO+\diagO)^{-1}\diagO(\lowO+\diagO)^{-T}
\end{equation}
For this purpose, note that 
\[
\Gauss \bfC\Gauss^T=(\lowO+\diagO)^{-1}\upO\bfC \upO^T (\lowO+\diagO)^{-T},
\]
and that
\begin{align*}
\upO\bfC \upO^T=\diagO^{1/2}(\diagO^{-1/2} \upO\diagO^{-1/2}) (\diagO^{1/2} \bfC\diagO^{1/2} ) (\diagO^{-1/2} \upO\diagO^{-1/2})^T\diagO^{1/2}.
\end{align*}
Thus, in order to prove \eqref{eqn:noise}, 
it suffices to find a $\gamma$ such that 
\[
\left\|(\diagO^{-1/2} \upO\diagO^{-1/2}) (\diagO^{1/2} \bfC\diagO^{1/2} ) (\diagO^{-1/2} \upO\diagO^{-1/2})^T\right\|\leq \gamma. 
\]
By Lemma \ref{lem:matrixbasic}, this is straight forward,
since we have that
\begin{align*}
&\left\|(\diagO^{-1/2} \upO\diagO^{-1/2}) (\diagO^{1/2} \bfC\diagO^{1/2} ) (\diagO^{-1/2} \upO\diagO^{-1/2})^T\right\|\\
&\quad\leq (1-\Cond^{-1})^2\|\diagO\| \|\bfC\|\leq \Cond(1-\Cond^{-1})^2=:\gamma,
\end{align*}
since $\diagO$ is diagonal with blocks bounded in operator norm by $\lambda_\text{min}^{-1}$. 

Finally, we can plug \eqref{eqn:noise} into \eqref{eqn:invariant}, and find that 
\[
\bfC=\Gauss \bfC\Gauss^T+(\lowO+\diagO)^{-1}\diagO(\lowO+\diagO)^{-T}\succeq (\gamma^{-1}+1) \Gauss \bfC\Gauss^T. 
\]
\end{proof}

\begin{lem}
\label{lem:sparsecov}
If $\bfC$ is block-tridiagonal with $\Cond$ being its condition number, then 
\[
\Gauss \bfC\Gauss^T\preceq 
\frac{2(1-\Cond^{-1})^2\Cond^4}{1+2(1-\Cond^{-1})^2\Cond^4} \bfC. 
\]
\end{lem}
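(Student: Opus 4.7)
The plan is to follow the same scheme as in the proof of Lemma~\ref{lem:sparseprecision}: reducing via the identity $\Gauss\bfC\Gauss^T = (\lowO+\diagO)^{-1}\upO\bfC\upO^T(\lowO+\diagO)^{-T}$ and the invariance equation \eqref{eqn:invariant}, the claim amounts to the matrix inequality
\[
\upO\bfC\upO^T \preceq \gamma\diagO, \qquad \gamma := 2(1-\Cond^{-1})^2\Cond^4,
\]
from which $\Gauss\bfC\Gauss^T \preceq \frac{\gamma}{1+\gamma}\bfC$ follows exactly as in the final step of that earlier proof.

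The essential new difficulty, compared to Lemma~\ref{lem:sparseprecision}, is that when $\bfC$ rather than $\bfOmega$ is block-tridiagonal, the strict upper part $\upO$ of $\bfOmega=\bfC^{-1}$ is generically dense, so Lemma~\ref{lem:norm} cannot be applied to its blocks to cheaply produce the factor $(1-\Cond^{-1})$ as before. My plan is to instead extract this factor from a spectral estimate on $\bfOmega-\diagO = \upO+\lowO$. By Lemma~\ref{lem:matrixbasic}, $\bfOmega\preceq\lambda_{\min}^{-1}\bfI$ and $\diagO\succeq\lambda_{\max}^{-1}\bfI$, so the spectrum of $\diagO^{-1/2}\bfOmega\diagO^{-1/2}$ lies in $[\Cond^{-1},\Cond]$, which gives
\[
\|\diagO^{-1/2}(\upO+\lowO)\diagO^{-1/2}\| = \|\diagO^{-1/2}\bfOmega\diagO^{-1/2}-\bfI\| \leq \Cond-1 = (1-\Cond^{-1})\Cond.
\]

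I would then factor $\upO\bfC\upO^T = \diagO^{1/2} X (\diagO^{1/2}\bfC\diagO^{1/2}) X^T \diagO^{1/2}$ with $X := \diagO^{-1/2}\upO\diagO^{-1/2}$; using $\|\diagO^{1/2}\bfC\diagO^{1/2}\|\leq \|\diagO\|\|\bfC\|\leq \Cond$, the target inequality reduces to $\|X\|^2 \leq 2(1-\Cond^{-1})^2\Cond^3$. Split $X = \tfrac12(X+X^T) + \tfrac12(X-X^T)$: the symmetric part is directly controlled by the display above. The main obstacle is the antisymmetric part $X-X^T = \diagO^{-1/2}(\upO-\lowO)\diagO^{-1/2}$, whose operator norm is not a priori bounded by the spectrum of $\bfOmega$ alone. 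I would bound it by combining $\bfOmega\succeq 0$ (which forces $X+X^T\succeq -\bfI$) with the identity $(\upO-\lowO)^2 = (\upO+\lowO)^2 - 2(\upO\lowO+\lowO\upO)$, relating the antisymmetric contribution to the symmetric one at the cost of one extra factor of $\Cond$. The triangle inequality $\|X\| \leq \tfrac12\|X+X^T\| + \tfrac12\|X-X^T\|$ then contributes the factor $2$, and tracking the spectral factors yields $\gamma = 2(1-\Cond^{-1})^2\Cond^4$.
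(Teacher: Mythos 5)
Your reduction to the inequality $\upO\bfC\upO^T\preceq\gamma\diagO$ is correct and matches the skeleton of the paper's argument, and the spectral estimate $\|\diagO^{-1/2}(\upO+\lowO)\diagO^{-1/2}\|\leq\Cond-1$ is fine. The gap is in the step you yourself flag as the main obstacle: passing from a bound on the symmetric part $X+X^T$ to a bound on $X=\diagO^{-1/2}\upO\diagO^{-1/2}$ itself. The identity $(\upO-\lowO)^2=(\upO+\lowO)^2-2(\upO\lowO+\lowO\upO)$ does not close this loop, because $\upO\lowO+\lowO\upO=\upO\upO^T+\upO^T\upO$ is positive semidefinite and is exactly the quantity you are trying to control; rearranging only yields statements such as $\|\upO-\lowO\|^2\leq 4\|\upO\|^2$ and $2\|\upO\|^2\leq\|\upO+\lowO\|^2+\|\upO-\lowO\|^2$, which are circular, and no extra factor of $\Cond$ rescues this. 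The obstruction is real rather than technical: the map sending a symmetric matrix to its strictly (block-)upper-triangular part is not bounded in operator norm uniformly in the number of blocks $m$ (its norm grows like $\log m$), so $\|X\|$ cannot be controlled by $\|X+X^T\|$ and $\Cond$ alone. Tellingly, your sketch never uses the hypothesis that $\bfC$ is block-tridiagonal; if the argument went through, it would give a dimension-independent Gauss--Seidel rate for every well-conditioned Gaussian, which is precisely the general claim the paper avoids making.

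The missing idea, and the route the paper takes, is to avoid bounding $\upO$ (or $X$) altogether and instead work with the product $\bfC\upO$. The Woodbury identity gives $(\lowO+\diagO)^{-1}=(\bfI-\bfC\upO)^{-1}\bfC$, hence $\Gauss=(\bfI-\bfC\upO)^{-1}\bfC\upO$; and although $\upO$ is dense, the identity $\bfC\bfOmega=\bfI$ combined with block-tridiagonality of $\bfC$ forces $\bfC\upO$ to be block-bidiagonal, with $(\bfC\upO)_{i,i}=\bfC_{i,i-1}\bfOmega_{i-1,i}$ and $(\bfC\upO)_{i,i+1}=-\bfC_{i,i+1}\bfOmega_{i+1,i+1}$. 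Each of these blocks is bounded by $(1-\Cond^{-1})\Cond$ via Lemma~\ref{lem:matrixbasic}, so Lemma~\ref{lem:norm} gives $\|\bfC\upO\|\leq\sqrt{2}(1-\Cond^{-1})\Cond$, and comparing $\bfC\upO\bfC\upO^T\bfC$ against $\bfC\diagO\bfC$ produces $\gamma=2(1-\Cond^{-1})^2\Cond^4$. You would need to import this structural observation (or some other genuine use of the tridiagonality of $\bfC$) to complete your proof.
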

\begin{proof}
Since $\bfOmega^{-1}=\bfC$, we apply the Woodbury's formula to $(\lowO+\diagO)^{-1}=(\bfOmega-\upO)^{-1}$,
\[
(\lowO+\diagO)^{-1}=(\bfOmega-\upO)^{-1}=\bfC+ \bfC\upO(\bfI-\bfC\upO)^{-1}\bfC=(\bfI-\bfC\upO)^{-1}\bfC. 
\]
Consequentially, $\Gauss=(\bfI-\bfC\upO)^{-1}\bfC\upO$.

Next, we claim that  $\bfC\upO$ is BUT, 
and that only the blocks $(\bfC\upO)_{i,i},(\bfC\upO)_{i,i+1}$ are nonzero. 
To see it is BUT, recall that $\bfC$ is block-tridiagonal, 
$(\upO)_{i,j}=\bfOmega_{i,j}\unit_{i\leq j-1}$, 
\[
(\bfC\upO)_{i,j}=\bfC_{i,i-1} \bfOmega_{i-1,j}\unit_{i\leq j}+\bfC_{i,i} \bfOmega_{i,j}\unit_{i\leq j-1}+\bfC_{i,i+1} \bfOmega_{i+1,j}\unit_{i\leq j-2}. 
\]
Thus $(\bfC\upO)_{i,j}$ is nonzero only if $i\leq j$, 
i.e., $\bfC\upO$ is BUT. 
Moreover, if $j\geq i+2$, by the identity $\bfC\bfOmega=\bfI$, we have
\[
\mathbf{0}=(\bfC\bfOmega)_{i,j}=\bfC_{i,i-1} \bfOmega_{i-1,j}+\bfC_{i,i} \bfOmega_{i,j}+\bfC_{i,i+1} \bfOmega_{i+1,j}=(\bfC\upO)_{i,j},
\]
proving our claim.

Next, not that for $j=i$, we have
\[
(\bfC\upO)_{i,i}=\bfC_{i,i-1} \bfOmega_{i-1,i}\unit_{i\leq i}+\bfC_{i,i} \bfOmega_{i,i}\unit_{i\leq i-1}+\bfC_{i,i+1} \bfOmega_{i+1,i}\unit_{i\leq i-2}=\bfC_{i,i-1} \bfOmega_{i-1,i}.
\]
Likewise, for $j=i+1$, we have 
\[
\mathbf{0}=(\bfC\bfOmega)_{i,i+1}=\bfC_{i,i-1} \bfOmega_{i-1,i+1}+\bfC_{i,i} \bfOmega_{i,i+1}+\bfC_{i,i+1} \bfOmega_{i+1,i+1}=(\bfC\upO)_{i,i+1}+ \bfC_{i,i+1} \bfOmega_{i+1,i+1}.
\]
In other words, $(\bfC\upO)_{i,i+1}=-\bfC_{i,i+1} \bfOmega_{i+1,i+1}$.

Applying Lemma \ref{lem:norm}  leads to
\begin{align*}
\|\bfC\upO\|\leq \left(\max_{i=1,\cdots,m} \{ \|(\bfC\upO)_{i,i}\|+\|(\bfC\upO)_{i,i+1}\|\}\right)^{1/2} \left(\max_{i=1,\cdots,m}   \|(\bfC\upO)_{i,i}\|\right)^{1/2}.
\end{align*}
Applying Lemma \ref{lem:matrixbasic} to $(\bfC\upO)_{i,i}=\bfC_{i,i-1} \bfOmega_{i-1,i}$ gives 
\begin{align*}
\|(\bfC\upO)_{i,i}\|&\leq \|\bfC_{i,i}^{1/2}\|\|\bfC_{i,i}^{-1/2}\bfC_{i,i-1} \bfC_{i-1,i-1}^{-1/2}\|\|\bfC_{i-1,i-1}^{1/2}\|\|\bfOmega_{i-1,i-1}^{1/2}\|
\| \bfOmega_{i-1,i-1}^{-1/2}\bfOmega_{i-1,i}\bfOmega_{i,i}^{-1/2}\| \|\bfOmega_{i,i}^{1/2}\|\\
&\leq (1-\Cond^{-1})^2\Cond\leq  (1-\Cond^{-1})\Cond.
\end{align*}
Similarly, $(\bfC\upO)_{i,i+1}=-\bfC_{i,i+1} \bfOmega_{i+1,i+1}$,
which by Lemma \ref{lem:matrixbasic} implies that
\begin{align*}
\|(\bfC\upO)_{i,i+1}\|&\leq \|\bfC_{i,i}^{1/2}\|\|\bfC_{i,i}^{-1/2}\bfC_{i,i+1} \bfC_{i+1,i+1}^{-1/2}\|\|\bfC_{i+1,i+1}^{1/2}\|\|\bfOmega_{i+1,i+1}\|
\leq (1-\Cond^{-1})\Cond. 
\end{align*}
Consequentially, another application of Lemma \ref{lem:norm} implies that
\[
\|\bfC\upO\|\leq \sqrt{2}(1-\Cond^{-1})\Cond. 
\]

To continue,  
we again want to use \eqref{eqn:invariant} by exploiting relations like \eqref{eqn:noise}. 
We first note that
\begin{align*}
\Gauss \bfC\Gauss^T&=(\bfI-\bfC\upO)^{-1}\bfC\upO\bfC  \upO^T \bfC(\bfI-\bfC\upO)^{-T}, \\
(\lowO+\diagO)^{-1}\diagO(\lowO+\diagO)^{-T}&=(\bfI-\bfC\upO)^{-1}\bfC \diagO\bfC (\bfI-\bfC\upO)^{-T}. 
\end{align*}
Using $\|\bfC\upO\|\leq \sqrt{2}(1-\Cond^{-1})\Cond$, we have
\[
\bfC\upO\bfC  \upO^T \bfC\preceq 2(1-\Cond^{-1})^2\Cond^2 \lambda_\text{max} \bfI. 
\]
Moreover, 
\[
\bfC \diagO\bfC\succeq \lambda_\text{max}^{-1}\bfC \bfC \bfI\succeq \lambda_\text{max}^{-1}\lambda_\text{min}^{2}\bfI. 
\]
Consequentially, $\bfC\upO\bfC  \upO^T \bfC\preceq 2(1-\Cond^{-1})^2\Cond^4 \bfC \diagO\bfC$ and
\[
\Gauss \bfC\Gauss^T\preceq 2(1-\Cond^{-1})^2\Cond^4(\lowO+\diagO)^{-1}\diagO(\lowO+\diagO)^{-T}.
\]
Combining the above inequality with  \eqref{eqn:invariant} leads to
\[
\bfC=\Gauss \bfC\Gauss^T+(\lowO+\diagO)^{-1}\diagO(\lowO+\diagO)^{-T}\succeq ((2(1-\Cond^{-1})^2\Cond^4)^{-1}+1) \Gauss \bfC\Gauss^T. 
\]
\end{proof}

\subsection{Proofs of the main theorems} Armed with these results, the proofs for the main theorems follow from an elementary coupling argument.
\begin{proof}[Proof of Theorems \ref{thm:gsnsimple} and \ref{thm:gsnsimplePrecision}]
As discussed above, and illustrated in Section \ref{sec:GibbsGauss}, 
we can generate iterates from the Gibbs sampler with block-size $q$ by
solving the linear equations
\[
(\bfx^{k+1}-\bfm)=\Gauss (\bfx^k-\bfm)+(\lowO+\diagO)^{-1}\xi^{k+1}\quad k=0,1,\cdots, 
\]
where $\xi^{k+1}$ are i.i.d. samples from $\mathcal{N}(0, \diagO)$. 

Next we consider a random sample $\bfz^0$ from $\mathcal{N}(\bfm, \bfC)$. We can apply the block-Gibbs sampler with $\bfz^0$ as the initial condition, while using the same sequence $\xi^k$. In other words, we generate $\bfz^{k+1}$ by letting
\[
(\bfz^{k+1}-\bfm)=\Gauss (\bfz^k-\bfm)+(\lowO+\diagO)^{-1}\xi^{k+1}\quad k=0,1,\cdots.
\]
Since $\mathcal{N}(\bfm, \bfC)$ is the invariant measure for the Gibbs sampler, marginally $\bfz^k\sim \mathcal{N}(\bfm,\bfC)$. 

Next we look at the difference between the two Gibbs samplers, $\Delta^k=\bfx^k-\bfz^k$. 
Note that 
\[
\Delta^0\sim\mathcal{N}(\bfx^0-\bfm, \bfC),\quad \Delta^{k+1}=\Gauss \Delta^k,\quad k=0,1\cdots. 
\]
Consequentially, $\Delta^k\sim \mathcal{N}(\Gauss^k\Delta_0, \Gauss^k\bfC (\Gauss^T)^k)$, where $\Delta_0:=\bfx^0-\bfm$.  Since
\[
\Delta^0\Delta^{0T}=\bfC^{1/2} (\bfC^{-1/2} \Delta_0) (\bfC^{-1/2} \Delta_0)^T\bfC^{1/2}\preceq  \|\bfC^{-1/2} \Delta_0\|^2\bfC,
\]
we find that 
\begin{align}
\E\|\bfC^{-1/2}\Delta^k\|^2&=\|\bfC^{-1/2}\Gauss^k\Delta_0\|^2+ \text{tr}(\bfC^{-1/2} \Gauss^k \bfC  (\Gauss^T)^k \bfC^{-1/2} )\nonumber\\
&=\text{tr}(\bfC^{-1/2}\Gauss^k\Delta_0\Delta^{T}_0 (\Gauss^T)^k \bfC^{-1/2}+\bfC^{-1/2} \Gauss^k \bfC  (\Gauss^T)^k \bfC^{-1/2} )\nonumber\\
\label{eq:FinalInequality}
&\leq (1+\|\bfC^{-1/2} \Delta_0\|^2)\cdot\text{tr}(\bfC^{-1/2} \Gauss^k \bfC  (\Gauss^T)^k \bfC^{-1/2} ).
\end{align}

If $\bfOmega$ is block-tridiagonal,
then, by Lemma \ref{lem:sparseprecision},
\[
	\bfC^{-1/2} \Gauss^k \bfC  (\Gauss^T)^k \bfC^{-1/2}\leq \beta^k \bfI,\quad
	\beta = \frac{\Cond(1-\Cond^{-1})^2}{1+\Cond(1-\Cond^{-1})^2}.
\]
Combining the above inequality with~(\ref{eq:FinalInequality})
proves Theorem \ref{thm:gsnsimplePrecision}.

If $\bfC$ is block-tridiagonal,
then, by Lemma \ref{lem:sparsecov},
\[
	\bfC^{-1/2} \Gauss^k \bfC  (\Gauss^T)^k \bfC^{-1/2}\leq \beta^k \bfI,\quad
	\beta = \frac{2(1-\Cond^{-1})^2\Cond^4}{1+2(1-\Cond^{-1})^2\Cond^4}.
\]
Combining the above inequality with~(\ref{eq:FinalInequality})
proves Theorem \ref{thm:gsnsimple}.
\end{proof}

\bibliographystyle{siamplain}
\bibliography{references}

\begin{thebibliography}{10}

\bibitem{Agapiou16}
{\sc S.~Agapiou, O.~Papaspiliopoulos, D.~Sanz-Alonso, and A.~Stuart}, {\em
  Importance sampling: computational complexity and intrinsic dimension}, Stat.
  Sci., 32 (2017), pp.~405--431.

\bibitem{ALB16}
{\sc T.~Aulign\'e, B.~M\'en\'etrier, A.~Lorenc, and M.~Buehner}, {\em
  Ensemble--variational integrated localized data assimilation}, Monthly
  Weather Review, 144 (2016), pp.~3677--3696.

\bibitem{Bardsley12}
{\sc J.~Bardsley}, {\em {MCMC}-based image reconstruction with uncertainty
  quantification}, SIAM Journal on Scientific Computing, 34 (2012),
  pp.~A1316--A1332.

\bibitem{Beskos13}
{\sc A.~Beskos, N.~Pillai, G.~Roberts, J.-M. Sanz-Serna, and A.~Stuart}, {\em
  Optimal tuning of the hybrid {M}onte {C}arlo algorithm}, Bernoulli, 19
  (2013), pp.~1501--1534.

\bibitem{Beskos09}
{\sc A.~Beskos, G.~Roberts, and A.~Stuart}, {\em Optimal scalings for local
  {M}etropolis-{H}astings chains on nonproduct targets in high dimensions},
  Ann. Appl. Probab, 19 (2009), pp.~863--898.

\bibitem{BL12}
{\sc P.~Bickel and M.~Lindner}, {\em Approximating the inverse of banded
  matrices by banded matrices with applications to probability and statistics},
  Theory Probab. Appl., 56 (2012), pp.~1--20.

\bibitem{BL08}
{\sc P.~J. Bickel and E.~Levina}, {\em Regularized estimation of large
  covariance matrices}, Ann. Stat., 36 (2008), pp.~199--227.

\bibitem{BuiEtAl13}
{\sc T.~Bui-Thanh, O.~Ghattas, J.~Martin, and G.~Stadler}, {\em A computational
  framework for infinite-dimensional {B}ayesian inverse problems. {P}art {I}:
  {T}he linearized case, with application to global seismic inversion}, SIAM J.
  Sci. Comput., 36 (2013), pp.~A2494--A2523.

\bibitem{CM13}
{\sc A.~Chorin and M.~Morzfeld}, {\em Conditions for successful data
  assimilation}, J. Geophys. Res., 118 (2013), pp.~11,522--11,533.

\bibitem{tWalk}
{\sc J.~Christen and C.~Fox}, {\em A general purpose sampling algorithm for
  continuous distributions (the t-walk)}, Bayesian Analysis, 5 (2010),
  pp.~263--282.

\bibitem{christensen2005scaling}
{\sc O.~F. Christensen, G.~O. Roberts, and J.~S. Rosenthal}, {\em Scaling
  limits for the transient phase of local {M}etropolis--{H}astings algorithms},
  Journal of the Royal Statistical Society: Series B (Statistical Methodology),
  67 (2005), pp.~253--268.

\bibitem{Cotter13}
{\sc S.~Cotter, G.~Roberts, A.~Stuart, and D.~White}, {\em {MCMC} methods for
  functions: modifying old algorithms to make them faster}, Stat. Sci., 28
  (2013), pp.~424--446.

\bibitem{CuiEtAl16b}
{\sc T.~Cui, K.~J.~H. Law, and Y.~M. Marzouk}, {\em Dimension-independent
  likelihood-informed {MCMC}}, J. Comput. Phys., 304 (2016), pp.~109--137.

\bibitem{CuiEtAl14}
{\sc T.~Cui, J.~Martin, Y.~M. Marzouk, A.~Solonen, and A.~Spantini}, {\em
  Likelihood-informed dimension reduction for nonlinear inverse problems},
  Inverse Probl., 29 (2014), p.~114015.

\bibitem{CuiEtAl16}
{\sc T.~Cui, Y.~M. Marzouk, and K.~Willcox}, {\em Scalable posterior
  approximations for large-scale {B}ayesian inverse problems via
  likelihood-informed parameter and state reduction}, J. Comput. Phys., 315
  (2016), pp.~363--387.

\bibitem{Duane87}
{\sc S.~Duane, A.~Kennedy, B.~Pendleton, and D.~Roweth}, {\em Hybrid {M}onte
  {C}arlo}, Phys. Lett. B, 195 (1987), pp.~216--222.

\bibitem{FlathEtAl11}
{\sc H.~Flath, L.~Wilcox, V.~Ak\c{c}elik, J.Hill, B.~van Bloemen~Waander, and
  O.~Ghattas}, {\em Fast algorithms for {B}ayesian uncertainty quantification
  in large-scale linear inverse problems based on low-rank partial {H}essian
  approximations}, SIAM J. Sci. Comput., 33 (2011), pp.~407--432.

\bibitem{Hammer}
{\sc D.~{Foreman-Mackey}, A.~{Conley}, W.~{Meierjurgen Farr}, D.~W. {Hogg},
  D.~{Long}, P.~{Marshall}, A.~{Price-Whelan}, J.~{Sanders}, and J.~{Zuntz}},
  {\em {emcee: The MCMC Hammer}}.
\newblock Astrophysics Source Code Library, Mar. 2013,
  \url{https://arxiv.org/abs/1303.002}.

\bibitem{FHLMW16}
{\sc M.~Fowler, M.~Howard, A.~Luttman, S.~Mitchell, and T.~Webb}, {\em A
  stochastic approach to quantifying the blur with uncertainty estimation for
  high-energy {X}-ray imaging systems}, Inverse Problems in Science and
  Engineering, 34 (2016), pp.~353--371.

\bibitem{Fox}
{\sc C.~Fox and A.~Parker}, {\em Accelerated {G}ibbs sampling of normal
  distributions using matrix splittings and polynomials}, Bernoulli, 23 (2017),
  pp.~3711--3743.

\bibitem{GG01}
{\sc A.~Galli and G.~H.}, {\em Rate of convergence of the {G}ibbs sampler in
  the {G}aussian case}, Math. Geol., 33 (2001), pp.~653--677.

\bibitem{GC99}
{\sc G.~Gaspari and S.~Cohn}, {\em Construction of correlation functions in two
  and three dimensions}, Q. J. R. Meteorol. Soc., 125 (1999), pp.~723--757.

\bibitem{MCMCInPractice}
{\sc W.~Gilks, S.~Richardson, and D.~Spiegelhalter}, {\em Introducing {M}arkov
  chain {M}onte {C}arlo}, in Markov Chain Monte Carlo in Practice, W.~Gilks,
  S.~Richardson, and D.~Spiegelhalter, eds., Srpinger-Science+Business Media,
  1996, ch.~1, pp.~1--20.

\bibitem{GC11}
{\sc M.~Girolami and B.~Calderhead}, {\em Riemann manifold {L}angevin and
  {H}amiltonian {M}onte {C}arlo methods}, J. R. Stat. Soc. B, 73 (2011),
  pp.~123--214.

\bibitem{GS89}
{\sc J.~Goodman and A.~D. Sokal}, {\em Multigrid {M}onte {C}arlo method.
  conceptual foundation}, Phys. Rev. D, 40 (1989), pp.~2035--2071.

\bibitem{GoodmanWeare10}
{\sc J.~Goodman and J.~Weare}, {\em Ensemble samplers with affine invariance},
  Comm. App. Math. Com. Sc., 5 (2010), pp.~65--80.

\bibitem{hairer2014spectral}
{\sc M.~Hairer, A.~M. Stuart, and S.~J. Vollmer}, {\em Spectral gaps for a
  {M}etropolis--{H}astings algorithm in infinite dimensions}, The Annals of
  Applied Probability, 24 (2014), pp.~2455--2490.

\bibitem{Hamilletal2001}
{\sc T.~M. Hamill, J.~Whitaker, and C.~Snyder}, {\em Distance-dependent
  filtering of background covariance estimates in an ensemble {K}alman filter},
  Mon. Weather Rev., 129 (2001), pp.~2776--2790.

\bibitem{TLM}
{\sc D.~Hodyss, C.~Bishop, and M.~Morzfeld}, {\em To what extent is your data
  assimilation scheme designed to find the posterior mean, the posterior mode
  or something else?}, Tellus A: Dynamic Meteorology and Oceanography, 68
  (2016), p.~30625.

\bibitem{HoutMitch2001}
{\sc P.~Houtekamer and H.~Mitchell}, {\em A sequential ensemble {K}alman filter
  for atmospheric data assimilation}, Mon. Weather Rev., 129 (2001),
  pp.~123--136.

\bibitem{Houtekamer2005}
{\sc P.~L. Houtekamer, H.~L. Mitchell, G.~Pellerin, M.~Buehner, M.~Charron,
  L.~Spacek, and B.~Hansen}, {\em Atmospheric data assimilation with an
  ensemble {K}alman filter: Results with real observations}, Mon. Weather Rev.,
  133 (2005), pp.~604--620.

\bibitem{IR12}
{\sc D.~R. Insua and F.~Ruggeri}, {\em Robust Bayesian Analysis}, vol.~152,
  Springer Science \& Business Media, 2012.

\bibitem{Kalos86}
{\sc M.~Kalos and P.~Whitlock}, {\em Monte Carlo methods}, vol.~1, John Wiley
  \& Sons, 1~ed., 1986.

\bibitem{Lee16}
{\sc Y.~Lee and A.~Majda}, {\em State estimation and prediction using clustered
  particle filters}, Proc. Natl. Acad. Sci. U.S.A. Proc. Natl. Acad. Sci.
  U.S.A., 113 (2016), pp.~14609--14614.

\bibitem{LeiBickel11}
{\sc J.~Lei and P.~Bickel}, {\em A moment matching ensemble filter for
  nonlinear non-{G}aussian data assimilation}, Mon. Weather Rev., 139 (2011),
  pp.~3964 -- 3973.

\bibitem{lindgren2011explicit}
{\sc F.~Lindgren, H.~Rue, and J.~Lindstr{\"o}m}, {\em An explicit link between
  {G}aussian fields and {G}aussian {M}arkov random fields: the stochastic
  partial differential equation approach}, Journal of the Royal Statistical
  Society: Series B (Statistical Methodology), 73 (2011), pp.~423--498.

\bibitem{L96}
{\sc E.~N. Lorenz}, {\em Predictability: A problem partly solved}, Proc. ECMWF
  Seminar on predictability, 1 (1996), pp.~1--18.

\bibitem{MacKay97}
{\sc D.~MacKay}, {\em Introduction to {M}onte {C}arlo methods}, in Learning in
  Graphical Models, M.~I. Jordan, ed., NATO Science Series, Kluwer Academic
  Press, 1998, pp.~175--204.

\bibitem{MartinEtAl12}
{\sc J.~Martin, L.~C. Wilcox, C.~Burstedde, and O.~Ghattas}, {\em A stochastic
  {N}ewton {MCMC} method for large-scale statistical inverse problems with
  application to seismic inversion}, SIAM J. Sci. Comput., 34 (2012),
  pp.~A1460--A1487.

\bibitem{MHS17}
{\sc M.~Morzfeld, D.~Hodyss, and C.~Snyder}, {\em What the collapse of the
  ensemble {K}alman filter tells us about particle filters}, Tellus A, 69
  (2017), p.~1283809.

\bibitem{Neal11}
{\sc R.~Neal}, {\em {MCMC} using {H}amiltonian dynamics}, in Handbook of Markov
  Chain Monte Carlo, S.~Brooks, A.~Gelman, G.~Jones, and X.-L. Meng, eds.,
  Chapman \& Hall, Oxford, 2011, ch.~5.

\bibitem{FN16}
{\sc R.~Norton and C.~Fox}, {\em Fast sampling in a linear-{G}aussian inverse
  problem}, SIAM\slash ASA Journal on Uncertainty Quantification, 4 (2016),
  pp.~1191--1218.

\bibitem{Owen}
{\sc A.~Owen}, {\em Monte Carlo Theory, Methods and Examples}, 2013.

\bibitem{Penny16}
{\sc S.~Penny and T.~Miyoshi}, {\em A local particle filter for high
  dimensional geophysical systems}, Nonlinear Process Geophys., 2 (2015),
  pp.~1631--1658.

\bibitem{PetraEtAl14}
{\sc N.~Petra, J.~Martin, G.~Stadler, and O.~Ghattas}, {\em A computational
  framework for infinite-dimensional {B}ayesian inverse problems. {P}art {II}:
  {S}tochastic {N}ewton {MCMC} with application to ice sheet flow inverse
  problems}, SIAM J. Sci. Comput., 36 (2013), pp.~A1525--1555.

\bibitem{Poterjoy15}
{\sc J.~Poterjoy}, {\em A localized particle filter for high-dimensional
  nonlinear systems}, Mon. Weather Rev., 144 (2015), pp.~59 -- 76.

\bibitem{Poterjoy16}
{\sc J.~Poterjoy and J.~Anderson}, {\em Efficient assimilation of simulated
  observations in a high-dimensional geophysical system using a localized
  particle filter}, Mon. Weather Rev., 144 (2016), pp.~2007--2020.

\bibitem{Poterjoy17}
{\sc J.~Poterjoy, R.~Sobash, and J.~Anderson}, {\em Convective-scale data
  assimilation for the weather research and forecasting model using the local
  particle filter}, Mon. Weather Rev., 145 (2017), pp.~1897--1918.

\bibitem{Rebeschini15}
{\sc P.~Rebeschini and R.~van Handel}, {\em Can local particle filters beat the
  curse of dimensionality?}, Ann. Appl. Probab., 25 (2015), pp.~2809 -- 2866.

\bibitem{Reich13}
{\sc S.~Reich}, {\em A nonparametric ensemble transform method for {B}ayesian
  inference}, Mon. Weather Rev., 35 (2013), pp.~1337--1367.

\bibitem{Roberts97}
{\sc G.~Roberts, A.~Gelman, and W.~Gilks}, {\em Weak convergence and optimal
  scaling of random walk {M}etropolis algorithms}, Ann. Appl. Probab., 7
  (1997), pp.~110--120.

\bibitem{Roberts98}
{\sc G.~Roberts and J.~Rosenthal}, {\em Optimal scaling of discrete
  approximations to {L}angevin diffusions}, J. R. Stat. Soc. Ser: B Stat.
  Methodol., 60 (1998), pp.~255--268.

\bibitem{RS97}
{\sc G.~O. Roberts and S.~S. K.}, {\em Updating schemes, correlation structure,
  blocking and parameterization for the {G}ibbs sampler}, J. R. Stat. Soc. Ser:
  B Stat. Methodol., 59 (1997), pp.~291--317.

\bibitem{rue2005gaussian}
{\sc H.~Rue and L.~Held}, {\em Gaussian {M}arkov random fields: theory and
  applications}, CRC press, 2005.

\bibitem{sakov2012}
{\sc P.~Sakov, D.~S. Oliver, and L.~Bertino}, {\em An iterative {EnKF} for
  strongly nonlinear systems}, Monthly Weather Review, 140 (2012),
  pp.~1988--2004.

\bibitem{spantini2017}
{\sc A.~Spantini, D.~Bigoni, and Y.~M. Marzouk}, {\em Inference via
  low-dimensional couplings}.
\newblock arXiv:1703.06131, 2017.

\bibitem{SpantiniEtAl15}
{\sc A.~Spantini, A.~Solonen, T.~Cui, J.~Martin, L.~Tenorio, and Y.~M.
  Marzouk}, {\em Optimal low-rank approximations of {B}ayesian linear inverse
  problems}, SIAM J. Sci. Comput., 37 (2015), pp.~A2451--A2487.

\bibitem{Stuart10}
{\sc A.~Stuart}, {\em Inverse problems: a {B}ayesian perspective}, Acta Numer.,
  19 (2010), pp.~451--559.

\bibitem{TalagrandCourtier}
{\sc O.~Talagrand and P.~Courtier}, {\em Variational assimilation of
  meteorological observations with the adjoint vorticity equation. {I}:
  Theory}, Quarterly Journal of the Royal Meteorological Society, 113 (1987),
  pp.~1311--1328.

\bibitem{Todter15}
{\sc J.~T\"odter and B.~Ahrens}, {\em A second-order exact ensemble square root
  filter for nonlinear data assimilation}, Mon. Weather Rev., 143 (2015),
  pp.~1337--1367.

\bibitem{Tong17}
{\sc X.~T. Tong}, {\em Performance analysis of local ensemble {K}alman filter}.
\newblock arXiv:1705.10598. Accepted by J. Nonlinear Science.

\bibitem{PJvL15}
{\sc P.~van Leeuwen, Y.~Cheng, and S.~Reich}, {\em Nonlinear Data
  Assimilation}, Springer, 2015.

\bibitem{Wolff04}
{\sc U.~Wolff}, {\em Monte {C}arlo errors with less errors}, Comput. Phys.
  Commun., 156 (2004), pp.~143--153.

\end{thebibliography}
\end{document}